\providecommand{\tabularnewline}{\\}
\theoremstyle{definition}
\newtheorem{defn}{\protect\definitionname}
\theoremstyle{definition}
 \newtheorem{example}{\protect\examplename}
\theoremstyle{plain}
\newtheorem{thm}{\protect\theoremname}
\theoremstyle{plain}
\newtheorem{lem}{\protect\lemmaname}
\theoremstyle{plain}
\newtheorem{prop}{\protect\propositionname}
\theoremstyle{plain}
\newtheorem{assumption}{\protect\assumptionname}
\theoremstyle{remark}
\newtheorem{rem}{\protect\remarkname}
\theoremstyle{plain}
\newtheorem{cor}{\protect\corollaryname}
\theoremstyle{remark}
\newtheorem*{claim*}{\protect\claimname}
\providecommand{\assumptionname}{Assumption}
\providecommand{\claimname}{Claim}
\providecommand{\corollaryname}{Corollary}
\providecommand{\definitionname}{Definition}
\providecommand{\examplename}{Example}
\providecommand{\lemmaname}{Lemma}
\providecommand{\propositionname}{Proposition}
\providecommand{\remarkname}{Remark}
\providecommand{\theoremname}{Theorem}
\begin{document}
\textwidth=6.1in
\textheight=8.6in
\setcounter{secnumdepth}{2}
\pagenumbering{gobble}
\thispagestyle{empty} 
\title{Matching Multidimensional Types: Theory and Application}
\author{Veli Safak\thanks{Carnegie Mellon University Qatar, e-mail: vsafak@andrew.cmu.edu}\thanks{I would like to express my gratitude to my advisor Axel Z. Anderson
for his guidance, strong and continuous support, and always having
faith in my capability. I would also like to thank James W. Albrecht
and Luca Anderlini for their constructive comments. I would further
like to thank Laurent Bouton, Dan Cao, Chris Chambers, Roger Lagunoff,
Arik M. Levinson, Yusufcan Masatlioglu, Franco Peracchi, and John
Rust for helpful discussions. }}
\date{6/10/2020}
\maketitle
\begin{abstract}
\noindent Becker (1973) presents a bilateral matching model in which
scalar types describe agents. For this framework, he establishes the
conditions under which positive sorting between agents' attributes
is the unique market outcome. Becker's celebrated sorting result has
been applied to address many economic questions. However, recent empirical
studies in the fields of health, household, and labor economics suggest
that agents have multiple outcome-relevant attributes. In this paper,
I study a matching model with multidimensional types. I offer multidimensional
generalizations of concordance and supermodularity to construct three
multidimensional sorting patterns and two classes of multidimensional
complementarities. For each of these sorting patterns, I identify
the sufficient conditions which guarantee its optimality. In practice,
we observe sorting patterns between observed attributes that are aggregated
over unobserved characteristics. To reconcile theory with practice,
I establish the link between production complementarities and the
aggregated sorting patterns. Finally, I examine the relationship between
agents' health status and their spouses' education levels among U.S.
households within the framework for multidimensional matching markets.
Preliminary analysis reveals a weak positive association between agents'
health status and their spouses' education levels. This weak positive
association is estimated to be a product of three factors: (a) an
attraction between better-educated individuals, (b) an attraction
between healthier individuals, and (c) a weak positive association
between agents' health status and their education levels. The attraction
channel suggests that the insurance risk associated with a two-person
family plan is higher than the aggregate risk associated with two
individual policies.
\end{abstract}
\pagenumbering{arabic} 
\setcounter{page}{1}

\section*{Introduction\label{sec:Introduction}}

Becker (1973) proposes a general framework for two-sided frictionless
matching models in which scalar types represent the agents on each
side of the market, i.e. each agent has only one outcome-relevant
attribute. A match between two agents (one from each side) generates
a type-dependent matching output. A social planner\footnote{A decentralized version of this model with perfectly transferable
utilities can easily be constructed by using the dual version of the
planner's problem.} maximizes the aggregate output by matching the agents in pairs. There
are two essential components of Becker's theory: complementarity and
sorting. The matching output exhibits \emph{strictly positive complementarity}
when the marginal product of an agent strictly increases in his/her
partner's type. Similarly, the matching output exhibits \emph{strictly
negative complementarity} when the marginal product of an agent strictly
decreases in his/her partner's type. 

Becker (1973) shows that if the matching output exhibits strictly
positive complementarity, then the unique solution to the planner's
problem is \emph{positive sorting}, i.e. the highest types are matched
together, then the next highest types, etc. Likewise, negative sorting
is the unique solution when the matching output exhibits strictly
negative complementarity. Economists have applied Becker's assortative
matching results\footnote{For a recent literature review on the matching markets; I refer the
readers to Chade et al. (2017).} to address several questions. For example, Kremer (1993) sheds light
on the positive correlation between wages of the workers within a
firm. Gabaix and Landier (2008) explain the rise in the CEOs' salaries
and its connection to the increase in firms' sizes over time. 

In many applications, the agents may have multiple outcome-relevant
attributes. For instance, education and race in the dating/marriage
market, workers' social and cognitive skills in the labor market,
and doctors' listening skills for diagnosis and fostering the doctor-patient
relationship in the healthcare market are some well-documented outcome-relevant
attributes in literature. In the next section, I survey additional
recent studies that support the presence of multiple outcome-relevant
attributes. If a single index can capture all outcome-relevant information,
then a unidimensional model may be suitable. However, the single index
assumption is implausible in many applications. For example, Chiappori
et al. (2012) analyze the U.S. marriage market by using a multidimensional
matching model with an index restriction, i.e. two agents with different
attributes are identical if they have the same index value calculated
by an exogenous index function. Fletcher and Padron (2015) provide
empirical evidence against the implications of Chiappori et al.'s
(2012) single index assumption. 

The empirical support in health, household and labor economics for
multidimensional types highlight the practical importance of the multidimensional
matching theory. In this paper, I present a matching model with multidimensional
types and examine the link between output complementarities and sorting
patterns. The only difference between Becker's framework and the framework
presented in this paper is that I allow the agents to have multiple
outcome-relevant attributes. Although the proposed model is general,
to ease the exposition throughout the introduction, I consider a particular
labor market model in which firms and workers have only two scalar
productive skills: cognitive and social. In this context, a matching
distribution satisfies \emph{global positive sorting }if and only
if it exhibits positive sorting (a) between firms' and their workers'
cognitive skills, and (b) between firms' and their workers' social
skills. Similarly, a matching distribution satisfies \emph{global
negative sorting }if and only if it exhibits negative sorting (a)
between firms' and their workers' cognitive skills, and (b) between
firms' and their workers' social skills. 

A naive application of Becker's sorting result implies positive sorting
between firms' and their workers' cognitive skills when the marginal
product of each firm's cognitive skill strictly increases in its worker's
cognitive skill. Similarly, positive sorting between firms' and their
workers' social skills is obtained when the marginal product of each
firm's social skill strictly increases in its worker's social skill
according to Becker's sorting result. In a multidimensional matching
market, one may observe simultaneous positive complementarities between
cognitive skills and social skills. However, simultaneous positive
sorting between cognitive skills and social skills may not be feasible. 

Consider two firms, $x=\left(x_{c},x_{s}\right)$ and $x^{\prime}=\left(x_{c}^{\prime},x_{s}^{\prime}\right)$,
such that $x=\left(10,10\right)$ and $x^{\prime}=\left(20,20\right)$.
Furthermore, suppose that there are two workers, $y=\left(y_{c},y_{s}\right)$
and $y^{\prime}=\left(y_{c}^{\prime},y_{s}^{\prime}\right)$, such
that $y=\left(10,20\right)$ and $y^{\prime}=\left(20,10\right)$.
Notice that matching $x$ with $y$ and $x^{\prime}$ with $y^{\prime}$
satisfies positive sorting between cognitive skills and violates positive
sorting between social skills. Similarly, matching $x$ with $y^{\prime}$
and $x^{\prime}$ with $y$ satisfies positive sorting between social
skills and violates positive sorting between cognitive skills. In
this paper, I show that, conditioning on the existence, Becker's sorting
results apply: when the output function exhibits strictly positive
complementarities between cognitive skills and between manual skills,
if there exists a matching scheme which satisfies global positive
sorting, then (a) every optimal matching scheme satisfies global positive
sorting, and (b) every matching scheme satisfying global positive
sorting solves the planner's problem. I establish the optimality of
global sorting for a general global sorting class in Proposition \ref{prop:1}.

Since global positive sorting may not be feasible, I examine an alternative
sorting pattern inspired by Chiappori et al. (2017). Chiappori et
al. (2017) study a marriage model in which one continuous variable
(socioeconomic status) and one binary variable (smoking habit) represent
the agents on each side of the market. They categorize couples into
two main groups. In the first group, both men and women are non-smokers.
In the second group, at least one of the spouses smokes. They assume
that the matching output of a couple is the multiplication of spouses'
socioeconomic status. If there is a smoker in the household, then
the output is scaled down by a constant. Under this complementarity
structure, Chiappori et al. (2017) predict positive sorting between
agents' and their spouses' socioeconomic status within each group.
Notice that one can easily apply the idea of splitting the sample
into different groups and studying the sorting patterns for each group
in a more general setting.

For the previous labor market example, a matching satisfies \emph{within-group
positive sorting between cognitive skills} if it exhibits positive
sorting between firms' and their workers' cognitive skills for all
social skill pairs of firms and workers $\left(x_{s},y_{s}\right)$.
Consider four firms and four workers: $\left\{ \left(10,10\right),\left(10,20\right),\left(20,10\right),\left(20,20\right)\right\} $.
Here, matching the $\left(10,\boldsymbol{10}\right)$ firm with the
$\left(10,\boldsymbol{10}\right)$ worker and the $\left(20,\boldsymbol{10}\right)$
firm with the $\left(20,\boldsymbol{10}\right)$ worker is consistent
with within-group positive sorting between cognitive skills for the
$\left(\boldsymbol{10},\boldsymbol{10}\right)$ social skill combination.
Within-group sorting solves the feasibility problem: for arbitrary
distributions of agents, there exists a matching scheme which satisfies
(a) within-group positive sorting between cognitive skills and (b)
within-group positive sorting between social skills. Furthermore,
I show that, when the matching output exhibits strictly positive complementarities
$\left(\clubsuit\right)$ between cognitive skills and $\left(\spadesuit\right)$
between social skills, every optimal matching distribution satisfies
(a) within-group positive sorting between cognitive skills and (b)
within-group positive sorting between social skills. I establish the
optimality of within-group sorting for a general within-group sorting
class in Proposition \ref{prop:1}.

Within-group sorting has two major drawbacks as a sorting concept.
First of all, there may be multiple ways to match agents without violating
within-group sorting. 
\noindent \begin{center}
{\scriptsize{}}%
\begin{tabular}{c|cccccccc|ccccc}
\multirow{2}{*}{{\scriptsize{}Firms}} & \multirow{2}{*}{{\scriptsize{}$\left(10,10\right)$}} & \multirow{2}{*}{{\scriptsize{}$\left(10,20\right)$}} & \multirow{2}{*}{{\scriptsize{}$\left(20,10\right)$}} & \multirow{2}{*}{{\scriptsize{}$\left(20,20\right)$}} &  &  &  & \multirow{2}{*}{{\scriptsize{}Firms}} & \multirow{2}{*}{{\scriptsize{}$\left(10,10\right)$}} & \multirow{2}{*}{{\scriptsize{}$\left(10,20\right)$}} & \multirow{2}{*}{{\scriptsize{}$\left(20,10\right)$}} & \multirow{2}{*}{{\scriptsize{}$\left(20,20\right)$}} & \tabularnewline
 &  &  &  &  & \multirow{2}{*}{} &  &  &  &  &  &  &  & \tabularnewline
\cline{1-5} \cline{2-5} \cline{3-5} \cline{4-5} \cline{5-5} \cline{9-13} \cline{10-13} \cline{11-13} \cline{12-13} \cline{13-13} 
\multirow{2}{*}{{\scriptsize{}Matched Worker}} & \multirow{2}{*}{{\scriptsize{}$\left(20,20\right)$}} & \multirow{2}{*}{{\scriptsize{}$\left(10,20\right)$}} & \multirow{2}{*}{{\scriptsize{}$\left(20,10\right)$}} & \multirow{2}{*}{{\scriptsize{}$\left(10,10\right)$}} &  &  &  & \multirow{2}{*}{{\scriptsize{}Matched Worker}} & \multirow{2}{*}{{\scriptsize{}$\left(10,10\right)$}} & \multirow{2}{*}{{\scriptsize{}$\left(10,20\right)$}} & \multirow{2}{*}{{\scriptsize{}$\left(20,10\right)$}} & \multirow{2}{*}{{\scriptsize{}$\left(20,20\right)$}} & \tabularnewline
 &  &  &  &  &  &  &  &  &  &  &  &  & \tabularnewline
\cline{1-5} \cline{2-5} \cline{3-5} \cline{4-5} \cline{5-5} \cline{9-13} \cline{10-13} \cline{11-13} \cline{12-13} \cline{13-13} 
\multicolumn{5}{c}{Matching scheme-1} &  &  &  & \multicolumn{5}{c}{Matching scheme-2} & \tabularnewline
\end{tabular}{\scriptsize\par}
\par\end{center}

\noindent \ 

\noindent Note that for each social skill combination of firms and
workers, there is only one firm-worker couple under these two matching
schemes. The same is also true for each cognitive skill combination
of firms and workers. Consequently, these matching schemes satisfy
within-group positive sorting between cognitive skills and social
skills. 

Secondly, a matching scheme may satisfy within-group positive sorting
and cannot be optimal for any matching output that exhibits strictly
positive complementarities between cognitive skills and between social
skills. For matching scheme-1, a swap between the first and the last
firm-worker couples, i.e. $\left(\left(10,10\right),\left(20,20\right)\right)$
and $\left(\left(20,20\right),\left(10,10\right)\right)$, strictly
increases the aggregate output for any matching output which exhibits
strictly positive complementarities between cognitive skills and between
social skills. Therefore, matching scheme-1 can never be an optimal
matching scheme when the matching output exhibits strictly positive
complementarities between cognitive skills and between social skills. 

To obtain a finer characterization of optimal matching schemes, I
consider another extension of Becker's sorting concepts and propose
a weak sorting notion: a matching scheme satisfies \emph{weak positive
sorting} if there does not exist a pair of matched couples that (a)
is consistent with global negative sorting, and (b) violates global
positive sorting. Notice that the first and the last firm-worker couples
in matching scheme-1 violate global positive sorting. Indeed, the
set of matching schemes that satisfy weak positive sorting is a subset
of the set of matching schemes that satisfy within-group positive
sorting. More importantly, I show that the set of optimal matching
schemes for any matching output that exhibits strictly positive complementarities
between cognitive skills and between social skills is a subset of
the set of matching schemes that satisfies weak positive sorting.
I present a general version of this result in Proposition \ref{prop:1}.

Similar to within-group sorting, a matching scheme may satisfy weak
positive sorting and can never be optimal when the matching output
exhibits strictly positive complementarities between cognitive skills
and between social skills. Note that matching scheme-3 below satisfies
weak positive sorting. At the same time, the following swap sequence
strictly increases the aggregate output for any matching output that
exhibits strictly positive complementarities between cognitive skills
and between social skills:

\noindent Swap-1: Between the first and the second couples

\noindent Swap-2: Between the third and the fourth couples

\noindent Swap-3: Between the first and the last couples matched after
swap-1 and swap-2 
\noindent \begin{center}
\begin{table}[H]
\begin{minipage}[t]{0.4\columnwidth}%
\noindent \begin{center}
{\scriptsize{}}%
\begin{tabular}{c|cccccccc|ccccc}
\multirow{2}{*}{{\scriptsize{}Firms}} & \multirow{2}{*}{{\scriptsize{}$\left(10,10\right)$}} & \multirow{2}{*}{{\scriptsize{}$\left(10,20\right)$}} & \multirow{2}{*}{{\scriptsize{}$\left(20,10\right)$}} & \multirow{2}{*}{{\scriptsize{}$\left(20,20\right)$}} &  &  &  & \multirow{2}{*}{{\scriptsize{}Firms}} & \multirow{2}{*}{{\scriptsize{}$\left(10,10\right)$}} & \multirow{2}{*}{{\scriptsize{}$\left(10,20\right)$}} & \multirow{2}{*}{{\scriptsize{}$\left(20,10\right)$}} & \multirow{2}{*}{{\scriptsize{}$\left(20,20\right)$}} & \tabularnewline
 &  &  &  &  & \multirow{2}{*}{} &  &  &  &  &  &  &  & \tabularnewline
\cline{1-5} \cline{2-5} \cline{3-5} \cline{4-5} \cline{5-5} \cline{9-13} \cline{10-13} \cline{11-13} \cline{12-13} \cline{13-13} 
\multirow{2}{*}{{\scriptsize{}Matched Worker}} & \multirow{2}{*}{{\scriptsize{}$\left(10,20\right)$}} & \multirow{2}{*}{{\scriptsize{}$\left(20,20\right)$}} & \multirow{2}{*}{{\scriptsize{}$\left(10,10\right)$}} & \multirow{2}{*}{{\scriptsize{}$\left(20,10\right)$}} &  &  &  & \multirow{2}{*}{{\scriptsize{}Matched Worker}} & \multirow{2}{*}{{\scriptsize{}$\left(\boldsymbol{20,20}\right)$}} & \multirow{2}{*}{{\scriptsize{}$\left(\boldsymbol{10,20}\right)$}} & \multirow{2}{*}{{\scriptsize{}$\left(10,10\right)$}} & \multirow{2}{*}{{\scriptsize{}$\left(20,10\right)$}} & \tabularnewline
 &  &  &  &  &  &  &  &  &  &  &  &  & \tabularnewline
\cline{1-5} \cline{2-5} \cline{3-5} \cline{4-5} \cline{5-5} \cline{9-13} \cline{10-13} \cline{11-13} \cline{12-13} \cline{13-13} 
\multicolumn{6}{c}{} &  &  & \multicolumn{6}{c}{}\tabularnewline
\multicolumn{5}{c}{Matching scheme-3} &  &  &  & \multicolumn{5}{c}{Swap-1} & \tabularnewline
\multicolumn{1}{c}{} &  &  &  &  &  &  &  & \multicolumn{1}{c}{} &  &  &  &  & \tabularnewline
\multirow{2}{*}{{\scriptsize{}Firms}} & \multirow{2}{*}{{\scriptsize{}$\left(10,10\right)$}} & \multirow{2}{*}{{\scriptsize{}$\left(10,20\right)$}} & \multirow{2}{*}{{\scriptsize{}$\left(20,10\right)$}} & \multirow{2}{*}{{\scriptsize{}$\left(20,20\right)$}} &  &  &  & \multirow{2}{*}{{\scriptsize{}Firms}} & \multirow{2}{*}{{\scriptsize{}$\left(10,10\right)$}} & \multirow{2}{*}{{\scriptsize{}$\left(10,20\right)$}} & \multirow{2}{*}{{\scriptsize{}$\left(20,10\right)$}} & \multirow{2}{*}{{\scriptsize{}$\left(20,20\right)$}} & \tabularnewline
 &  &  &  &  &  &  &  &  &  &  &  &  & \tabularnewline
\cline{1-5} \cline{2-5} \cline{3-5} \cline{4-5} \cline{5-5} \cline{9-13} \cline{10-13} \cline{11-13} \cline{12-13} \cline{13-13} 
\multirow{2}{*}{{\scriptsize{}Matched Worker}} & \multirow{2}{*}{{\scriptsize{}$\left(20,20\right)$}} & \multirow{2}{*}{{\scriptsize{}$\left(10,20\right)$}} & \multirow{2}{*}{{\scriptsize{}$\left(\boldsymbol{20,10}\right)$}} & \multirow{2}{*}{{\scriptsize{}$\left(\boldsymbol{10,10}\right)$}} &  &  &  & \multirow{2}{*}{{\scriptsize{}Matched Worker}} & \multirow{2}{*}{{\scriptsize{}$\left(\boldsymbol{10,10}\right)$}} & \multirow{2}{*}{{\scriptsize{}$\left(10,20\right)$}} & \multirow{2}{*}{{\scriptsize{}$\left(20,10\right)$}} & \multirow{2}{*}{{\scriptsize{}$\left(\boldsymbol{20,20}\right)$}} & \tabularnewline
 &  &  &  &  &  &  &  &  &  &  &  &  & \tabularnewline
\cline{1-5} \cline{2-5} \cline{3-5} \cline{4-5} \cline{5-5} \cline{9-13} \cline{10-13} \cline{11-13} \cline{12-13} \cline{13-13} 
\multicolumn{5}{c}{} &  &  &  & \multicolumn{5}{c}{} & \tabularnewline
\multicolumn{5}{c}{Swap-2} &  &  &  & \multicolumn{5}{c}{Swap-3} & \tabularnewline
\end{tabular}
\par\end{center}%
\end{minipage}
\end{table}
\par\end{center}

\noindent These examples demonstrate that global, within-group and
weak sorting concepts are not adequate to obtain a fine characterization
of the set of optimal matching distributions. In Section \ref{sec:1},
I lay out the statistical logic behind Becker's sorting result and
restate it by using the upper-set properties of the supermodular order.
By devising multidimensional generalizations of supermodularity, supermodular
order, and concordance, I characterize Pareto improving swaps for
a large set of complementarity structures that allow for negative
complementarities between some skills along with positive complementarities
between some other skills. 

Although these results can be applied to wide-ranging matching markets,
they lack predictive power: the set of optimal matching distributions
may not be a singleton. Lindenlaub (2017) offers a multidimensional
sorting theory with higher predictive power. She adopts three key
assumptions: (a) the agents on each side of the market have the same
number of outcome-relevant attributes, (b) each attribute complements
one and only one attribute on the other side of the market, and (c)
the matching output exhibits either strictly positive complementarity
in all attributes or strictly negative complementarity in all attributes.
More specifically, she considers matching output functions that have
the following form: $Q\left(x,y\right)=Q_{c}\left(x_{c},y_{c}\right)+Q_{s}\left(x_{s},y_{s}\right)$
where the cross-partial derivative of $Q_{i}$ is either strictly
positive for all $i\in\left\{ c,s\right\} $ or strictly negative
for all $i\in\left\{ c,s\right\} $. 

For this framework, she shows that the optimal matching is unique.
In addition, she proves that the optimal matching is a smooth function
under additional restrictions: the agents are distributed with infinitely
many times continuously differentiable probability distribution functions;
$Q_{i}$ is four times continuously differentiable; and $Q_{x_{i},y_{i}}$
is supermodular and log-supermodular. Under these additional restrictions,
she establishes that $\partial y_{i}^{*}/\partial x_{i}>0$ for all
$i\in\left\{ c,s\right\} $, where $y^{*}\coloneqq m^{*}\left(x\right)$
denotes the matched worker of type-\emph{x} firm under optimal matching
function $m^{*}:\mathbb{R}^{2}\rightarrow\mathbb{R}^{2}$. 

However, these assumptions on the output function may not be suitable
for many applications. As an example, consider the following matching
output: $Q\left(x,y\right)=\alpha x_{c}y_{c}+\beta x_{s}y_{s}+\gamma x_{s}y_{c}$.
The results presented by Lindenlaub (2017) apply only if $\alpha\beta>0$
and $\gamma=0$. In this case, once one assumes strictly positive
complementarity between cognitive skills, one must also assume strictly
positive complementarity between social skills, and vice versa ($\alpha\beta>0$).
Also, each attribute of one side can interact with one and only one
attribute of the other side of the market ($\gamma=0$). Dupuy and
Galichon (2014, Table 3) provide empirical evidence suggesting that
(a) positive and negative complementarities between different attributes
exist and (b) some attributes simultaneously complement multiple attributes
in the Dutch marriage market. 

One other drawback of the sorting theorems presented in Proposition
\ref{prop:1} is that they predict extreme sorting patterns as optimal.
However, observing extreme sorting in practice is improbable. A potential
reason\footnote{Search and informational frictions may also cause deviations from
extreme sorting.} is that we do not observe every outcome-relevant attribute. For the
previous example, assume that (a) firms and workers observe each other's
cognitive and social skills, and (b) econometricians observe firms'
and workers' cognitive skills but do not observe their social skills.
In this case, econometricians observe a sorting pattern between firms'
and their workers' cognitive skills aggregated over social skills.
The aggregated matching pattern between cognitive skills may exhibit
mismatch, i.e. deviations from extreme sorting. The framework presented
in Section \ref{sec:1} allows me to explore this source of mismatch.
Consider four types of firms (\emph{x}) and four types of workers
(\emph{y}): $\left\{ \left(10,10\right),\left(10,20\right),\left(20,10\right),\left(20,20\right)\right\} $.
Assume that for skill vectors $\left(10,10\right)$ and $\left(20,20\right)$,
there are four firms. In addition, suppose that there is one firm
for each skill vector $\left(10,20\right)$ and $\left(20,10\right)$.
Similarly, let there be four workers for each skill vector $\left(10,20\right)$
and $\left(20,10\right)$; and one worker for each skill vector $\left(10,10\right)$
and $\left(20,20\right)$. The unique optimal matching scheme between
these firms and workers is given below for the following output function:
$Q\left(x,y\right)=x_{c}y_{c}+2x_{s}y_{s}$.
\noindent \begin{center}
\begin{tabular}{c|c|c|c|c|c|}
\multicolumn{1}{c}{} &  & \multicolumn{4}{c|}{Workers}\tabularnewline
\cline{3-6} \cline{4-6} \cline{5-6} \cline{6-6} 
\multicolumn{1}{c}{} &  & $\left(10,10\right)$ & $\left(10,20\right)$ & $\left(20,10\right)$ & $\left(20,20\right)$\tabularnewline
\hline 
\multirow{4}{*}{Firms} & $\left(10,10\right)$ & 1 & 0 & 3 & 0\tabularnewline
\cline{2-6} \cline{3-6} \cline{4-6} \cline{5-6} \cline{6-6} 
 & $\left(10,20\right)$ & 0 & 1 & 0 & 0\tabularnewline
\cline{2-6} \cline{3-6} \cline{4-6} \cline{5-6} \cline{6-6} 
 & $\left(20,10\right)$ & 0 & 0 & 1 & 0\tabularnewline
\cline{2-6} \cline{3-6} \cline{4-6} \cline{5-6} \cline{6-6} 
 & $\left(20,20\right)$ & 0 & 3 & 0 & 1\tabularnewline
\hline 
\multicolumn{6}{c}{Optimal matching counts between firms and workers}\tabularnewline
\multicolumn{6}{c}{}\tabularnewline
\end{tabular}\hfill{}%
\begin{tabular}{ccccc|c|c|ccc}
 &  &  &  & \multicolumn{1}{c}{} & \multicolumn{1}{c}{} & \multicolumn{1}{c}{} &  &  & \tabularnewline
 &  &  &  & \multicolumn{1}{c}{} & \multicolumn{2}{c}{} &  &  & \tabularnewline
 &  &  &  &  & \multicolumn{2}{c|}{Workers} &  &  & \tabularnewline
\cline{6-7} \cline{7-7} 
\multirow{2}{*}{} &  &  &  &  & 10 & 20 &  &  & \tabularnewline
\cline{3-7} \cline{4-7} \cline{5-7} \cline{6-7} \cline{7-7} 
 &  & \multirow{2}{*}{Firms} & \multicolumn{1}{c|}{} & 10 & 2 & 3 &  &  & \tabularnewline
\cline{5-7} \cline{6-7} \cline{7-7} 
 &  &  & \multicolumn{1}{c|}{} & 20 & 3 & 2 &  &  & \tabularnewline
\cline{3-7} \cline{4-7} \cline{5-7} \cline{6-7} \cline{7-7} 
 &  & \multicolumn{8}{l}{Aggregated matching counts}\tabularnewline
 &  & \multicolumn{8}{l}{by cognitive skills}\tabularnewline
\end{tabular}
\par\end{center}

\noindent In line with Proposition \ref{prop:1}, the optimal matching
scheme (left panel) satisfies weak positive sorting. However, the
aggregated matching between firms' and their workers' cognitive skills
(right panel) exhibits neither positive nor negative sorting.

In Section \ref{sec:2}, I examine a framework introduced by Choo
and Siow (2006) that allows agents to have unobserved characteristics
that are outcome-relevant. In this context, obtaining a tractable
aggregation over unobserved characteristics requires additional assumptions
on the distributions of the unobserved characteristics. In addition,
complementarities between unobserved characteristics cannot be allowed
as they affect the aggregated match between observed attributes in
a way that cannot be controlled by observed attributes. Under these
additional restrictions, I obtain empirically robust sorting results
that (a) suggest milder sorting patterns between observed attributes
and (b) allow us to infer the underlying complementarities between
the observed attributes given empirical matching distribution. 

In this context, the changes in complementarities affect the matching
outcome in a sophisticated way. In Section \ref{sec:2}, I propose
a non-parametric notion of increasing complementarities and establish
comparative static results regarding the changes in complementarities
without making assumptions on either the matching output function
or the distributions of the observable attributes.

To demonstrate an application of the empirically robust sorting results,
I examine the association between agents' health status and their
spouses' education levels among U.S. households in Section \ref{sec:3}
by using the IPUMS-CPS data series for 2010-2017. In literature, many
studies report a positive association between agents' health status
and their spouses' education levels. From the actuarial point of view,
decomposing this association is essential. If one's spouse's education
level is a strong predictor of one's health status, then insurance
companies can reduce the risk that they carry by taking one's spouse's
education level into account. I show that one's spouse's education
level is not a strong predictor of one's health status and identify
an attraction channel which explains the positive association between
agents' health status and their spouses' education levels. It is estimated
that the association is a product of three factors: (a) an attraction
between better-educated individuals, (b) an attraction between healthier
individuals, and (c) a positive association between agents' health
status and their own education levels. This decomposition implies
that the insurers' risk associated with a two-person family plan is
higher than the aggregate risk associated with two individual policies. 

\section*{Empirical Support for Multidimensional Types}

In this section, I survey the recent empirical studies that support
the presence of multiple outcome-relevant attributes in the healthcare,
labor, and marriage/dating markets.

In the healthcare market, the relationship between doctors and patients
is known to be multidimensional. Jagosh et al. (2011) show that effective
communication enhances patient recovery. The authors argue that three
listening skills of health professionals (listening as an essential
component of clinical data gathering and diagnosis; listening as a
healing and therapeutic agent; and listening as a means of fostering
and strengthening the doctor\textendash patient relationship) are
central to successful clinical outcomes. Stavropoulou (2011) indicates
that six aspects of the relationship between doctors and patients
affect nonadherence to medication using the European Social Survey.
Nonadherence to medication is also proven to be a complex and multidimensional
healthcare problem by Hugtenburg et al. (2013). Similarly, Mazzi et
al. (2018) identify four attributes of doctors and three characteristics
of patients that are essential to successful clinical outcomes by
using an integrated survey of thirty-one European countries. Belasen
and Belasen (2018) provide evidence suggesting that different aspects
of the relationships between doctors and patients affect not only
the clinical outcome but also patients' rankings of hospitals. 

The empirical findings in labor economics literature also suggest
that workers and firms have multidimensional types. Deming (2017)
shows that workers' cognitive and social skills are important determinants
of wages in the U.S. labor market. Girsberger et al. (2018) add manual
skill to that list for the Swiss labor market by using data from the
Social Protection and Labour Market (SESAM) panel. Guvenen et al.
(2018) analyze the skill mismatch between workers and firms in the
U.S. labor market. Their analysis indicates that verbal and math skills
have statistically significant effects on workers' wages.

Hitsch et al. (2010) study mating behavior in the U.S. dating market
by using a large dataset provided by an online dating website. They
show that the differences in age, educational attainment, and body
mass index decrease the probability of dating. Belot and Francesconi
(2012) confirm these findings by studying the speed dating patterns
of individuals based on a British dataset. They find that physical
attractiveness factors (age, height, and body mass index) play an
essential role in the earlier stages of the relationship. Klofstad
et al. (2013) add political views to that list. By analyzing a large
dataset provided by another online dating website, the authors show
that couples tend to share the same political preferences.

Gemici and Laufer (2010) study cohabitation, marriage and separation
patterns in the U.S. mating market. They report that age, educational
attainment, and race are key variables in explaining agents' choices.
Dupuy and Galichon (2014) add other important variables to that list
by studying the Dutch marriage market. They show that personality
traits, such as emotional stability and conscientiousness, are also
important determinants of the Dutch household formation. Domingue
et al. (2014) analyze the genetic similarities between married couples
by using the Health and Retirement Study and information from 1.7
million single-nucleotide polymorphisms. Their results demonstrate
that similar genetic types attract each other. They also find that
educational similarities between spouses are stronger than genetic
similarities. The thorough examination of the household formation
allows researchers to explain changes in the household income inequality.
For example, Greenwood et al. (2014) argue that similarities between
spouses' education levels contributed to the increasing household
income inequality in the U.S. between 1960 and 2005. 

\section{\label{sec:1}The Matching Model with Multidimensional Types}

In this section, I study a general model of two-sided matching markets.
The only difference between the framework examined by Becker (1973)
and the one presented in this section is that I allow the agents to
have multiple outcome-relevant attributes while Becker (1973) assumes
that each agent has only one outcome-relevant attribute. I start by
outlining the general framework in detail. 

\textit{Agents:} There are two sides of the matching market, namely
firms and workers. A generic firm is denoted by $x$, and a generic
worker is denoted by $y$. Every firm has $K$ productive attributes,
i.e. $x\in\mathbb{R}^{K}$, and each worker has $L$ productive attributes,
i.e. $y\in\mathbb{R}^{L}$. It is assumed that the overall measures
of firms and workers coincide. The firms and workers are distributed
according to cumulative distribution functions $F:\mathbb{R}^{K}\rightarrow\left[0,1\right]$
and $G:\mathbb{R}^{L}\rightarrow\left[0,1\right]$, respectively. 

\textit{Matching Distribution:} Matching distribution $M:\mathbb{R}^{K}\times\mathbb{R}^{L}\rightarrow\left[0,1\right]$
is a cumulative distribution function associated with a particular
matching scheme. More specifically, $M\left(x,y\right)$ represents
the fraction of matched firm-worker couples with attributes less than
or equal to $\left(x,y\right)$. 

\noindent Given $F$ and $G$, matching distribution $M$ satisfies
no-single property if and only if 
\noindent \begin{center}
(a) $\underset{y\rightarrow\infty^{L}}{\lim}M\left(x,y\right)=F\left(x\right)$
for all $x\in\mathbb{R}^{K}$, and (b) $\underset{x\rightarrow\infty^{K}}{\lim}M\left(x,y\right)=G\left(y\right)$
for all $y\in\mathbb{R}^{L}$. 
\par\end{center}

\noindent Let $\mathcal{M}\left(F,G\right)$ denote the set of matching
distributions that satisfy no-single property given $F$ and $G$.

\textit{Output Function:} A match between a firm and a worker with
attributes $x$ and $y$ generates a matching output. The matching
output is determined by exogenously specified output function $Q:\mathbb{R}^{K}\times\mathbb{R}^{L}\rightarrow\mathbb{R}_{++}$,
and denoted by $Q\left(x,y\right)$. For any unmatched agent, the
output is assumed to be $0$. 

\textit{Planner's Problem:} Given $F$, $G$, and $Q$, the social
planner maximizes the aggregate output by choosing a matching distribution
that satisfies no-single property: 
\noindent \begin{center}
$\underset{M\in\mathcal{M}\left(F,G\right)}{\max}\int QdM.$
\par\end{center}

Two key concepts of the matching theory are complementarity and sorting.
In this context, \emph{positive(negative) complementarity} between
a firm's $i^{th}$ attribute and its worker's $j^{th}$ attribute
means that the marginal product of the firm's $i^{th}$ attribute
is increasing(decreasing) in its worker's $j^{th}$ attribute. In
unidimensional matching literature, complementarities are formulated
by using supermodularity. Function $Q:\mathbb{R}^{2}\rightarrow\mathbb{R}$
is called \emph{(strictly) supermodular} if for all $x^{\prime}>x$
and $y^{\prime}>y$, it holds that

\noindent 
\[
\left\{ Q\left(x^{\prime},y^{\prime}\right)-Q\left(x,y^{\prime}\right)\right\} -\left\{ Q\left(x^{\prime},y\right)-Q\left(x,y\right)\right\} \geq\left(>\right)0.
\]

\noindent Similarly, $Q$ is \emph{(strictly) submodular} if $-Q$
is (strictly) supermodular; and $Q$ satisfies \emph{modularity} if
$Q$ is both supermodular and submodular. In the multidimensional
setting, supermodularity can also be used with a slight modification
to formulate multidimensional complementarities. Towards this end,
I introduce \emph{i,j pairwise supermodularity}.
\begin{defn}
\label{def:1}Function $Q:\mathbb{R}^{K}\times\mathbb{R}^{L}\rightarrow\mathbb{R}_{++}$
is 

\noindent \textbf{(a)} (strictly) i,j pairwise supermodular if $Q\left(x_{i},x_{-i},y_{j},y_{-j}\right)$
is a (strictly) supermodular function of  $\left(x_{i},y_{j}\right)$
for all $\left(x_{-i},y_{-j}\right)\in\mathbb{R}^{K-1}\times\mathbb{R}^{L-1}$;

\noindent \textbf{(b)} (strictly) i,j pairwise submodular if $-Q$
is (strictly) i,j pairwise supermodular; and

\noindent \textbf{(c) }i,j pairwise modular if $Q$ is both i,j pairwise
supermodular and i,j pairwise submodular.
\end{defn}
By construction, the pairwise modularity concepts can be used to formulate
the relationships between the marginal product of a firm's $i^{th}$
attribute and its worker's $j^{th}$ attribute. This aspect is easy
to observe when the output function is smooth: $Q:\mathbb{R}^{K}\times\mathbb{R}^{L}\rightarrow\mathbb{R}$
satisfies i,j pairwise supermodularity if and only if its cross-partial
derivative with respect to $x_{i}$ and $y_{j}$ is positive. Through
the use of pairwise modularity concepts, I define two multidimensional
complementarity classes. Consider two disjoint subsets of $\left\{ 1,\ldots,K\right\} \times\left\{ 1,\ldots,L\right\} $:
P and N.
\begin{defn}
\label{def:2}Function $Q:\mathbb{R}^{K}\times\mathbb{R}^{L}\rightarrow\mathbb{R}_{++}$
exhibits (strict) P,N modularity if and only if $Q\left(x,y\right)$
is 

\noindent \textbf{(a)} (strictly) i,j pairwise supermodular for all
$\left(i,j\right)\in P$; 

\noindent \textbf{(b)} (strictly) p,q pairwise submodular for all
$\left(p,q\right)\in N$; and 

\noindent \textbf{(c)} m,n pairwise modular for all $\left(m,n\right)\notin P\cup N$. 

\noindent Let $\mathbb{C}\left(P,N\right)$ and \emph{$\mathbb{C}_{+}\left(P,N\right)$
}denote the sets of functions which satisfy P,N modularity and strict
P,N modularity for set parameters P and N. 
\end{defn}
Here $P$ represents the set of firm-worker attribute pairs for which
the output function exhibits positive complementarity. Similarly,
$N$ represents the set of firm-worker attribute pairs for which the
output function exhibits negative complementarity. The strongest element
of these complementarity classes is that they do not allow for complementarity
between a firm's $i^{th}$ attribute and its worker's $j^{th}$ attribute
to change signs (from positive to negative). For example, an output
function which exhibits (a) strictly positive complementarity between
a firm's and its worker's cognitive skills for some levels of its
worker's social skill, and (b) strictly negative complementarity between
the firm's and its worker's cognitive skills for some levels of its
worker's social skill cannot be formulated by using a P,N modular
function. Despite this shortcoming, P,N modular functions can capture
various output function forms which have been frequently used in the
matching literature.

\noindent 
\begin{table}[H]
\noindent \centering{}\caption{\label{tab:1}Some examples of P,N modular functions frequently used
in matching literature}
{\scriptsize{}}%
\begin{tabular}{c|c|cc}
\multirow{2}{*}{\textbf{\scriptsize{}Output function}} & \multirow{2}{*}{\textbf{\scriptsize{}Strict P,N modularity}} & \multirow{2}{*}{\textbf{\scriptsize{}Reference}} & \tabularnewline
 &  &  & \tabularnewline
\cline{1-3} \cline{2-3} \cline{3-3} 
\multirow{2}{*}{{\scriptsize{}$Q\left(x,y\right)=x^{\prime}Ay=\underset{i=1}{\overset{K}{\sum}}\underset{j=1}{\overset{L}{\sum}}a_{ij}x_{i}y_{j}$ }} & \multirow{2}{*}{{\scriptsize{}$P=\left\{ \left(i,j\right):a_{ij}>0\right\} $ and
$N=\left\{ \left(i,j\right):a_{ij}<0\right\} $}} & \multirow{2}{*}{{\scriptsize{}Dupuy and Galichon (2014)}} & \tabularnewline
 &  &  & \tabularnewline
\cline{1-3} \cline{2-3} \cline{3-3} 
\multirow{2}{*}{{\scriptsize{}$Q\left(x,y\right)=\underset{i=1}{\overset{K}{\sum}}Q_{i}\left(x_{i},y_{i}\right)$ }} & \multirow{2}{*}{{\scriptsize{}$P=\left\{ 1,...,K\right\} $ and $N=\emptyset$ if
$\partial^{2}Q_{i}\left(x_{i},y_{i}\right)/\partial x_{i}\partial y_{i}>0$}} & \multirow{2}{*}{{\scriptsize{}Lindenlaub (2017)}} & \tabularnewline
 &  &  & \tabularnewline
\cline{1-3} \cline{2-3} \cline{3-3} 
\end{tabular}
\end{table}

Due to the simplicity of their interpretation and their frequent use,
understanding the optimal sorting patterns for P,N modular output
functions is of theoretical and empirical interest. To achieve this
goal, I introduce three multidimensional sorting patterns. The first
sorting pattern is a direct extension of the global sorting pattern
presented by Becker (1973).
\begin{defn}
\label{def:3}Matching distribution $M:\mathbb{R}^{K}\times\mathbb{R}^{L}\rightarrow\left[0,1\right]$
satisfies

\noindent \textbf{(a)} positive sorting between firms' $i^{th}$ and
their workers' $j^{th}$ attributes if and only if 
\noindent \begin{center}
$\left(x_{i}^{\prime}-x_{i}\right)\left(y_{j}^{\prime}-y_{j}\right)\geq0$
for all $\left(x,y\right),\left(x^{\prime},y^{\prime}\right)\in supp\left(M\right)$,
and 
\par\end{center}

\noindent \textbf{(b)} negative sorting between firms' $i^{th}$ and
their worker's $j^{th}$ attributes if and only if 
\noindent \begin{center}
$\left(x_{i}^{\prime}-x_{i}\right)\left(y_{j}^{\prime}-y_{j}\right)\leq0$
for all $\left(x,y\right),\left(x^{\prime},y^{\prime}\right)\in supp\left(M\right)$. 
\par\end{center}

\end{defn}
Similar to P,N modularity, I combine pairwise sorting patterns to
construct a multidimensional sorting class. 
\begin{defn}[Global P,N sorting]
\label{def:4}Matching distribution $M:\mathbb{R}^{K}\times\mathbb{R}^{L}\rightarrow\left[0,1\right]$
satisfies global P,N sorting if and only if the following conditions
hold for all $\left(x,y\right),\left(x^{\prime},y^{\prime}\right)\in supp\left(M\right)$,

\noindent \textbf{(a)} $\left(x_{i}^{\prime}-x_{i}\right)\left(y_{j}^{\prime}-y_{j}\right)\geq0$
for all $\left(i,j\right)\in P$, and 

\noindent \textbf{(b)} $\left(x_{p}^{\prime}-x_{p}\right)\left(y_{q}^{\prime}-y_{q}\right)\leq0$
for all $\left(p,q\right)\in N$.

\noindent In other words, a matching distribution exhibits global
P,N sorting if and only if it exhibits (a) positive sorting between
firms' $i^{th}$ and their workers' $j^{th}$ attributes for all $\left(i,j\right)\in P$
and (b) negative sorting between firms' $p^{th}$ and their workers'
$q^{th}$ attributes for all $\left(p,q\right)\in N$. 
\end{defn}
Although global P,N sorting is a clear sorting pattern between two
multidimensional agents, it requires simultaneous sorting between
different attributes. Consequently, its existence is tied to the distributions
of the agents. 
\begin{example}
\label{ex:1}Consider a labor market with equal numbers of two types
of firms: $\left(10,10\right)$ and $\left(20,20\right)$; and equal
numbers of two types of workers: $\left(10,20\right)$ and $\left(20,10\right)$.
Notice that matching a $\left(10,20\right)$ worker with a $\left(10,10\right)$
firm and a $\left(20,10\right)$ worker with a $\left(20,20\right)$
firm violates positive sorting between the second attributes. Similarly,
the swap between these two couples, i.e. matching a $\left(10,20\right)$
worker with a $\left(20,20\right)$ firm and a $\left(20,10\right)$
worker with a $\left(10,10\right)$ firm, violates positive sorting
between the first attributes. Therefore, it is not possible to observe
simultaneous positive sorting between the first attributes and the
second attributes without inefficiently leaving some agents unmatched.
\end{example}
Due to the existence issue, I study two alternative sorting patterns
that exist for arbitrary $F,G,P,$ and $N$. The next sorting pattern
is inspired by Chiappori et al. (2017). The authors analyze a matching
model in which each side of the market is represented by one continuous
variable (socioeconomic status) and one binary variable (smoking habit).
They categorize couples into two main groups. In the first group,
both men and women are non-smokers. In the second group, at least
one spouse is a smoker. They assume that the matching output of a
couple is the multiplication of spouses' socioeconomic status. If
there is a smoker in the household, then the output is scaled down
by a constant. Under this complementarity structure, they predict
positive sorting between agents' socioeconomic status \emph{within}
each group. Although this sorting result immediately follows from
Becker's (1973) unidimensional sorting theory, the idea of within-group
sorting can be deployed in a more general setting.
\begin{defn}
\label{def:5}Matching distribution $M:\mathbb{R}^{K}\times\mathbb{R}^{L}\rightarrow\left[0,1\right]$
satisfies 

\noindent \textbf{(a) }within-group positive sorting between firms'
$i^{th}$ and their workers' $j^{th}$ attributes if and only if for
all $\left(x,y\right),\left(x^{\prime},y^{\prime}\right)\in supp\left(M\right)$
such that $\left(\clubsuit\right)$ $x_{k}=x_{k}^{\prime}$ for all
$k\neq i$ and $\left(\spadesuit\right)$ $y_{l}=y_{l}^{\prime}$
for all $l\neq j$, it holds that
\noindent \begin{center}
$\left(x_{i}^{\prime}-x_{i}\right)\left(y_{j}^{\prime}-y_{j}\right)\geq0$;
and 
\par\end{center}

\noindent \textbf{(b) }within-group negative sorting between firms'
$i^{th}$ and their workers' $j^{th}$ attributes if and only if for
all $\left(x,y\right),\left(x^{\prime},y^{\prime}\right)\in supp\left(M\right)$
such that $\left(\clubsuit\right)$ $x_{k}=x_{k}^{\prime}$ for all
$k\neq i$ and $\left(\spadesuit\right)$ $y_{l}=y_{l}^{\prime}$
for all $l\neq j$, it holds that
\noindent \begin{center}
$\left(x_{i}^{\prime}-x_{i}\right)\left(y_{j}^{\prime}-y_{j}\right)\leq0$. 
\par\end{center}

\end{defn}
Similar to P,N sorting, I combine within-group sorting patterns to
define a within-group sorting class.
\begin{defn}[Within-group P,N Sorting]
\label{def:6}Matching distribution $M:\mathbb{R}^{K}\times\mathbb{R}^{L}\rightarrow\left[0,1\right]$
satisfies within-group P,N sorting if and only if it satisfies

\noindent \textbf{(a) }within-group positive sorting between firms'
$i^{th}$ and their workers' $j^{th}$ attributes for all $\left(i,j\right)\in P$;
and 

\noindent \textbf{(b) }within-group negative sorting between firms'
$p^{th}$ and their workers' $q^{th}$ attributes for all $\left(p,q\right)\in N$. 
\end{defn}
Within-group P,N sorting is not a very informative sorting pattern
as some of the nonoptimal matching distributions may satisfy within-group
P,N sorting.
\begin{example}
\label{ex:2}Consider a matching market with equal numbers of two
types of workers: $\left(10,20\right)$ and $\left(20,10\right)$;
and equal numbers of two types of firms: $\left(10,20\right)$ and
$\left(20,10\right)$. Notice that matching $\left(10,20\right)$
workers with $\left(20,10\right)$ firms, and $\left(20,10\right)$
workers with $\left(10,20\right)$ firms, satisfies within-group P,N
sorting for

\noindent $P=\left\{ \left(1,1\right),\left(2,2\right)\right\} $
and $N=\left\{ \right\} $. However, it is clear that a swap of the
partners between firm-worker couples $\left\{ \left(10,20\right),\left(20,10\right)\right\} $
and $\left\{ \left(20,10\right),\left(10,20\right)\right\} $ strictly
improves the aggregate output for any strictly P,N modular output
function. For example, matching same types with each other is associated
with strictly higher aggregate output for output function $Q\left(x,y\right)=x_{1}y_{1}+x_{2}y_{2}$: 
\noindent \begin{center}
$100+400+400+100=1000>800=200+200+200+200.$
\par\end{center}

\noindent The right-hand side of the equation above equals the total
output produced by firm-worker couples $\left\{ \left(10,20\right),\left(20,10\right)\right\} $
and $\left\{ \left(20,10\right),\left(10,20\right)\right\} $; and
the left-hand side of the equation is the total output produced by
firm-worker couples $\left\{ \left(10,20\right),\left(10,20\right)\right\} $
and $\left\{ \left(20,10\right),\left(20,10\right)\right\} $.
\end{example}
As it is demonstrated in Example \ref{ex:2}, a fine characterization
of the set of optimal matching distributions cannot be obtained by
using within-group P,N sorting. As such, I adopt a more constructive
approach to obtain a fine description of the set of optimal matching
distributions. To motivate the idea behind this approach, I first
lay out a statistical course to obtain Becker's (1973) sorting results. 
\begin{thm}[Becker's (1973) sorting results]

\noindent In a unidimensional matching market, i.e. $K=L=1$, 

\noindent \textbf{(a)} positive sorting is an optimal matching distribution
when the output function is supermodular;

\noindent \textbf{(b)} negative sorting is an optimal matching distribution
when the output function is submodular;

\noindent \textbf{(c)} positive sorting is the unique optimal matching
distribution when the output function is strictly supermodular; and

\noindent \textbf{(d)} negative sorting is the unique optimal matching
distribution when the output function is strictly submodular.
\end{thm}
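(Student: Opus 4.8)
The plan is to reduce every part of the statement to a single two--point exchange inequality and then propagate it to arbitrary couplings through a representation of $\int Q\,dM$ in terms of the joint survival function. Parts (b) and (d) will follow from (a) and (c) by the reflection $y\mapsto -y$, which turns submodularity into supermodularity and negative sorting into positive sorting, so I concentrate on the supermodular case throughout.

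First I would record the elementary exchange step. If $(x,y)$ and $(x',y')$ lie in $supp\left(M\right)$ with $x<x'$ but $y>y'$ (a \emph{discordant} pair), then supermodularity applied to the rectangle with corners $(x,y')$, $(x',y')$, $(x,y)$, $(x',y)$ gives
\[
Q\left(x',y\right)+Q\left(x,y'\right)\ \ge\ Q\left(x',y'\right)+Q\left(x,y\right),
\]
with strict inequality when $Q$ is strictly supermodular. Thus replacing the two discordant couples by the concordant couples $(x,y')$ and $(x',y)$ never lowers---and under strict supermodularity strictly raises---the aggregate output while preserving the marginals $F$ and $G$. This exchange is the engine behind all four parts.

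Second, to turn this local fact into a global optimality claim I would pass to a survival--function representation. Writing $\bar M\left(s,t\right)=1-F\left(s\right)-G\left(t\right)+M\left(s,t\right)$ for the joint survival function and using a Fubini / integration--by--parts identity
\[
\int Q\,dM=\left(\text{terms depending only on }F,G\right)+\iint \bar M\left(s,t\right)\,d\mu_{Q}\left(s,t\right),
\]
where $\mu_{Q}$ is the nonnegative measure generated by the mixed second differences of $Q$ (nonnegativity of $\mu_{Q}$ being exactly supermodularity), the dependence on $M$ enters only through $\bar M$. The Fr\'echet--Hoeffding bound shows that $\bar M$ is maximized pointwise, simultaneously at every $\left(s,t\right)$, by the comonotone coupling $M^{+}$ realizing positive sorting. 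Since $\mu_{Q}\ge 0$, this gives $\int Q\,dM\le\int Q\,dM^{+}$ for every $M\in\mathcal M\left(F,G\right)$, which is (a); the reflection then yields (b). For uniqueness (c), strict supermodularity forces $\mu_{Q}$ to have full support, so the gap $\int Q\,dM^{+}-\int Q\,dM=\iint\left[\bar M^{+}-\bar M\right]d\mu_{Q}$ is strictly positive whenever $M\ne M^{+}$---the nonnegative integrand being strictly positive on a box of positive measure by right--continuity of the distribution functions---hence $M^{+}$ is the unique optimum, and (d) follows by reflection.

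The main obstacle I anticipate is the representation step in full generality. The identity $\int Q\,dM=\text{const}+\iint\bar M\,d\mu_{Q}$ is transparent when $Q$ is smooth, since then $d\mu_{Q}=\partial^{2}_{st}Q\,ds\,dt$; but for a merely supermodular $Q$ I must construct $\mu_{Q}$ from the rectangle increments of $Q$, verify both its nonnegativity and the integration--by--parts identity, and control unbounded supports and any atoms of $F$ and $G$ by truncation together with a monotone smooth approximation of $Q$. Existence of an optimizer---weak-$*$ compactness of $\mathcal M\left(F,G\right)$ by tightness and continuity of $M\mapsto\int Q\,dM$---is routine, and I would dispatch it in a preliminary remark.
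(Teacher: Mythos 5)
Your argument is correct, but it takes a genuinely different route from the paper's. The paper obtains this theorem through the supermodular stochastic order: it invokes the Muller--Scarsini/Meyer--Strulovici characterization (stated right after Definition \ref{def:7}) that $M$ dominates $M^{\prime}$ in the supermodular order if and only if $M$ can be reached from $M^{\prime}$ by concordance improving transfers, and then observes that positive sorting is the dominant element of $\mathcal{M}\left(F,G\right)$ and negative sorting the dominated one, which yields (a)--(d) at once; this transfer machinery is exactly what the paper later generalizes to the P,N modular order in the convex-cone duality proof of Lemma \ref{lem:1}. You instead globalize the two-point swap via the Hoeffding/Tchen representation $\int Q\,dM=c\left(F,G\right)+\iint\bar{M}\,d\mu_{Q}$ together with the Fr\'echet--Hoeffding pointwise bound $\bar{M}\leq\bar{M}^{+}$, i.e. the classical Lorentz/Cambanis--Simons--Stout/Tchen argument. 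What your route buys: it is self-contained (no appeal to the nontrivial transfer-characterization theorem), it makes the simultaneous optimality of the comonotone coupling over the whole supermodular class transparent, and it handles uniqueness cleanly --- strict supermodularity gives $\mu_{Q}$ full support, and $M\neq M^{+}$ forces $\bar{M}<\bar{M}^{+}$ on a box of positive $\mu_{Q}$-measure, so the output gap is strictly positive. What it costs: the technical weight shifts to constructing $\mu_{Q}$ and verifying the integration-by-parts identity for non-smooth $Q$ (you flag this correctly; note also that the strictness needed for (c)--(d) must be run on $\mu_{Q}$ itself rather than on smooth approximants, where strict inequalities can degrade), and --- pertinently for this paper --- it does not extend to the multidimensional setting, because no coupling attains the Fr\'echet--Hoeffding bound simultaneously across attribute pairs; that failure is precisely what Examples \ref{ex:1} and \ref{ex:3} document, and it is why the paper builds on the transfer approach, which does generalize (Lemma \ref{lem:1}). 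Two minor remarks: your opening exchange inequality ends up being motivational rather than load-bearing, since the representation does all the work; and the preliminary existence-of-optimizer remark is unnecessary, because your argument exhibits $M^{+}$ as a maximizer directly.
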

It is easy to establish these sorting results by using the upper-set
properties in the supermodular order. Distribution $M:\mathbb{R}\times\mathbb{R}\rightarrow\left[0,1\right]$
dominates $M^{\prime}:\mathbb{R}\times\mathbb{R}\rightarrow\left[0,1\right]$
in the \emph{supermodular order }if and only if  $\int QdM\geq\int QdM^{\prime}$
for all supermodular $Q:\mathbb{R}\times\mathbb{R}\rightarrow\mathbb{R}$.
Muller and Scarsini (2010), and Meyer and Strulovici (2013) offer
an alternative characterization of the supermodular order. 

\begin{defn}
\label{def:7}\ 

\noindent \textbf{(a) }A pair of couples $\left(x,y\right),\left(x^{\prime},y^{\prime}\right)\in\mathbb{R}^{2}$
is (weakly) concordant if and only if $\left(x-x^{\prime}\right)\left(y-y^{\prime}\right)>\left(\geq\right)0$. 

\noindent \textbf{(b) }A pair of couples $\left(x,y\right),\left(x^{\prime},y^{\prime}\right)\in\mathbb{R}^{2}$
is (weakly) discordant if and only if $\left(x-x^{\prime}\right)\left(y-y^{\prime}\right)<\left(\leq\right)0$. 

\noindent \textbf{(c) }A concordance improving transfer is a uniform
probability transfer from a discordant bivariate pair to the concordant
bivariate pair that is obtained via the swap between the discordant
bivariate pair. Let $\tau\left(x,y,x^{\prime},y^{\prime};\alpha\right)$
denote the concordance improving transfer with $\alpha\geq0$ weight
which increases densities by $\alpha$ at $\left(x,y\right)$ and
$\left(x^{\prime},y^{\prime}\right)$ such that $\left(x-x^{\prime}\right)\left(y-y^{\prime}\right)>0$;
and decreases densities at $\left(x,y^{\prime}\right)$ and $\left(x^{\prime},y\right)$
by $\alpha$. 
\end{defn}
\begin{thm}[Muller and Scarsini (2010), and Meyer and Strulovici (2013)]
 \emph{$M:\mathbb{R}\times\mathbb{R}\rightarrow\left[0,1\right]$
}dominates $M^{\prime}:\mathbb{R}\times\mathbb{R}\rightarrow\left[0,1\right]$
in the supermodular order if and only if $M$ can be obtained from
$M^{\prime}$ via concordance improving transfers. 
\end{thm}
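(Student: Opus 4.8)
The plan is to prove the two implications separately, treating the ``if'' direction as routine and reserving the real work for the ``only if'' direction, which I would attack through a convex-duality (bipolar-cone) argument.

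For the \emph{sufficiency} direction --- if $M$ is reachable from $M^{\prime}$ by concordance improving transfers, then $M$ dominates $M^{\prime}$ in the supermodular order --- I would first compute the effect of a single transfer $\tau\left(x,y,x^{\prime},y^{\prime};\alpha\right)$ on the integral of an arbitrary supermodular $Q$. Because the transfer raises the density by $\alpha$ at the concordant pair $\left(x,y\right),\left(x^{\prime},y^{\prime}\right)$ and lowers it by $\alpha$ at $\left(x,y^{\prime}\right),\left(x^{\prime},y\right)$, the induced change in $\int QdM$ is exactly
\[
\alpha\bigl[Q\left(x,y\right)+Q\left(x^{\prime},y^{\prime}\right)-Q\left(x,y^{\prime}\right)-Q\left(x^{\prime},y\right)\bigr],
\]
and this bracket is nonnegative by the very definition of supermodularity once we read off the larger and smaller coordinates, using $\left(x-x^{\prime}\right)\left(y-y^{\prime}\right)>0$. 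Hence every elementary transfer weakly raises $\int QdM$ for supermodular $Q$; summing over a sequence of transfers and passing to the limit (with a reduction to bounded test functions and a uniform-integrability argument to justify the interchange of limit and integral) delivers $\int QdM\geq\int QdM^{\prime}$ for all supermodular $Q$.

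For the \emph{necessity} direction I would recast the problem in the language of convex cones and polarity. The central observation is that the defining inequalities of supermodularity stand in exact duality with the elementary concordance transfers: a function $Q$ is supermodular if and only if $\int Qd\tau\geq0$ for every elementary transfer $\tau$. Thus the class of supermodular functions is precisely the polar cone of the set $\mathcal{T}$ of elementary concordance transfers. By the bipolar theorem, the closed convex cone generated by $\mathcal{T}$ coincides with the set of signed measures $\mu$ satisfying $\int Qd\mu\geq0$ for every supermodular $Q$. The hypothesis that $M$ dominates $M^{\prime}$ in the supermodular order says exactly that $\mu=M-M^{\prime}$ lies in this latter set, whence $M-M^{\prime}$ is a limit of nonnegative combinations of elementary transfers --- which is the assertion that $M$ is obtained from $M^{\prime}$ by concordance improving transfers. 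A preliminary remark worth recording is that the order automatically forces equal marginals: a modular function $f\left(x\right)+g\left(y\right)$ is supermodular and so is its negative, so domination yields $\int\left[f+g\right]dM=\int\left[f+g\right]dM^{\prime}$; this confirms that $M-M^{\prime}$ already lives in the marginal-preserving subspace on which the transfers act.

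The main obstacle is making the bipolar argument rigorous in the continuous setting. I would first settle the discrete case, where $\mathcal{T}$ spans a finite-dimensional polyhedral cone and the bipolar theorem is elementary, and then recover the general statement by approximating $M$ and $M^{\prime}$ by discrete measures on a common finite grid, applying the discrete result, and passing to the limit. The delicate points are selecting a topology on measures --- weak convergence together with tail control --- in which the cone of reachable distributions is closed, and verifying that the grid-level transfers converge to an admissible family of transfers in the limit. This is where I expect to spend most of the effort, and where invoking the technical approximation lemmas of M\"uller and Scarsini (2010) or Meyer and Strulovici (2013) would be appropriate.
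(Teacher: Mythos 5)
Your proposal is correct and takes essentially the same route as the paper: the paper establishes this statement through its multidimensional generalization (Lemma \ref{lem:1}), whose proof likewise reduces to countable support, observes that supermodularity of $Q$ is equivalent to $Q\cdot t\geq0$ for every elementary transfer $t$, and then invokes the dual-cone (bipolar) theorem, citing Luenberger (1969), to conclude that the difference of the two density vectors lies in the convex cone generated by the transfers. The one delicate point you flag --- closedness of the transfer cone, needed for the bipolar step in infinite dimensions --- is asserted in the paper without proof, so your write-up is, if anything, more explicit about the technical gap both arguments share.
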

Based on this alternative characterization, it is easy to see that
positive sorting is the \emph{dominant} matching distribution in the
supermodular order. Similarly, negative sorting is strictly \emph{dominated}
by any other matching distribution in the supermodular order. Consequently,
Becker's (1973) sorting result is obtained. Following a similar logic,
a fine characterization of the set of optimal matching distributions
for the multidimensional setting can be established by identifying
the changes in the matching distribution that increase the aggregate
output when the matching output is P,N modular. On this note, I define
P,N modular order in Definition \ref{def:8}.
\begin{defn}
\label{def:8}Distribution \emph{$M:\mathbb{R}^{K}\times\mathbb{R}^{L}\rightarrow\left[0,1\right]$
}dominates $M^{\prime}:\mathbb{R}^{K}\times\mathbb{R}^{L}\rightarrow\left[0,1\right]$
in P,N modular order, denoted by $M\succeq_{P,N}M^{\prime}$, if and
only if $\int QdM\geq\int QdM^{\prime}$ for all $Q\in\mathbb{C}\left(P,N\right)$.
Distribution $M$ strictly dominates $M^{\prime}$ in P,N modular
order if and only if (a) $M$ dominates $M^{\prime}$ in P,N modular
order; and (b) $M$ is not dominated by $M^{\prime}$ in P,N modular
order.
\end{defn}
\noindent Next, I formally define the sets of P,N dominant and P,N
undominated distributions that are essential to characterizing the
set of optimal matching distributions.
\begin{defn}
\label{def:9}Matching distribution $M\in\mathcal{M}\left(F,G\right)$
is P,N dominant if it dominates every $M^{\prime}\in\mathcal{M}\left(F,G\right)$
in P,N modular order. Similarly, $M\in\mathcal{M}\left(F,G\right)$
is P,N undominated if there does not exist $M^{\prime}\in\mathcal{M}\left(F,G\right)$
which strictly dominates $M$ in P,N modular order. 
\end{defn}
The next theorem restates Becker's (1973) sorting result by using
multidimensional concepts. By doing so, it makes the theoretical differences
between unidimensional and multidimensional matching markets clear. 
\begin{thm}[Unidimensional Sorting - Multidimensional Concepts]
\label{thm:3}Let $K=L=1$ and $P\cup N=\left\{ \left(1,1\right)\right\} $.

\noindent a) For any P and N, a matching distribution is P,N\textbf{
}dominant if and only if it is P,N undominated.

\noindent b) For any P and N, there is only one P,N dominant distribution.

\noindent c) For any $Q\in\mathbb{C}\left(P,N\right)$, the P,N dominant
distribution solves the planner's problem.

\noindent d) For any $Q\in\mathbb{C}_{+}\left(P,N\right)$, the P,N
dominant distribution is the unique solution to the planner's problem.

\noindent e) The P,N\textbf{ }dominant matching distribution is positive
assortative matching when $P=\left\{ \left(1,1\right)\right\} $ and
$N=\left\{ \right\} $\emph{: }
\noindent \begin{center}
$\Lambda\left(x,y\right)=\min\left\{ F\left(x\right),G\left(y\right)\right\} .$
\par\end{center}

\noindent f) The P,N dominant matching distribution is negative assortative
matching when $P=\left\{ \right\} $ and $N=\left\{ \left(1,1\right)\right\} $\emph{:}
\noindent \begin{center}
$\varOmega\left(x,y\right)=\max\left\{ F\left(x\right)+G\left(y\right)-1,0\right\} .$
\par\end{center}

\end{thm}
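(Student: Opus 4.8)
The plan is to collapse the whole statement onto the supermodular-order machinery, since for $K=L=1$ the feasible set $\mathcal{M}\left(F,G\right)$ is exactly the set of couplings of $F$ and $G$, and $\mathbb{C}\left(P,N\right)$ is the cone of supermodular functions when $P=\left\{ \left(1,1\right)\right\}$ and the cone of submodular functions when $N=\left\{ \left(1,1\right)\right\}$. Hence $\succeq_{P,N}$ is the supermodular order in the first case and its reverse in the second, and the concordance-transfer characterization of Muller and Scarsini (2010) and Meyer and Strulovici (2013) tells us that one distribution dominates another exactly when it is reachable by concordance improving transfers. I would prove (e) and (f) first, since everything else rests on the existence of a dominant element. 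For $P=\left\{ \left(1,1\right)\right\}$ I would show that the comonotone coupling $\Lambda\left(x,y\right)=\min\left\{ F\left(x\right),G\left(y\right)\right\}$ dominates every $M\in\mathcal{M}\left(F,G\right)$ via the Hoeffding identity
\[
\int Q\,dM-\int Q\,dM^{\prime}=\int\!\!\int\bigl[M\left(x,y\right)-M^{\prime}\left(x,y\right)\bigr]\,dQ_{xy},
\]
where $dQ_{xy}$ is the nonnegative Stieltjes measure generated by the mixed increments of a supermodular $Q$. Since every coupling obeys the Fr\'echet upper bound $M\left(x,y\right)\le\min\left\{ F\left(x\right),G\left(y\right)\right\} =\Lambda\left(x,y\right)$, the identity gives $\int Q\,d\Lambda\ge\int Q\,dM$, so $\Lambda$ is P,N dominant. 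The submodular case is symmetric: $dQ_{xy}\le0$ together with the Fr\'echet lower bound $M\left(x,y\right)\ge\max\left\{ F\left(x\right)+G\left(y\right)-1,0\right\} =\varOmega\left(x,y\right)$ yields (f). In the spirit of the transfer characterization, one may instead argue that any coupling whose support contains a discordant pair admits a concordance improving transfer, with the comonotone (resp. countermonotone) law as the unique fixed point.

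Next I would establish antisymmetry of $\succeq_{P,N}$ on $\mathcal{M}\left(F,G\right)$: if $M$ and $M^{\prime}$ dominate each other, then $M=M^{\prime}$. The test functions $\left(x,y\right)\mapsto\mathbf{1}\left[x>a\right]\mathbf{1}\left[y>b\right]$ are supermodular (products of nonnegative nondecreasing indicators), so mutual dominance forces the joint survival functions, hence the laws, to coincide; the submodular case uses $\mathbf{1}\left[x>a\right]\mathbf{1}\left[y\le b\right]$. With antisymmetry in hand, (b) is immediate, since two dominant distributions dominate each other and are therefore equal. Part (a) then follows: the forward implication (dominant $\Rightarrow$ undominated) is definitional, because a dominant $M$ dominates every $M^{\prime}$, so no $M^{\prime}$ can fail to be dominated by $M$; for the converse, if $M$ is undominated but differs from the dominant element $M^{*}$ of (e)/(f), then $M^{*}$ dominates $M$ while antisymmetry forbids $M$ from dominating $M^{*}$, so $M^{*}$ strictly dominates $M$, contradicting undominatedness.

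Part (c) is immediate from the definition of the P,N modular order: a P,N dominant distribution maximizes $\int Q\,dM$ over $\mathcal{M}\left(F,G\right)$ for every $Q\in\mathbb{C}\left(P,N\right)$, which is precisely the planner's problem. For (d) I would use strictness twice over. By the transfer characterization $M^{*}$ is reachable from any optimizer $M$ by concordance improving transfers, and each nontrivial transfer changes the objective by $\alpha\bigl[Q\left(x,y\right)+Q\left(x^{\prime},y^{\prime}\right)-Q\left(x,y^{\prime}\right)-Q\left(x^{\prime},y\right)\bigr]>0$ for strictly supermodular $Q$, so $M\ne M^{*}$ would give $\int Q\,dM^{*}>\int Q\,dM$, contradicting optimality. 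To avoid any limiting argument one may instead note that for $Q\in\mathbb{C}_{+}\left(P,N\right)$ the measure $dQ_{xy}$ charges every nondegenerate rectangle, so if $M\ne\Lambda$ the nonnegative integrand $\Lambda-M$ is strictly positive on a set of positive $dQ_{xy}$-measure and $\int Q\,d\Lambda>\int Q\,dM$; the strictly submodular case is symmetric with $\varOmega$.

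The main obstacle is the dominance claim underpinning (e) and (f) — that the Fr\'echet extremal coupling is genuinely the supremum in the supermodular order. The delicate points are the measure-theoretic justification of the Hoeffding identity for non-smooth $Q$ and for marginals carrying atoms, where $dQ_{xy}$ must be interpreted as a Stieltjes measure and the integration by parts handled accordingly, or, in the transfer-based route, the convergence of a possibly infinite sequence of concordance improving transfers to the comonotone limit. Once the extremal element is pinned down and antisymmetry is available, the remaining parts are routine bookkeeping.
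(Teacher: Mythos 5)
Your proof is correct, but it takes a genuinely different route from the paper's. The paper never writes a standalone proof of this theorem: it is meant to follow as the $K=L=1$ specialization of the general machinery, namely Lemma \ref{lem:1} (the convex-cone/transfer characterization of dominance in the P,N modular order) together with the two unnamed claims in the appendix --- that the set of P,N dominant distributions coincides with the set of globally P,N assortative distributions, and that whenever the dominant set is non-empty it coincides with the undominated set (whose non-emptiness is obtained there via a Kakutani fixed-point argument). Parts (e)--(f) then follow because in one dimension globally assortative couplings exist and are the Fr\'echet bounds. You instead work entirely inside classical unidimensional coupling theory: the Hoeffding identity plus the Fr\'echet--Hoeffding bounds give dominance of $\Lambda$ and $\varOmega$ directly, and antisymmetry of the supermodular order --- proved with the test functions $\mathbf{1}\left[x>a\right]\mathbf{1}\left[y>b\right]$ --- delivers uniqueness in (b) and the equivalence in (a). Your route is more elementary and self-contained, and it makes explicit two steps the paper leaves implicit: that the assortative coupling with given marginals is the \emph{unique} globally assortative element of $\mathcal{M}\left(F,G\right)$ (the paper's claim only identifies dominant with globally assortative and states (e)/(f) without argument), and that mutual dominance forces equality of distributions. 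The paper's route buys economy, since Theorem \ref{thm:3} comes for free from lemmas needed for Proposition \ref{prop:1} anyway, and it avoids the integration-by-parts technicalities you flag. Two small points to tidy in your write-up: Definition \ref{def:2} restricts output functions to range $\mathbb{R}_{++}$, so your indicator test functions should be shifted by a positive constant (a harmless fix; the paper itself ignores this restriction when it uses $Q\left(x,y\right)=x_{1}y_{1}$ in its claim proof), and the measure-theoretic care required by the Hoeffding identity for non-smooth $Q$ and atomic marginals, which you acknowledge, is at or above the paper's own level of rigor, since its proof of Lemma \ref{lem:1} simply assumes countable support without loss of generality.
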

There are two key aspects of optimality in matching markets with unidimensional
agents. First of all, the set of P,N undominated distributions is
singleton when $P\cup N\neq\emptyset$. Secondly, there exists a P,N
dominant distribution. These two features make it possible to fully
characterize the solution to the planner's problem for strictly P,N
modular output functions in unidimensional case. However, they do
not apply to the multidimensional setting. 
\begin{example}
\label{ex:3}Consider a matching market with equal numbers of two
types of workers: $\left(10,20\right)$ and $\left(20,10\right)$,
equal numbers of two types of firms: $\left(10,10\right)$ and $\left(20,20\right)$,
and a class of matching output functions with parameter $\gamma\in\left[0,1\right]$:
$Q\left(x,y;\gamma\right)=\gamma x_{1}y_{1}+\left(1-\gamma\right)x_{2}y_{2}$.
Note that for all values of $\gamma$, the output function exhibits
P,N modularity for $P=\left\{ \left(1,1\right),\left(2,2\right)\right\} $
and $N=\left\{ \right\} $. 

\emph{The set of P,N undominated distributions is not singleton:}
Consider matching distribution $M$ under which every $\left(10,20\right)$
worker is matched with a $\left(10,10\right)$ firm; and every $\left(20,10\right)$
worker is matched with a $\left(20,20\right)$ firm. It immediately
follows from Becker's sorting result that $M$ is P,N undominated
since it is the unique solution to the planner's problem when $\gamma=1$.
Alternative matching distribution $M^{\prime}$ under which every
$\left(20,10\right)$ worker is matched with a $\left(10,10\right)$
firm, and every $\left(10,20\right)$ worker is matched with a $\left(20,20\right)$
firm is also P,N undominated since it is the unique solution to the
planner's problem when $\gamma=0$. 

\emph{The set of P,N dominant distributions is empty:} Consider a
matching distribution under which some $\left(10,20\right)$ workers
are matched with $\left(10,10\right)$ firms; and some $\left(20,10\right)$
workers are matched with $\left(20,20\right)$ firms. This matching
distribution cannot be P,N dominant as $M^{\prime}$ is associated
with strictly higher aggregate output when $\gamma=0$. Similarly,
a matching distribution under which some $\left(20,10\right)$ workers
are matched with $\left(10,10\right)$ firms, and some $\left(10,20\right)$
workers are matched with $\left(20,20\right)$ firms cannot be P,N
dominant as $M$ is associated with strictly higher aggregate output
when $\gamma=1$. 
\end{example}
Due to the fact that the set of P,N dominant distributions is neither
singleton nor non-empty for arbitrary model parameters, characterization
of the optimal matching distributions is a non-trivial task. I obtain
a fine description of the optimal matching distributions, by characterizing
Pareto improving swaps in Lemma \ref{lem:1} below.
\begin{defn}
\label{def:10}A pair of firm-worker couples $\left(x,y\right),\left(x^{\prime},y^{\prime}\right)\in\mathbb{R}^{K}\times\mathbb{R}^{L}$
is P,N weak concordant if 

\noindent \textbf{(a) }$\left(x_{i}-x_{i}^{\prime}\right)\left(y_{j}-y_{j}^{\prime}\right)\geq0$
for all $\left(i,j\right)\in P$; and

\noindent \textbf{(b) }$\left(x_{p}-x_{p}^{\prime}\right)\left(y_{q}-y_{q}^{\prime}\right)\leq0$
for all $\left(p,q\right)\in N$. 

\noindent A P,N concordant pair is a P,N weak concordant pair such
that $\left(\clubsuit\right)$ some of the inequalities in (a) hold
with strict inequality for some $\left(i,j\right)\in P$; or $\left(\spadesuit\right)$
some of the inequalities in (b) hold with strict inequality for some
$\left(p,q\right)\in N$.
\end{defn}
\begin{defn}
\noindent \label{def:11}A P,N concordance improving transfer is a
uniform probability transfer from an N,P weak concordant pair of couples
to the P,N weak concordant pair of couples that is obtained via the
swap between the N,P weak concordant pair of couples. 
\end{defn}
\begin{lem}
\noindent \label{lem:1}Distribution \emph{$M:\mathbb{R}^{K}\times\mathbb{R}^{L}\rightarrow\left[0,1\right]$
}dominates $M^{\prime}:\mathbb{R}^{K}\times\mathbb{R}^{L}\rightarrow\left[0,1\right]$
in P,N modular order if and only if\emph{ }$M$ can be obtained from
$M^{\prime}$ via a sequence of P,N concordance improving transfers.
\end{lem}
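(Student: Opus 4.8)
The plan is to follow the structure of the one-dimensional transfer characterization of Muller and Scarsini (2010) and Meyer and Strulovici (2013) quoted above, splitting the equivalence into a computational sufficiency direction and a duality-based necessity direction.

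\textbf{Sufficiency.} First I would show that a single P,N concordance improving transfer weakly raises $\int Q\,dM$ for every $Q\in\mathbb{C}(P,N)$. Such a transfer moves mass $\alpha\ge 0$ from an N,P weak concordant pair $(x,y),(x',y')$ to the swapped pair $(x,y'),(x',y)$; since swapping the worker attributes reverses every coordinate-pair product, this pair is P,N weak concordant, and the effect on the objective is $\alpha D$ with $D=Q(x,y')+Q(x',y)-Q(x,y)-Q(x',y')$. The key step is to expand $D$ by a double telescoping: interpolate the worker vector from $y$ to $y'$ one coordinate at a time and the firm vector from $x'$ to $x$ one coordinate at a time, obtaining $D=\sum_{k=1}^{K}\sum_{l=1}^{L}\delta_{kl}$, where each $\delta_{kl}$ is the discrete mixed second difference of $Q$ in the single pair $(x_k,y_l)$ with all remaining coordinates frozen at their interpolated values. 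By Definition~\ref{def:1}, $\delta_{kl}$ has the sign of $(x_k-x_k')(y_l'-y_l)$ when $(k,l)\in P$, the opposite sign when $(k,l)\in N$, and vanishes when $(k,l)\notin P\cup N$. The N,P weak concordance of the original pair is exactly the statement that each surviving $\delta_{kl}\ge 0$, so $D\ge 0$; summing the nonnegative contributions over the sequence of transfers then yields $M\succeq_{P,N}M'$.

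\textbf{Necessity.} For the converse I would argue by convex duality. Fixing the (finite) common support and working in the space of finite signed measures on it, the elementary transfers generate a convex cone $\mathcal{K}$ of mass-zero measures, and the first task is the polarity identity $\mathcal{K}^{*}=\mathbb{C}(P,N)$: a function $Q$ satisfies $\int Q\,d\tau\ge 0$ for every elementary transfer $\tau$ if and only if $Q\in\mathbb{C}(P,N)$. The ``only if'' direction is the sufficiency computation. For ``if'', I would isolate each pairwise condition by choosing transfers supported on two couples that agree in all but one firm coordinate $i$ and one worker coordinate $j$: admissibility of such a transfer forces supermodularity of $Q$ in $(x_i,y_j)$ when $(i,j)\in P$ and submodularity when $(i,j)\in N$, while for $(i,j)\notin P\cup N$ both the transfer and its reverse are admissible and jointly force modularity. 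Given this identity, since the support is finite $\mathcal{K}$ is a polyhedral, hence closed, cone, so the bipolar theorem gives $\mathcal{K}=\mathcal{K}^{**}=\{\,m:\int Q\,dm\ge 0\ \text{for all }Q\in\mathbb{C}(P,N)\,\}$. The hypothesis $M\succeq_{P,N}M'$ says precisely that $M-M'$ lies in this set, hence in $\mathcal{K}$; thus $M-M'$ is a nonnegative combination of elementary transfers, and ordering them so that the intermediate measures remain nonnegative realizes $M$ from $M'$ by a transfer sequence.

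\textbf{Main obstacle.} The hard part is necessity, and within it the passage from finitely supported $F,G$ to arbitrary distributions on $\mathbb{R}^{K}\times\mathbb{R}^{L}$: for finite support the step above is exact finite-dimensional Farkas/Weyl--Minkowski duality, but in general one must verify closedness of the transfer cone, approximate $M$ and $M'$ by discrete measures with matching marginals, and pass to the limit along transfer sequences, where the analytic machinery behind the one-dimensional base case of Meyer and Strulovici (2013) can be imported. A useful consistency check is that transfers preserve both marginals, matching the fact that $M\succeq_{P,N}M'$ already forces $F$ and $G$ to agree across the two distributions, since every function of $x$ alone, and every function of $y$ alone, belongs to $\mathbb{C}(P,N)$ together with its negative.
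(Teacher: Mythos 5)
Your proposal is correct and follows essentially the same route as the paper's proof: sufficiency by the identical coordinate-by-coordinate telescoping of the mixed double difference into pairwise terms whose signs are pinned down by P,N modularity, and necessity by the identical dual-cone/bipolar argument on the cone generated by elementary transfer vectors (the paper cites Luenberger 1969 where you invoke polyhedrality and the bipolar theorem). If anything, you are more explicit than the paper on two points it glosses over, namely the converse half of the polarity identity (that nonnegativity on all elementary transfers forces membership in $\mathbb{C}\left(P,N\right)$) and the closedness of the transfer cone when the support is countably infinite rather than finite.
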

This alternative characterization of dominance in P,N modular order
allows me to obtain a finer description of the set of optimal matching
distributions. Consider matching distribution $M\in\mathcal{M}\left(F,G\right)$
under which there exists an N,P concordant pair of matched couples.
Due to Lemma \ref{lem:1}, it is easy to see that a swap between the
N,P concordant pair of couples (strictly) increases the aggregate
output when the output function is (strictly) P,N modular. Consequently,
a matching distribution under which there exists an N,P concordant
pair of matched couples cannot be obtained as optimal for strictly
P,N modular output functions. This observation rules out the nonoptimal
matching distribution illustrated in Example \ref{ex:2}, and allows
me to define a new sorting pattern. 
\begin{defn}
\label{def:12}Matching distribution $M\in\mathcal{M}\left(F,G\right)$
satisfies weak P,N sorting if there does not exist an N,P concordant
pair of couples with a positive mass under $M$.
\end{defn}
\noindent The next proposition establishes the link between the proposed
sorting patterns and the set of optimal matching distributions for
the proposed complementarity structures.
\begin{prop}[Multidimensional Sorting]
\label{prop:1}\ 

\noindent \textbf{1.} For arbitrary distributions of the agents $F$
and $G$, and two disjoint sets of firm-worker attribute pairs $P$
and $N$,

\noindent \textbf{1.a)} every weak P,N assortative matching distribution
satisfies within-group P,N sorting;

\noindent \textbf{1.b)} for every $Q\in\mathbb{C}_{+}\left(P,N\right)$,
every solution to the planner's problem satisfies weak P,N sorting;
and

\noindent \textbf{1.c)} for every $Q\in\mathbb{C}\left(P,N\right)$,
there exists a weak P,N assortative matching distribution that solves
the planner's problem.

\noindent \textbf{2.} Suppose that there exists $M\in\mathcal{M}\left(F,G\right)$
that satisfies global P,N sorting. Then,

\noindent \textbf{2.a)} for every $Q\in\mathbb{C}_{+}\left(P,N\right)$,
every solution to the planner's problem satisfies global P,N sorting;
and

\noindent \textbf{2.b)} for every $Q\in\mathbb{C}\left(P,N\right)$,
$M$ solves the planner's problem. 
\end{prop}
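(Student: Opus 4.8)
The whole argument runs on one consequence of Lemma \ref{lem:1}, recorded in the text just before Definition \ref{def:12}: swapping the partners of an N,P weak concordant pair of matched couples weakly raises $\int Q\,dM$ for every $Q\in\mathbb{C}(P,N)$, and strictly raises it for every $Q\in\mathbb{C}_{+}(P,N)$ when the pair is genuinely N,P concordant. Part 1.a is then purely combinatorial and needs no optimization. If within-group P,N sorting fails, there are $(x,y),(x',y')\in supp(M)$ agreeing in every coordinate except a single firm coordinate $i$ and a single worker coordinate $j$, with the lone surviving product $(x_i-x_i')(y_j-y_j')$ carrying the wrong sign. Because in every other coordinate pair at least one of the two entries coincides across the couples, all remaining pairwise products vanish, so the pair is automatically N,P weak concordant and the wrong-signed product makes it N,P concordant; carrying positive mass, it violates weak P,N sorting (Definition \ref{def:12}), and contraposition yields 1.a (the $N$-violation is symmetric). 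Part 1.b is the matching contradiction: if a planner-optimal $M$ contained an N,P concordant pair of positive mass, the associated swap would strictly increase the objective for $Q\in\mathbb{C}_{+}(P,N)$, contradicting optimality, so every optimum is weakly P,N sorted.

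For 1.c I would first secure an optimizer: $\mathcal{M}(F,G)$ is tight and compact in the weak topology and $M\mapsto\int Q\,dM$ is upper semicontinuous under the paper's regularity, so the optimal set $\mathcal{O}$ is a nonempty compact subset. The idea is to break ties inside $\mathcal{O}$ with a strictly modular potential. Fix the separable bilinear $\Phi(x,y)=\sum_{(i,j)\in P}x_iy_j-\sum_{(p,q)\in N}x_py_q\in\mathbb{C}_{+}(P,N)$ and let $M^{\star}$ maximize $\int\Phi\,dM''$ over $\mathcal{O}$. If $M^{\star}$ violated weak P,N sorting it would carry an N,P concordant pair; the improving swap keeps $\int Q$ at its optimal value (weak monotonicity plus optimality force equality, so the swapped distribution stays in $\mathcal{O}$) while strictly raising $\int\Phi$, contradicting the choice of $M^{\star}$. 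Hence $M^{\star}$ is optimal and weakly P,N sorted, which is 1.c; each swap preserves both marginals, so feasibility is never lost.

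Part 2.b is exactly the claim that a globally P,N sorted $M$ is P,N dominant (Definition \ref{def:9}), since ``$M$ solves the planner's problem for every $Q\in\mathbb{C}(P,N)$'' unpacks through Definition \ref{def:8} into $M\succeq_{P,N}M'$ for all $M'$; by Lemma \ref{lem:1} this reduces to showing $M$ is reachable from an arbitrary $M'$ by P,N concordance improving transfers. Two facts organize the reduction. First, for the separable $\Phi$ above a coordinate-pairwise Fr\'echet--Hoeffding argument shows $\int\Phi\,dM''$ is maximized over $\mathcal{M}(F,G)$ \emph{precisely} at the globally sorted distributions: each $\int x_iy_j\,dM''$ is individually maximized by comonotone coupling of $(x_i,y_j)$ and each $\int x_py_q\,dM''$ minimized by countermonotone coupling, and global-sorting feasibility (the hypothesis of Part 2) is exactly what makes these one-dimensional extremes simultaneously attainable. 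Second, a fully neutral transfer, one whose swapped pair has every $P\cup N$ product equal to zero, leaves $\int\Psi\,dM''$ unchanged for every $\Psi\in\mathbb{C}(P,N)$, because the mixed second difference of a P,N modular function across such a pair vanishes termwise. Granting reachability, and hence 2.b, Part 2.a follows cleanly: for $Q\in\mathbb{C}_{+}(P,N)$ any optimum $M^{*}$ satisfies $\int Q\,dM^{*}=\int Q\,dM$, so the transfer sequence from $M^{*}$ to $M$ must be entirely neutral (a strict step would strictly raise $\int Q$); neutral steps preserve $\int\Phi$, whence $\int\Phi\,dM^{*}$ attains the maximum, and the Fr\'echet--Hoeffding characterization forces $M^{*}$ to be globally sorted.

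The hard part is the reachability underlying 2.b, and the obstruction is precisely the phenomenon of matching scheme-3 in the introduction: a distribution can be weakly P,N sorted yet not globally sorted and contain no single N,P concordant pair, so greedy improvement stalls even while $\int\Phi$ sits below its maximum. Resolving this requires showing that global-sorting feasibility forbids genuine dead-ends, i.e. that from any non-globally-sorted $M'$ one can perform neutral transfers to realign the support until a strict N,P concordant pair is exposed (the swap-1, swap-2, swap-3 pattern) and then iterate toward a globally sorted target. I would establish this by an explicit uncrossing algorithm that uses the given globally sorted $M$ as a reference and a finite potential (such as $\int\Phi$, or the mass on which $M'$ disagrees with $M$) to force termination, together with a companion lemma that any two globally sorted distributions with the same marginals are linked by neutral transfers, hence P,N equivalent, so that the reached target is interchangeable with $M$. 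Controlling these neutral realignment moves in full generality is the one place where the one-dimensional Becker argument genuinely breaks; everything else assembles from Lemma \ref{lem:1}, the swap corollary, compactness, and the separable Fr\'echet--Hoeffding bound.
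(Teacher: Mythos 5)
Your Part 1 and your derivation of 2.a are essentially sound. Parts 1.a and 1.b coincide with the paper's own arguments (a within-group violation has all other pairwise products equal to zero, hence is an N,P concordant pair; a swap of such a pair strictly raises output for $Q\in\mathbb{C}_{+}\left(P,N\right)$). For 1.c the paper proceeds differently: it takes an optimal $M$, applies the transfer correspondence $U$ of Equation (\ref{eq:10}), and uses the Kakutani fixed point theorem to extract a P,N undominated distribution reachable from $M$, which is then weakly P,N sorted by the paper's Claim (a); you instead tie-break over the optimal set with a strictly P,N modular potential $\Phi$. Both routes are valid conditional on the same unaddressed technicality (existence of an optimum, which the paper simply assumes in its Case 2), so this counts as a legitimate alternative argument. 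Your derivation of 2.a from 2.b is in fact tighter than the paper's: the paper's sentence ``thus there exists an N,P concordant pair of matched couples under $M$'' does not literally follow from failure of global sorting (a violating pair can have mixed-sign products), whereas your neutral-transfer argument closes exactly that hole.

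The genuine gap is where you admit it is: 2.b. You correctly reduce it, via Definitions \ref{def:8}--\ref{def:9} and Lemma \ref{lem:1}, to the claim that a globally P,N sorted $M$ is reachable from every $M^{\prime}\in\mathcal{M}\left(F,G\right)$ by P,N concordance improving transfers, but you then leave that claim as a program (an ``uncrossing algorithm'' with neutral realignment moves) rather than a proof, so the proposal does not establish 2.b or, consequently, 2.a. The paper closes this step with its standalone Claim (b): the P,N dominant distributions coincide with the globally P,N assortative ones, where the $\left(\Leftarrow\right)$ direction observes that every pair of matched couples under $M$ is P,N weak concordant and concludes from this that $\textrm{vec}M-\textrm{vec}M^{\prime}=\sum_{t}\alpha_{t}t$ with $\alpha_{t}\geq0$ over P,N transfer vectors, after which Lemma \ref{lem:1} delivers dominance. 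Relative to the paper, what you are missing is precisely this cone-decomposition step: all mass that must be created in passing from $M^{\prime}$ to $M$ sits in $supp\left(M\right)$, whose pairwise weak concordance is what licenses the transfers. That said, your suspicion is well founded: the paper supplies this decomposition in a single sentence, and that sentence does not engage with the obstruction you exhibit (correction cycles of length greater than four, where splitting into quadruples creates couples outside $supp\left(M\right)$ of unknown concordance status --- the matching scheme-3 phenomenon). A complete argument must either prove the decomposition (flow-decompose $M-M^{\prime}$ into cycles and uncross them, using neutral transfers to handle long cycles), or bypass reachability by showing directly that $\int QdM\geq\int QdM^{\prime}$ for every $Q\in\mathbb{C}\left(P,N\right)$ --- for instance, by noting that pairwise modularity outside $P\cup N$ forces $Q$, up to separable terms, to decompose across connected components of the attribute graph on $P\cup N$, that global sorting totally aligns all attribute orderings within each component, and then running the one-dimensional rearrangement induction component by component on projections --- after which reachability follows from Lemma \ref{lem:1} for free. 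Since your write-up neither completes the uncrossing nor takes this alternative route, 2.b stands unproven in the proposal, whereas the paper's proof (however tersely) rests on the Claim that supplies it.
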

Proposition \ref{prop:1} summarizes the relationship between the
proposed sorting patterns and multidimensional complementarities.
First, it states that within-group P,N sorting is the least informative
sorting pattern among all. Secondly, it offers a partial identification
of the set of optimal matching distributions: (i) the set of optimal
matching distributions and the set of weak P,N assortative matching
distributions intersect for P,N modular output functions; and (ii)
the set of optimal matching distributions is a subset of the set of
weak P,N assortative matching distributions for strictly P,N modular
output functions. On the other hand, a weak P,N assortative matching
distribution cannot be optimal for any strictly P,N modular output
function when it is strictly dominated by another matching distribution.
This observation immediately follows from the fact that the absence
of N,P concordant pairs is necessary but not sufficient for undominance
in the P,N modular order. Altogether, Proposition \ref{prop:1} suggests
that P,N sorting is a suitable general sorting class to study the
multidimensional matching markets.

Although it provides new insights into optimal matching with multidimensional
agents, this sorting result is not very practical for empirical purposes.
Proposition \ref{prop:1} states that one cannot observe N,P concordant
pairs under strictly P,N modular functions when every outcome-relevant
attribute is observed by econometricians. In practice, econometricians
only observe sorting patterns between observed attributes that are
aggregated over unobserved characteristics. Therefore, linking production
complementarities between observed attributes to matching patterns
between these attributes when outcome-relevant and unobserved characteristics
are present is of theoretical and empirical interest. In the next
section, I examine a matching model in which agents have outcome-relevant
and unobserved characteristics in addition to their observed attributes.

\section{\label{sec:2}Sorting with Unobserved Characteristics}

Choo and Siow (2006) propose a matching model with multidimensional
agents in which the agents have outcome-relevant characteristics that
are not observed by econometricians. In this section, I examine a
homoskedastic extension of their model. 

Consider a two-sided matching model with equal numbers of firms and
workers. Here, firm $f\in\mathbb{F}$ is described by full attribute
vector $\tilde{x}^{f}$, and $\tilde{y}^{w}$ describes worker $w\in\mathbb{W}$.
Matching scheme $\boldsymbol{\tilde{m}}=\left\{ \tilde{m}_{fw}\right\} $
is a matrix such that the cell value associated with firm $f$ and
worker $w$, i.e. $\tilde{m}_{fw}$, equals one if firm $f$ and worker
$w$ are matched, and it equals zero otherwise. Let $x^{f}\in\mathbb{R}^{K}$
denote the observable attributes of firm $f$, and $y^{w}\in\mathbb{R}^{L}$
denote the observable attributes of worker $w$. The matching output
produced by firm $f$ and worker $w$, denoted by $\tilde{Q}\left(\tilde{x}^{f},\tilde{y}^{w}\right)$,
is determined by firm $f$'s and worker $w$'s full attributes. An
optimal matching matrix maximizes the aggregate output. 

Notice that an optimal matching matrix is determined by agents' full
attributes. However, econometricians observe only observable attributes.
Consequently, the empirical goal is to estimate the complementarities
between observable attributes by using the optimal matching density
function between observable attributes implied by an optimal matching
matrix. Optimal matching density function $m:\mathbb{R}^{K}\times\mathbb{R}^{L}\rightarrow\left[0,1\right]$,
associated with matching matrix $\boldsymbol{\tilde{m}}=\left\{ \tilde{m}_{fw}\right\} $,
is defined through aggregation over unobserved characteristics:
\noindent \begin{center}
$m\left(x,y\right)=\dfrac{\underset{f}{\sum}\underset{w}{\sum}\tilde{m}_{fw}1_{\left\{ x^{f}=x\ and\ y^{w}=y\right\} }}{\underset{f}{\sum}\underset{w}{\sum}\tilde{m}_{fw}}.$
\par\end{center}

\noindent Notice that if the effect of unobserved characteristics
on the optimal matching between observables cannot be controlled by
observed attributes, then one cannot consistently estimate the preferences
on observed attributes. Example \ref{ex:4} demonstrates two channels
through which the unobserved characteristics affect optimal matching
between observables, in a way that cannot be explained by observed
attributes when the distributions of unobserved characteristics conditional
on observed attributes are unknown. 
\begin{example}
\label{ex:4}Consider a matching market with two firms: $\left\{ \left(10,10\right),\left(20,20\right)\right\} $,
and two workers: $\left\{ \left(20,10\right),\left(10,20\right)\right\} $.
Suppose that econometricians do not observe the first characteristics,
and the second attributes are observable. 

\emph{Complementarity between unobserved characteristics:} $\tilde{Q}\left(\tilde{x}^{f},\tilde{y}^{w}\right)=5\tilde{x}_{1}^{f}\tilde{y}_{1}^{w}+2\tilde{x}_{2}^{f}\tilde{y}_{2}^{w}$

\emph{Complementarity between unobserved and observed attributes:}
$\tilde{Q}\left(\tilde{x}^{f},\tilde{y}^{w}\right)=2\tilde{x}_{2}^{f}\tilde{y}_{1}^{w}+\tilde{x}_{2}^{f}\tilde{y}_{2}^{w}$

\noindent For these two matching output functions, the following optimal
matching matrix and density are obtained. 

\noindent 
\begin{table}[H]
\begin{minipage}[t]{0.49\columnwidth}%
\noindent \begin{center}
\begin{tabular}{c|c|c|c|}
\multicolumn{1}{c}{} &  & \multicolumn{2}{c|}{Worker}\tabularnewline
\cline{3-4} \cline{4-4} 
\multicolumn{1}{c}{} &  & $\left(20,10\right)$ & $\left(10,20\right)$\tabularnewline
\hline 
\multirow{2}{*}{Firm} & $\left(10,10\right)$ & 0 & 1\tabularnewline
\cline{2-4} \cline{3-4} \cline{4-4} 
 & $\left(20,20\right)$ & 1 & 0\tabularnewline
\hline 
\end{tabular}
\par\end{center}
\noindent \begin{center}
Optimal matching matrix
\par\end{center}%
\end{minipage}%
\begin{minipage}[t]{0.49\columnwidth}%
\noindent \begin{center}
\begin{tabular}{c|c|c|c|}
\multicolumn{1}{c}{} &  & \multicolumn{2}{c|}{Worker}\tabularnewline
\cline{3-4} \cline{4-4} 
\multicolumn{1}{c}{} &  & $10$ & $20$\tabularnewline
\hline 
\multirow{2}{*}{Firm} & $10$ & 0 & .5\tabularnewline
\cline{2-4} \cline{3-4} \cline{4-4} 
 & $20$ & .5 & 0\tabularnewline
\hline 
\end{tabular}
\par\end{center}
\noindent \begin{center}
Optimal matching density
\par\end{center}%
\end{minipage}
\end{table}

\noindent Note that the optimal matching between firms' and workers'
second attributes exhibits negative sorting. Negative sorting is consistent
with negative complementarity and cannot be obtained under strictly
positive complementarity according to Becker's (1973) sorting results.
Based on this observation, econometricians may infer negative complementarity
between firms' and workers' second attributes whereas the underlying
process $\tilde{Q}$ exhibits positive complementarity between these
attributes. 
\end{example}
In order to limit the outcome-relevance of unobserved characteristics,
three key assumptions are adopted in empirical matching literature.
\begin{assumption}
\label{assu:1} There is a large number of agents for each observed
type. 
\end{assumption}
Under the large market assumption, one can focus on the asymptotic
properties of optimal matching. In this context, the small sample
properties of optimal sorting patterns remain an open question. 
\begin{assumption}
\label{assu:2} Matching output function $\tilde{Q}$ can be expressed
as follows:
\noindent \begin{center}
$\tilde{Q}\left(\tilde{x}^{f},\tilde{y}^{w}\right)=Q\left(x^{f},y^{w}\right)+\varepsilon_{f}\left(\tilde{x}^{f},y^{w}\right)+\eta_{w}\left(x^{f},\tilde{y}^{w}\right).$
\par\end{center}

\end{assumption}
Here, the first part of the matching output is determined by observed
attributes. The deterministic matching complementarities are to be
estimated by using empirical matching density. However, notice that
the matching outcome is not determined solely by the deterministic
matching complementarities. The last two parts of the equation represent
idiosyncratic production shocks that affect the matching outcome.
This separability assumption rules out complementarities between unobserved
characteristics. By doing so, it allows for a tractable aggregation
over unobserved characteristics. 
\begin{rem}
\label{rem:1}Consider two firms $f$ and $f^{\prime}$, and two workers
$w$ and $w^{\prime}$ such that $x^{f}=x^{f^{\prime}}=x$ and $y^{w}=y^{w^{\prime}}=y$.
For these four agents, Assumption \ref{assu:2} implies that
\noindent \begin{center}
$\tilde{Q}\left(\tilde{x}^{f},\tilde{y}^{w}\right)+\tilde{Q}\left(\tilde{x}^{f^{\prime}},\tilde{y}^{w^{\prime}}\right)=\tilde{Q}\left(\tilde{x}^{f^{\prime}},\tilde{y}^{w}\right)+\tilde{Q}\left(\tilde{x}^{f},\tilde{y}^{w^{\prime}}\right).$
\par\end{center}

\end{rem}
Remark \ref{rem:1} states that unobserved aggregate output generated
by pairs of couples $\left(f,w\right)$ and $\left(f^{\prime},w^{\prime}\right)$
that have the same observed attributes does not change with a partner-swap
between these two pairs. Assumptions \ref{assu:1} and \ref{assu:2}
make it is possible to link optimal matching between full attributes
to the one between observed attributes when the distribution of $\varepsilon_{f}$
conditional on $x$ and the distribution of $\eta_{w}$ conditional
on $y$ are known. 
\begin{assumption}
\label{assu:3}The idiosynratic production shocks satisfy the following
conditions:

\noindent \textbf{(a)} for all $f\in\mathbb{F}$ such that $x^{f}=x$,
$\boldsymbol{\varepsilon^{f}}=\left\{ \varepsilon_{f}\left(\tilde{x}^{f},y\right)\right\} _{y}$
is drawn from probability distribution $\mathcal{F}_{x}$; and 

\noindent \textbf{(b)} for all $w\in\mathbb{W}$ such that $y^{w}=y$,
$\boldsymbol{\eta^{w}}=\left\{ \eta_{w}\left(x,\tilde{y}^{w}\right)\right\} _{x}$
is drawn from probability distribution $\mathcal{G}_{y}$. 
\end{assumption}
Under these assumptions, the full attribute vector of firm $f$ can
be represented by $\left(x^{f},\boldsymbol{\varepsilon^{f}}\right)$,
where probability distribution of $\varepsilon_{f}\left(\tilde{x}^{f},y\right)$
conditional on $x^{f}=x$ is $\mathcal{F}_{x}$. Similarly, the full
attribute vector of worker $w$ can be represented by $\left(y^{w},\boldsymbol{\eta^{w}}\right)$,
where probability distribution of $\eta_{w}\left(x,\tilde{y}^{w}\right)$
conditional on $y^{w}=y$ is $\mathcal{G}_{y}$. For example, Choo
and Siow (2006) assume that $\mathcal{F}_{x}$ and $\mathcal{G}_{y}$
are standard Gumbel distributions, i.e. $Pr\left\{ \varepsilon_{f}\left(\tilde{x}^{f},y\right)\leq\varepsilon|x^{f}=x\right\} =\exp\left\{ -\exp\left\{ -\varepsilon\right\} \right\} $.
In this section, I consider a simple homoskedastic extension of this
distributional assumption by relaxing Galichon and Salanie's (2010)
assumption. Let the distribution of $\varepsilon_{f}$ conditional
on $x_{w}=x$ be a Gumbel distribution with location parameter $\alpha\left(x\right)$
and scale parameter $\sigma$. Similarly, let the distribution of
$\eta_{w}$ conditional on $y_{w}=y$ be a Gumbel distribution with
location parameter $\beta\left(y\right)$ and scale parameter $\delta$. 
\noindent \begin{center}
$\begin{array}{c}
Pr\left\{ \varepsilon_{f}\left(\tilde{x}^{f},y\right)\leq\varepsilon|x^{f}=x\right\} =\exp\left\{ -\exp\left\{ -\left(\dfrac{\varepsilon-\alpha\left(x\right)}{\sigma}\right)\right\} \right\} \\
\\
Pr\left\{ \eta_{w}\left(x,\tilde{y}^{w}\right)\leq\eta|y^{w}=y\right\} =\exp\left\{ -\exp\left\{ -\left(\dfrac{\eta-\beta\left(y\right)}{\delta}\right)\right\} \right\} 
\end{array}$
\par\end{center}

Galichon and Salanie (2010) establish the uniqueness of optimal matching
density. Furthermore, they present a relationship between optimal
matching density function and double difference of the deterministic
output function, $Q:\mathbb{R}^{K}\times\mathbb{R}^{L}\rightarrow\mathbb{R}$,
when $\alpha\left(x\right)=0$ and $\beta\left(y\right)=0$ for all
$x$ and $y$. Remember that the difference between two independent
random variables following a Gumbel distribution with the same location
and scale parameters follows a logistic distribution with zero location
parameter. Consequently, Galichon and Salanie's (2010) results hold
with type-dependent location parameters as well. 
\begin{lem}
\label{lem:2}The optimal matching density function of observable
attributes $m:\mathbb{R}^{K}\times\mathbb{R}^{L}\rightarrow\left[0,1\right]$
satisfies the following condition for all $x,x^{\prime}\in\mathbb{R}^{K}$
and $y,y^{\prime}\in\mathbb{R}^{L}$\emph{:}

\begin{equation}
\log\left\{ \dfrac{m\left(x,y\right)m\left(x^{\prime},y^{\prime}\right)}{m\left(x^{\prime},y\right)m\left(x,y^{\prime}\right)}\right\} =\left(\sigma+\delta\right)^{-1}\left\{ Q\left(x,y\right)+Q\left(x^{\prime},y^{\prime}\right)-Q\left(x^{\prime},y\right)-Q\left(x,y^{\prime}\right)\right\} .\label{eq:1}
\end{equation}
\end{lem}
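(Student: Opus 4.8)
The plan is to derive the matching density from a discrete-choice representation of the planner's problem under the Gumbel assumptions, exploiting the separability in Assumption~\ref{assu:2} and the large-market condition in Assumption~\ref{assu:1}. The key observation is that equation~\eqref{eq:1} is a statement about the \emph{double difference} (the supermodular-order kernel) of $\log m$, so the proof should reduce to computing the log-odds ratio of matching probabilities across the four observed types $x,x',y,y'$ and showing it equals a scaled double difference of $Q$.

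First I would set up the optimal-matching characterization at the level of full attributes. Under Assumption~\ref{assu:2}, the output splits as $\tilde{Q}(\tilde{x}^f,\tilde{y}^w)=Q(x^f,y^w)+\varepsilon_f(\tilde{x}^f,y^w)+\eta_w(x^f,\tilde{y}^w)$; by Remark~\ref{rem:1} this separability kills any genuine complementarity among unobservables, so the planner's assignment factors through a collection of discrete choices. Concretely, in the large market, I would argue that firm $f$ of observed type $x$ is effectively allocated to the observed worker-type $y$ that maximizes a surplus of the form $\tfrac12 Q(x,y)+\varepsilon_f(\cdot,y)+(\text{a type-}y\text{ price/value term})$, and symmetrically for workers, so that matching reduces to a two-sided logit sorting. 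This is precisely the structure Galichon and Salanie (2010) analyze; I would invoke their uniqueness-and-identification result, noting only that I must adapt it to (i) the scale parameters $\sigma,\delta$ rather than unit scale, and (ii) the nonzero location parameters $\alpha(x),\beta(y)$.

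Next I would carry out the actual density computation. Under the Gumbel law with scale $\sigma$, the conditional choice probabilities of type-$x$ firms over worker-types take the multinomial-logit form, so $\log m(x,y)$ decomposes additively as $\log m(x,y)=\sigma^{-1}U(x,y)+\text{(row term in }x)+\text{(column term in }y)+\delta^{-1}V(x,y)+\cdots$, where $U,V$ are the firm-side and worker-side value components. The crucial point is that when I form the cross-ratio $\log\{m(x,y)m(x',y')/[m(x',y)m(x,y')]\}$, every additive term depending on $x$ alone or $y$ alone cancels, leaving only the double difference of the full systematic surplus. Because the firm and worker value components must be consistent with a single transferable-utility equilibrium surplus $Q$, their contributions combine into the single factor $(\sigma+\delta)^{-1}$ multiplying the double difference of $Q$; the location parameters $\alpha(x),\beta(y)$ enter additively and likewise cancel in the cross-ratio. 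This is where I would use the fact, stated in the excerpt, that a difference of two i.i.d. Gumbel variables with common scale is logistic with the scales adding, which is what produces $\sigma+\delta$ in the denominator rather than $\sigma$ or $\delta$ separately.

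\textbf{The main obstacle} is the bookkeeping that shows the firm-side and worker-side inclusive values combine to give exactly $(\sigma+\delta)^{-1}$ times the double difference of $Q$, with all row, column, and location terms cancelling. Equivalently, I must verify that Galichon and Salanie's identity survives the homoskedastic generalization: replacing unit scales by $\sigma,\delta$ rescales the surplus, and replacing zero locations by $\alpha(x),\beta(y)$ adds separable row/column shifts that are annihilated by the double difference. I would make this rigorous by writing the equilibrium conditions for the logit matching — the balance between firms' and workers' conditional choice probabilities at a common split of the surplus $Q(x,y)$ — and then taking the double difference explicitly across $x,x',y,y'$; the location parameters and the single-argument equilibrium-value terms drop out by construction, and the two scale parameters aggregate into $\sigma+\delta$ via the logistic convolution, yielding~\eqref{eq:1}.
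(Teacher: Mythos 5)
Your proposal follows essentially the same route as the paper: the paper's entire justification of Lemma \ref{lem:2} is to cite Galichon and Salanie (2010) for the case $\alpha\left(x\right)=\beta\left(y\right)=0$ and then observe that type-dependent location parameters are harmless because the difference of two Gumbel variables with common location and scale follows a zero-location logistic distribution, which is exactly the cancellation you obtain via the cross-ratio; your writeup simply spells out the two-sided logit mechanics that the paper leaves to the citation. One imprecision worth correcting: the factor $\left(\sigma+\delta\right)^{-1}$ is not produced by a ``logistic convolution'' of the two sides' shocks (the difference of two common-scale Gumbels is logistic with that \emph{same} scale, not the sum, and the difference of a scale-$\sigma$ and a scale-$\delta$ Gumbel is not logistic at all); it arises, as you also correctly state elsewhere in the same paragraph, from combining the firm-side logit equation (which gives the double difference of $\log m$ as $\sigma^{-1}$ times that of $U$) with the worker-side equation (scale $\delta$, value $V$) under the equilibrium surplus split $U+V=Q$.
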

Lemma \ref{lem:2} offers a simple relationship between the complementarity
structure and optimal sorting pattern between observable attributes.
Based on this relationship, Siow (2015) derives a semi-parametric
identification strategy\textbf{ }for matching markets in which each
agent has only one observable attribute, i.e. $K=L=1$. He shows that
the deterministic output function is supermodular(submodular) if and
only if the left-hand side of Equation (\ref{eq:1}) is greater(less)
than $1$ for all $x<x^{\prime}$ and $y<y^{\prime}$. I establish
a similar result by using P,N modularity.
\begin{defn}
\label{def:13}Matching density function $m^{\prime}:\mathbb{R}^{K}\times\mathbb{R}^{L}\rightarrow\left[0,1\right]$
is log P,N modular if and only if $\log m^{\prime}$ is P,N modular.
\end{defn}
\begin{prop}
\label{prop:2}Optimal matching density between observable attributes
is log P,N modular if and only if the deterministic output function
is P,N modular.
\end{prop}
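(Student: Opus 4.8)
The plan is to prove Proposition \ref{prop:2} by leveraging the explicit formula in Lemma \ref{lem:2} together with Definition \ref{def:1} of $i,j$ pairwise supermodularity. Recall that Definition \ref{def:13} says $m$ is log P,N modular precisely when $\log m$ is P,N modular, and by Definition \ref{def:2} this means $\log m$ is strictly (or weakly) $i,j$ pairwise supermodular for $(i,j)\in P$, $p,q$ pairwise submodular for $(p,q)\in N$, and pairwise modular for the remaining coordinate pairs. So the entire proposition reduces to a statement about the \emph{pairwise} second differences of $\log m$ and of $Q$ matching up, coordinate pair by coordinate pair.

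First I would fix an arbitrary coordinate pair $(i,j)$ and two points differing only in the $i^{th}$ firm-coordinate and the $j^{th}$ worker-coordinate, holding all other coordinates fixed at common values $(x_{-i},y_{-j})$. Concretely, take $x,x^{\prime}\in\mathbb{R}^{K}$ with $x_{k}=x_{k}^{\prime}$ for all $k\neq i$ and $x_{i}<x_{i}^{\prime}$, and $y,y^{\prime}\in\mathbb{R}^{L}$ with $y_{l}=y_{l}^{\prime}$ for all $l\neq j$ and $y_{j}<y_{j}^{\prime}$. Applying Equation (\ref{eq:1}) to exactly these four points gives
\begin{equation*}
\log m(x,y)+\log m(x^{\prime},y^{\prime})-\log m(x^{\prime},y)-\log m(x,y^{\prime})=(\sigma+\delta)^{-1}\bigl\{Q(x,y)+Q(x^{\prime},y^{\prime})-Q(x^{\prime},y)-Q(x,y^{\prime})\bigr\}.
\end{equation*}
The left-hand side is exactly the supermodularity second difference of $\log m$ in the single pair of arguments $(x_{i},y_{j})$ with the remaining coordinates held fixed, and the right-hand side is $(\sigma+\delta)^{-1}$ times the corresponding second difference of $Q$. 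Since $\sigma,\delta>0$ are scale parameters, the factor $(\sigma+\delta)^{-1}$ is strictly positive, so the sign of the $\log m$ second difference equals the sign of the $Q$ second difference, and one is nonnegative (respectively nonpositive, zero) if and only if the other is.

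From here the proposition follows by quantifying over the choices. Because the common values $(x_{-i},y_{-j})$ were arbitrary, the equivalence holds for \emph{all} such off-coordinate values, which is exactly the ``for all $(x_{-i},y_{-j})$'' condition in Definition \ref{def:1}. Thus $\log m$ is $i,j$ pairwise supermodular if and only if $Q$ is $i,j$ pairwise supermodular; $\log m$ is $p,q$ pairwise submodular if and only if $Q$ is $p,q$ pairwise submodular (apply the positivity of $(\sigma+\delta)^{-1}$ to the reversed inequality, or invoke the submodularity of $-Q$ and $-\log m$); and $\log m$ is $m,n$ pairwise modular if and only if $Q$ is, by combining the two. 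Assembling these coordinate-pairwise equivalences according to Definition \ref{def:2} gives that $\log m$ exhibits P,N modularity if and only if $Q$ does, i.e. $m$ is log P,N modular if and only if $Q$ is P,N modular.

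The argument is essentially a direct translation of Lemma \ref{lem:2} into the language of pairwise modularity, so I expect no serious obstacle. The one point deserving care is the reduction from the four-point identity in Lemma \ref{lem:2} to the pairwise second-difference conditions: Definition \ref{def:1} requires varying only $(x_i,y_j)$ while \emph{all} other coordinates stay fixed, and one must check that Equation (\ref{eq:1}) applied to four points differing solely in these two coordinates indeed reproduces precisely the pairwise supermodularity increment for that one coordinate pair. This is immediate once the four evaluation points are chosen to agree off of coordinates $i$ and $j$, but it is the step where the generality of Lemma \ref{lem:2} (which allows $x,x^{\prime},y,y^{\prime}$ to differ in all coordinates at once) must be specialized correctly. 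The strict case and the weak case are handled uniformly by the strict positivity of $(\sigma+\delta)^{-1}$, so no separate treatment of $\mathbb{C}_{+}(P,N)$ versus $\mathbb{C}(P,N)$ is needed.
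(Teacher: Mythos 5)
Your proof is correct and follows exactly the route the paper intends: the paper's own proof of Proposition \ref{prop:2} consists solely of the remark that it is ``immediate from Lemma \ref{lem:2},'' and your argument is precisely the specialization of Equation (\ref{eq:1}) to four points differing only in coordinates $(x_i,y_j)$, using $(\sigma+\delta)^{-1}>0$ to transfer the sign of the second difference. You have simply made explicit the details the paper leaves to the reader.
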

\begin{cor}
\label{cor:1}The probability of observing P,N concordant pairs relative
to N,P concordant pairs is higher when the deterministic output function
is P,N modular.
\end{cor}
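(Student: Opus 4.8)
The plan is to invoke the exact formula of Lemma \ref{lem:2} and reduce the statement to a sign property of the double difference of the deterministic output $Q$. First I would fix an arbitrary P,N concordant pair of couples $(x,y),(x',y')$ in the sense of Definition \ref{def:10} and record that its partner-swap $(x,y'),(x',y)$ is N,P concordant: for every $(i,j)\in P$ one has $(x_i-x_i')(y_j'-y_j)=-(x_i-x_i')(y_j-y_j')\le 0$, and for every $(p,q)\in N$ one has $(x_p-x_p')(y_q'-y_q)\ge 0$, which are precisely the defining inequalities of N,P (weak) concordance, with strictness inherited from the original pair. Drawing two couples independently from the optimal matching density $m$, the likelihood of observing the P,N concordant configuration relative to the N,P concordant configuration obtained by swapping is the odds ratio $\frac{m(x,y)m(x',y')}{m(x,y')m(x',y)}$, so it suffices to show this ratio is at least one (strictly greater in the strict case).

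By Lemma \ref{lem:2}, the logarithm of this odds ratio equals $(\sigma+\delta)^{-1}$ times the double difference $D\coloneqq Q(x,y)+Q(x',y')-Q(x',y)-Q(x,y')$. Since $(\sigma+\delta)^{-1}>0$, the corollary reduces to the claim that $D\ge 0$ whenever $Q$ is P,N modular and $(x,y),(x',y')$ is P,N concordant. (Equivalently, one may apply the same argument to $\log m$ itself, which is P,N modular by Proposition \ref{prop:2}.)

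To prove this claim I would decompose $D$ into elementary pairwise double differences. Introduce intermediate firm vectors $x^{(0)}=x',\dots,x^{(K)}=x$, where $x^{(k)}$ and $x^{(k-1)}$ differ only in their $k$-th coordinate, and analogous worker vectors $y^{(0)}=y',\dots,y^{(L)}=y$. Telescoping first in the firm coordinates and then in the worker coordinates writes $D=\sum_{k=1}^{K}\sum_{l=1}^{L}D_{k,l}$, where $D_{k,l}$ is the double difference of $Q$ in $(x_k,y_l)$ alone, all remaining coordinates frozen at the intermediate values. Each $D_{k,l}$ is then pinned down by the corresponding pairwise modularity of Definition \ref{def:2}: for $(k,l)\in P$ the function is $k,l$ pairwise supermodular and P,N concordance gives $(x_k-x_k')(y_l-y_l')\ge 0$, so $D_{k,l}\ge 0$; for $(k,l)\in N$ it is $k,l$ pairwise submodular while concordance gives $(x_k-x_k')(y_l-y_l')\le 0$, again yielding $D_{k,l}\ge 0$; and for $(k,l)\notin P\cup N$ pairwise modularity forces $D_{k,l}=0$. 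Summing gives $D\ge 0$, with strict inequality when $Q\in\mathbb{C}_+(P,N)$ and the pair is genuinely P,N concordant, which establishes the corollary.

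The main obstacle is this last step: the double difference $D$ mixes simultaneous changes in every firm and worker coordinate, whereas P,N modularity is only a coordinatewise hypothesis. The telescoping resolves this, but it relies on the pairwise conditions of Definition \ref{def:1} holding for \emph{all} values of the frozen coordinates—including the intermediate mixtures $x^{(k)}_{-k}$ and $y^{(l)}_{-l}$—and it crucially exploits that every cross pair outside $P\cup N$ is pairwise modular, so that those contributions vanish exactly rather than carrying an uncontrolled sign. Once $D\ge 0$ is in hand, converting ``higher relative probability'' into the odds-ratio inequality is routine.
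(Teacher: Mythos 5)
Your proof is correct and takes essentially the same route as the paper: the paper's own proof is just ``Immediate from Proposition \ref{prop:2}'' (which in turn is ``immediate from Lemma \ref{lem:2}''), and your argument supplies exactly the content hidden in that word---the odds-ratio identity of Lemma \ref{lem:2} together with the coordinatewise telescoping of the double difference, which is the very decomposition the paper itself uses in its proof of Lemma \ref{lem:1}. You have merely made explicit the steps the paper compresses, including the correct observation that the partner-swap of a P,N concordant pair is N,P concordant.
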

In the previous section, Proposition \ref{prop:1} states that for
strictly P,N modular deterministic output functions, the fraction
of N,P concordant pairs equals zero when the scale parameters of the
idiosyncratic output components are zero, i.e. every outcome relevant
attribute is observed by econometricians. Proposition \ref{prop:2}
and Corollary \ref{cor:1} assert that for positive values of scale
parameters, positive fraction of N,P concordant pairs may be observed
when the deterministic output function is strictly P,N modular. Furthermore,
the fraction of P,N concordant pairs are higher than the fraction
of N,P concordant pairs when the deterministic output function is
P,N modular. Thus, the matching models in which the agents have unobserved
characteristics offer milder sorting patterns between observed attributes. 

Lemma \ref{lem:2} also allows me to obtain a comparative static result
that links the changes in deterministic complementarities to the optimal
matching density function. Bojilov and Galichon (2015) present comparative
static results regarding the changes in the deterministic complementarities
when (a) the deterministic output function is quadratic, i.e. $Q\left(x,y\right)=\sum_{k}\sum_{l}\theta_{k,l}x_{k}y_{l}$;
and (b) the observable attributes follow Gaussian distributions. The
quadratic functional form assumption offers a simple relationship
between complementarities and model parameters. More specifically,
complementarity between firms' $k^{th}$ and their workers' $l^{th}$
observable attributes is governed by parameter $\theta_{k,l}$ alone.
I offer a similar comparative static result without imposing restrictions
on the matching output and the distributions of the observable attributes.
\begin{defn}
\label{def:14} Deterministic output function \textbf{$Q:\mathbb{R}^{K}\times\mathbb{R}^{L}\rightarrow\mathbb{R}$
}exhibits higher P,N modularity compared to \textbf{$Q^{\prime}:\mathbb{R}^{K}\times\mathbb{R}^{L}\rightarrow\mathbb{R}$
}if and only if, for all P,N concordant $\left(x,y\right)$ and $\left(x^{\prime},y^{\prime}\right)$,
the following condition holds:
\noindent \begin{center}
$Q\left(x,y\right)+Q\left(x^{\prime},y^{\prime}\right)-Q\left(x^{\prime},y\right)-Q\left(x,y^{\prime}\right)\geq Q^{\prime}\left(x,y\right)+Q^{\prime}\left(x^{\prime},y^{\prime}\right)-Q^{\prime}\left(x^{\prime},y\right)-Q^{\prime}\left(x,y^{\prime}\right).$
\par\end{center}

\end{defn}
Here, a P,N modular increase implies (a) an increase in complementarity
between firms' $i^{th}$ and their workers' $j^{th}$ attributes for
all $\left(i,j\right)\in P$, (b) an decrease in complementarity between
firms' $p^{th}$ and their workers' $q^{th}$ attributes for all $\left(p,q\right)\in N$.
In this context, one can use an uneven P,N modular increase to formulate
a skill-biased\footnote{See Lindenlaub (2017) for a parametric examination of skill-biased
changes in the U.S. labor market.} complementarity change, non-parametrically.
\begin{prop}
\label{prop:3}The fraction of P,N concordant pairs relative to the
fraction of N,P concordant pairs under optimal matching rises with
P,N modular increases in the deterministic output function.
\end{prop}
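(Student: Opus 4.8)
The plan is to reduce the statement to the local odds identity furnished by Lemma~\ref{lem:2} and then read off the comparative static from Definition~\ref{def:14}. Fix two firm attribute vectors $x,x'$ and two worker attribute vectors $y,y'$, and view them as the four corners of a rectangle. There are exactly two ways to pair these attributes into matched couples: the pairing $\{(x,y),(x',y')\}$ and its swap $\{(x,y'),(x',y)\}$. Labeling so that the first pairing is P,N concordant (Definition~\ref{def:10}), the swapped pairing is by construction N,P concordant; this is exactly the configuration underlying a P,N concordance improving transfer (Definition~\ref{def:11}). Under the optimal matching density $m$, drawing two couples independently assigns the P,N concordant arrangement a likelihood proportional to $m(x,y)m(x',y')$ and the N,P concordant arrangement a likelihood proportional to $m(x,y')m(x',y)$, the common ordering factor canceling in the ratio. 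Hence the relative likelihood of the P,N concordant arrangement over its N,P concordant swap is precisely the ratio inside the logarithm of Equation~(\ref{eq:1}).

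First I would apply Lemma~\ref{lem:2} to rewrite this relative likelihood as
\[
\frac{m(x,y)\,m(x',y')}{m(x,y')\,m(x',y)}=\exp\!\left[(\sigma+\delta)^{-1}\bigl\{Q(x,y)+Q(x',y')-Q(x',y)-Q(x,y')\bigr\}\right].
\]
Since $(\sigma+\delta)^{-1}>0$, the right-hand side is a strictly increasing function of the double difference of $Q$. Now let $Q$ exhibit higher P,N modularity than $Q'$ in the sense of Definition~\ref{def:14}, with optimal densities $m$ and $m'$. Definition~\ref{def:14} states exactly that the double difference of $Q$ weakly exceeds that of $Q'$ at every P,N concordant rectangle. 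Applying the displayed identity to $Q$ with density $m$ and to $Q'$ with density $m'$ then yields the pointwise conclusion: for every such rectangle, the relative likelihood of the P,N concordant arrangement over its N,P concordant swap is weakly higher under $Q$ than under $Q'$. This already delivers the comparative static for any single quadruple of attributes.

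To obtain the aggregate statement I would write the fraction of P,N concordant pairs as the sum over all P,N concordant rectangles $s$ of the local P,N mass $m(x,y)m(x',y')$, and the fraction of N,P concordant pairs as the corresponding sum of the N,P mass $w_s:=m(x,y')m(x',y)$; the global ratio then equals the $w_s$-weighted average of the local odds $r_s=\exp[(\sigma+\delta)^{-1}\Delta_s(Q)]$, where $\Delta_s(Q)$ is the double difference. The hard part, which I expect to be the main obstacle, is that passing from $Q'$ to $Q$ raises each $r_s$ but simultaneously perturbs the weights $w_s$, because the densities differ; pointwise domination of the odds does not by itself order two weighted averages with different weights. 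To close this gap I would deform $Q'$ into $Q$ along the segment $Q_t=Q'+t(Q-Q')$, use linearity of the double difference to note that each $\Delta_s(Q_t)$ is nondecreasing in $t$ on P,N concordant rectangles, and exploit the Gibbs form $m_t(x,y)\propto e^{\phi_t(x)}e^{\psi_t(y)}e^{(\sigma+\delta)^{-1}Q_t(x,y)}$ implied by Equation~(\ref{eq:1}) to show that the $t$-derivative of the log global odds is nonnegative. The crux is controlling how the marginal-matching scaling factors $\phi_t,\psi_t$ respond to the complementarity change; the cleanest route is to sign the $t$-derivative using the separable structure of $\log m_t$ together with the fixed marginals $F$ and $G$.
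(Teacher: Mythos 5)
Your first two paragraphs coincide with the paper's own proof. The paper disposes of Proposition~\ref{prop:3} with a single line (``Immediate from Lemma~\ref{lem:2}''), and the content behind that line is exactly your reduction: by Lemma~\ref{lem:2} the odds of a P,N concordant arrangement against its N,P concordant swap equal $\exp\left\{ \left(\sigma+\delta\right)^{-1}\Delta\right\} $, where $\Delta$ is the double difference of $Q$ over the rectangle, and Definition~\ref{def:14} is precisely the statement that a P,N modular increase weakly raises $\Delta$ at every P,N concordant rectangle. So the rectangle-by-rectangle comparative static is established, and your reasoning there is correct and identical in approach to the paper's.

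The aggregation issue you raise in your final paragraph is genuine, but you should be clear about its status: the paper never addresses it. Read globally, the quantity in the proposition is $\sum_{s}b_{s}r_{s}/\sum_{s}b_{s}$ for $Q$ versus $\sum_{s}b_{s}^{\prime}r_{s}^{\prime}/\sum_{s}b_{s}^{\prime}$ for $Q^{\prime}$, and, as you note, $r_{s}\geq r_{s}^{\prime}$ for all $s$ does not order two weighted averages whose weights differ; the weights $b_{s}$ move with $Q$ because the optimal density moves. Your proposed repair is not a proof: the homotopy $Q_{t}=Q^{\prime}+t\left(Q-Q^{\prime}\right)$ and the Gibbs form of $m_{t}$ are only a setup, and the step you yourself call the crux---signing the contribution of $\phi_{t},\varphi_{t}$ to the derivative of the log global odds---is exactly the missing argument. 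Closing it would require showing that, with marginals fixed, raising every local log-odds moves the coupling up in a concordance-type order, and then that such dominance orders the ratio of concordant to discordant mass in the presence of ties; neither step is immediate. In short, your completed portion matches the paper's proof of the local (rectangle-level) statement, while the unfinished portion is a gap in the aggregate reading of the proposition---a gap that the paper's one-line proof silently shares rather than resolves.
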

Proposition \ref{prop:3} strengthens the result presented in Corollary
\ref{cor:1}. For any deterministic output function (P,N modular or
otherwise), a change toward P,N modularity increases the fraction
of P,N concordant pairs and decreases the fraction of N,P concordant
pairs. This result allows us to make inference regarding the dynamic
changes in deterministic complementarities without making any functional
form assumptions on the deterministic matching output. In particular,
an increase in the fraction of all P,N concordant pairs relative to
the fraction of N,P concordant pairs is consistent with a P,N modular
increase in the deterministic output function. For parametric purposes,
one can use a quadratic function to parameterize a P,N modular increase
in the deterministic output function.
\begin{cor}
\label{cor:2}\ 

\noindent Function $Q\left(x,y\right)=\sum_{k}\sum_{l}\theta_{k,l}x_{k}y_{l}$
exhibits higher P,N modularity compared to $Q^{\prime}\left(x,y\right)=\sum_{k}\sum_{l}\beta_{k,l}x_{k}y_{l}$
if and only if 

\noindent \textbf{(a) }$\theta_{i,j}\geq\beta_{i,j}$ for all $\left(i,j\right)\in P$;

\noindent \textbf{(b) }$\theta_{i,j}\leq\beta_{i,j}$ for all $\left(i,j\right)\in N$;
and

\noindent \textbf{(c) }$\theta_{i,j}=\beta_{i,j}$ for all $\left(i,j\right)\notin P\cup N$.
\end{cor}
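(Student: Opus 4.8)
The plan is to reduce both sides of Definition~\ref{def:14} to a single bilinear inequality and then probe that inequality with carefully chosen test pairs. First I would substitute the quadratic forms into the double difference. A direct expansion gives, for $Q(x,y)=\sum_{k,l}\theta_{k,l}x_ky_l$,
\[
Q(x,y)+Q(x',y')-Q(x',y)-Q(x,y')=\sum_{k,l}\theta_{k,l}(x_k-x_k')(y_l-y_l'),
\]
and the analogous identity for $Q'$ with coefficients $\beta_{k,l}$. Writing $\Delta_{k,l}:=\theta_{k,l}-\beta_{k,l}$, $u_k:=x_k-x_k'$, and $v_l:=y_l-y_l'$, the assertion ``$Q$ exhibits higher P,N modularity than $Q'$'' becomes exactly $B(u,v):=\sum_{k,l}\Delta_{k,l}u_kv_l\ge 0$ for every P,N concordant pair, i.e.\ for every $(u,v)$ with $u_iv_j\ge 0$ for $(i,j)\in P$, $u_pv_q\le 0$ for $(p,q)\in N$, and at least one of these strict. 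The corollary thus reduces to: $B\ge 0$ on the concordance cone if and only if (a)--(c) hold.

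The sufficiency direction is then immediate. Assuming (c), only the $P$- and $N$-indexed terms survive, so $B(u,v)=\sum_{(i,j)\in P}\Delta_{i,j}u_iv_j+\sum_{(p,q)\in N}\Delta_{p,q}u_pv_q$. On $P$, (a) gives $\Delta_{i,j}\ge 0$ while concordance gives $u_iv_j\ge 0$; on $N$, (b) gives $\Delta_{p,q}\le 0$ while concordance gives $u_pv_q\le 0$. Every summand is nonnegative, so $B\ge 0$; note this uses only weak concordance.

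For necessity I would feed $B\ge 0$ specially designed concordant pairs. Conditions (a) and (b) fall out of rank-one tests: for $(i,j)\in P$ the vector with $u_i=v_j=1$ and all other entries zero is strictly concordant and yields $B=\Delta_{i,j}\ge 0$; for $(p,q)\in N$ the vector $u_p=1,\,v_q=-1$ is strictly concordant and yields $B=-\Delta_{p,q}\ge 0$. The real work is condition (c), and I expect the main obstacle there. The obvious off-support test vector $u_m=v_n=1$ (all else zero) for $(m,n)\notin P\cup N$ makes every $P$- and $N$-product vanish, so it is only weakly, not strictly, concordant and the hypothesis does not apply; worse, the strictly concordant pairs are not dense in the weakly concordant ones, so a naive limiting argument fails.

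My resolution would be to \emph{activate} strict concordance through a single $P$ or $N$ pair while leaving one target coordinate completely free, and then exploit linearity. Suppose the worker-coordinate $n$ appears in some pair $(i_0,n)\in P\cup N$. Set every coordinate to zero except $v_n=1$, a free scalar $u_m$, and $u_{i_0}=\pm1$ chosen so that $u_{i_0}v_n$ strictly satisfies its $P$/$N$ constraint. Since $(m,n)\notin P\cup N$, no constraint binds $u_m$, so this is a strictly concordant family indexed by $u_m\in\mathbb{R}$ with $B(u,v)=\Delta_{m,n}u_m+\Delta_{i_0,n}u_{i_0}$; nonnegativity for all real $u_m$ forces $\Delta_{m,n}=0$. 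A symmetric construction handles the case where the firm-coordinate $m$ occurs in some pair, and if neither $m$ nor $n$ appears in any $P\cup N$ pair I would activate with an arbitrary pair scaled by $\epsilon$ and send $\epsilon\to 0$, so that $B\to\Delta_{m,n}u_mv_n$ with $u_mv_n$ of either sign, again giving $\Delta_{m,n}=0$. This coordinate-sharing/limiting step is the crux and tacitly requires $P\cup N\neq\emptyset$ (otherwise no concordant pairs exist and the statement is vacuous), which is the relevant regime. Combining (a), (b), (c) with the sufficiency direction then yields the stated equivalence.
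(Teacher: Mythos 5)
Your proposal is correct, and it supplies considerably more than what the paper records: the paper's entire proof of Corollary \ref{cor:2} is the sentence ``Immediate from Definition \ref{def:14}.'' Your reduction of both sides to the bilinear inequality $\sum_{k}\sum_{l}\left(\theta_{k,l}-\beta_{k,l}\right)u_{k}v_{l}\geq0$ on the P,N concordance cone, together with the sufficiency half, is indeed the ``immediate'' part. The necessity half, however, is not immediate, precisely for the reason you isolate: Definition \ref{def:14} quantifies only over P,N \emph{concordant} pairs (weak concordance plus at least one strict inequality), so the obvious off-support test vector for condition (c) is inadmissible, and strictly concordant pairs need not be dense in the weakly concordant ones --- your worry is justified, e.g.\ with $P=\{\left(1,1\right)\}$ and $N=\{\left(1,2\right)\}$, every weakly concordant pair near $u=\left(0,1\right)$, $v=\left(1,1\right)$ must have $u_{1}=0$, so no strictly concordant pair is nearby. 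Your fix --- activating strictness through a $P\cup N$ pair that shares a coordinate with $\left(m,n\right)$ while leaving one coordinate free, so that linearity over that free scalar forces $\Delta_{m,n}=0$, plus the $\epsilon\rightarrow0$ limit when no coordinate is shared --- is exactly the argument the paper omits, and each of your test pairs does satisfy all the concordance constraints (the only possibly nonzero constrained products are the one deliberately made strict and the unconstrained $\left(m,n\right)$ product). You also correctly flag the degenerate case $P\cup N=\emptyset$: there are then no P,N concordant pairs at all, Definition \ref{def:14} holds vacuously for any pair of functions, and the stated equivalence fails, so the corollary implicitly assumes $P\cup N\neq\emptyset$. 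In short, your proof is right; it fills in the necessity direction and the caveat that the paper's one-line proof glosses over.
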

In this framework, the relationship between bivariate complementarities
and bivariate optimal sorting patterns is complicated. That makes
the results presented in Proposition \ref{prop:3} and Corollary \ref{cor:2}
practically useful.
\begin{example}
\label{ex:5}Consider a matching market in which firms and workers
have two binary observable attributes. Assume that $\sigma+\delta=1$
and 
\noindent \begin{center}
$\left[\begin{array}{cc}
\theta_{1,1} & \theta_{1,2}\\
\theta_{2,1} & \theta_{2,2}
\end{array}\right]=\left[\begin{array}{cc}
5 & 0\\
0 & 1
\end{array}\right]$
\par\end{center}

\noindent for quadratic output function $Q\left(x,y\right)=\sum_{k}\sum_{l}\theta_{k,l}x_{k}y_{l}$.
The density functions of firms and workers, $p_{F}$ and $p_{G}$,
are given below:

\noindent 
\begin{table}[H]
\begin{minipage}[t]{0.49\columnwidth}%
\noindent \begin{center}
$p_{F}\left(x\right)=\begin{cases}
.1 & x=\left(0,0\right)\ or\ x=\left(1,1\right)\\
.4 & x=\left(0,1\right)\ or\ x=\left(1,0\right)
\end{cases}$
\par\end{center}%
\end{minipage}%
\begin{minipage}[t]{0.49\columnwidth}%
\noindent \begin{center}
$p_{G}\left(y\right)=\begin{cases}
.4 & y=\left(0,0\right)\ or\ y=\left(1,1\right)\\
.1 & y=\left(0,1\right)\ or\ y=\left(1,0\right)
\end{cases}$
\par\end{center}%
\end{minipage}.
\end{table}

\noindent For this parameterization, I numerically approximate optimal
matching density $m:\left\{ 0,1\right\} ^{2}\times\left\{ 0,1\right\} ^{2}\rightarrow\left(0,1\right)$
as follows by using iterative proportional fitting procedure\footnote{\noindent For details, see Deming and Stephan (1940).}.

\noindent 
\begin{table}[H]
\begin{minipage}[b]{0.49\columnwidth}%
\noindent \begin{center}
\begin{tabular}{c|c|c|c|c|c}
\multicolumn{1}{c}{} &  & \multicolumn{4}{c}{Workers}\tabularnewline
\cline{3-6} \cline{4-6} \cline{5-6} \cline{6-6} 
\multicolumn{1}{c}{} &  & $\left(0,0\right)$ & $\left(0,1\right)$ & $\left(1,0\right)$ & $\left(1,1\right)$\tabularnewline
\hline 
\multirow{4}{*}{Firms} & $\left(0,0\right)$ & .0848 & .0097 & .0013 & .0042\tabularnewline
\cline{2-6} \cline{3-6} \cline{4-6} \cline{5-6} \cline{6-6} 
 & $\left(0,1\right)$ & .2739 & .0848 & .0042 & .0371\tabularnewline
\cline{2-6} \cline{3-6} \cline{4-6} \cline{5-6} \cline{6-6} 
 & $\left(1,0\right)$ & .0371 & .0042 & .0848 & .2739\tabularnewline
\cline{2-6} \cline{3-6} \cline{4-6} \cline{5-6} \cline{6-6} 
 & $\left(1,1\right)$ & .0042 & .0013 & .0097 & .0848\tabularnewline
\end{tabular}
\par\end{center}
\noindent \begin{center}
Optimal matching density function between agents
\par\end{center}%
\end{minipage}%
\begin{minipage}[b]{0.49\columnwidth}%
\noindent \begin{center}
\begin{tabular}{c|c|c|c}
\multicolumn{1}{c}{} &  & \multicolumn{2}{c}{Workers}\tabularnewline
\cline{3-4} \cline{4-4} 
\multicolumn{1}{c}{} &  & 0 & 1\tabularnewline
\hline 
\multirow{2}{*}{Firms} & 0 & .208 & .292\tabularnewline
\cline{2-4} \cline{3-4} \cline{4-4} 
 & 1 & .292 & .208\tabularnewline
\end{tabular}
\par\end{center}
\noindent \begin{center}
Optimal matching density function between the second attributes
\par\end{center}%
\end{minipage}
\end{table}

\noindent Logarithm of the optimal matching density function exhibits
P,N modularity for $P=\left\{ \left(1,1\right),\left(2,2\right)\right\} $
and $N=\left\{ \right\} $, which is consistent with Proposition \ref{prop:2}.
Based on Proposition \ref{prop:2}, one can also predict that the
deterministic output function is P,N modular for $P=\left\{ \left(1,1\right),\left(2,2\right)\right\} $
and $N=\left\{ \right\} $ given the optimal matching density. On
the other hand, the optimal matching density exhibits negative correlation
between the second attributes of firms and workers. Consequently,
an econometrician may predict negative complementarity between the
second observable attributes of firms and workers while the deterministic
output function exhibits positive complementarity between these attributes,
i.e. $\theta_{2,2}=1>0$. 
\end{example}
An analysis based on two univariate marginals of multidimensional
objects is potentially deceptive not only for the binary case. Let
$F:\mathbb{R}^{K}\rightarrow\left[0,1\right]$ and $G:\mathbb{R}^{L}\rightarrow\left[0,1\right]$
be the distribution functions of firms' and workers' observable attributes.
Assume that the deterministic output function is quadratic, i.e. $Q\left(x,y;\boldsymbol{\theta}\right)=\sum_{k}\sum_{l}\theta_{k,l}x_{k}y_{l}$.
Due to Galichon and Salanie (2015), the logarithm of optimal matching
density function is obtained as follows for some $\phi$ and $\varphi$:

\begin{equation}
\log m\left(x,y;\boldsymbol{\theta},F,G\right)=W+\dfrac{Q\left(x,y;\boldsymbol{\theta}\right)+\phi\left(x;\boldsymbol{\theta},F,G\right)+\varphi\left(y;\boldsymbol{\theta},F,G\right)}{\sigma+\delta}\label{eq:2}
\end{equation}

\noindent where 
\begin{onehalfspace}
\noindent \begin{center}
$W=-\log\left\{ \underset{x^{\prime}\in X}{\sum}\underset{y^{\prime}\in Y}{\sum}\exp\left\{ \dfrac{Q\left(x^{\prime},y^{\prime};\boldsymbol{\theta}\right)+\phi\left(x^{\prime};\boldsymbol{\theta},F,G\right)+\varphi\left(y^{\prime};\boldsymbol{\theta},F,G\right)}{\sigma+\delta}\right\} \right\} .$
\par\end{center}
\end{onehalfspace}

\noindent Let $m_{k,l}\left(a,b\right)$ denote the fraction of couples
for which $\left(\clubsuit\right)$ firms' $k^{th}$ attributes equal
$a$ and $\left(\spadesuit\right)$ workers' $l^{th}$ attributes
equal $b$. By using Equation (\ref{eq:2}), the logarithm of bivariate
$k,l$ marginal matching density function can be formulated as follows: 

\begin{equation}
\log m_{k,l}\left(a,b;\boldsymbol{\theta},F,G\right)=\dfrac{\theta_{k,l}}{\sigma+\delta}ab+W+\Lambda_{k,l}\left(a,b\right)\label{eq:3}
\end{equation}

\noindent where
\noindent \begin{center}
$\Lambda_{k,l}\left(a,b\right)=\log\left\{ \underset{x^{\prime}\in\mathbb{R}^{K}}{\sum}\underset{y^{\prime}\in\mathbb{R}^{L}}{\sum}\mathbb{I}\left\{ x_{k}=a,y_{l}=b\right\} \exp\left\{ \dfrac{\phi\left(x^{\prime};\boldsymbol{\theta},F,G\right)+\varphi\left(y^{\prime};\boldsymbol{\theta},F,G\right)+\underset{\left(i,j\right)\neq\left(k,l\right)}{\sum}\theta_{i,j}x_{i}y_{j}}{\sigma+\delta}\right\} \right\} .$
\par\end{center}

\noindent Suppose that a firm's $k^{th}$ attribute may take $f_{k}$
different values: $\left\{ a_{1},\cdots,a_{f_{k}}\right\} $, and
a worker's $l^{th}$ attribute may take $w_{l}$ different values:
$\left\{ b_{1},\cdots b_{w_{l}}\right\} $ such that $a_{i+1}>a_{i}$
for $i=1,\ldots,f_{k}-1$ and $b_{j+1}>b_{j}$ for $j=1\ldots w_{l}-1$.
Let $s_{k,l}$ denote the logarithm of the local odds ratio for the
$i^{th}$ value of a firm's $k^{th}$ attribute and the $j^{th}$
value of a worker's $l^{th}$ attribute: 
\noindent \begin{center}
$s_{k,l}\left(i,j\right)\coloneqq\log m_{k,l}\left(a_{i},b_{j}\right)+\log m_{k,l}\left(a_{i+1},b_{j+1}\right)-\log m_{k,l}\left(a_{i+1},b_{j}\right)-\log m_{k,l}\left(a_{i},b_{j+1}\right).$
\par\end{center}

\noindent Based on Equation (\ref{eq:3}), it is easy to see that 

\begin{equation}
s_{k,l}\left(i,j\right)=\dfrac{\theta_{k,l}}{\sigma+\delta}\left(a_{i+1}-a_{i}\right)\left(b_{j+1}-b_{j}\right)+\Omega\left(i,j\right)\label{eq:4}
\end{equation}

\noindent where $\Omega\left(i,j\right)=\Lambda_{k,l}\left(a_{i},b_{j}\right)+\Lambda_{k,l}\left(a_{i+1},b_{j+1}\right)-\Lambda_{k,l}\left(a_{i},b_{j+1}\right)-\Lambda_{k,l}\left(a_{i+1},b_{j}\right)$. 

Notice that the right-hand side of Equation (\ref{eq:4}) is not only
a function of $\theta_{k,l}$ but also all other model parameters,
i.e. the distributions of observable attributes and other complementarity
parameters. Consequently, an inference based on bivariate sorting
patterns is potentially inconsistent and biased. I address this issue
in a separate paper by proposing a multidimensional dependence class
and devising two cardinal measures of multidimensional dependence.

\section{\label{sec:3}Household Composition and Healthcare Insurance Market}

The relationship between agents' health status and their spouses'
education levels has been well-studied in medical literature. Jaffe
et al. (2005) find the mortality risk among men with cardiovascular
disease is higher for those who are married to less-educated women.
In addition, they also document that the mortality risk among women
with breast cancer is higher for those who are married to less-educated
men. Jaffe et al. (2006) note similar findings for men with cardiovascular
disease, and show that one's wife's education level is a stronger
predictor of her husband\textquoteright s mortality than his own education
level. Kravdal (2008) and Skalická and Kunst (2008) report that one's
mortality risk decreases with former and current spouses' education
levels in Norway. Nilsen et al. (2012) also document a strong association
between spousal education and one's self-rated health in Norway. Brown
et al. (2014) confirm that spousal education is positively associated
with self-rated health in the U.S. 

In literature, educational attainment has also been proven to be an
important predictor of mortality differentials (see Kunst and Mackenbach,
1994; Elo and Preston, 1996; Borrell et al., 1999; Mackenbach et al.,
1999; Manor et al., 2000; Krokstad et al., 2002; Manor et al., 2004).
Since better-educated individuals are more likely to be healthy, an
attraction between better-educated or healthier individuals may generate
a spurious positive association between agents' health status and
their spouses' education levels. This channel may disqualify one's
spouse's education level as a robust predictor of one's health status.
Identifying robust predictors of one's health status is essential
for health insurance carriers. In this section, I examine whether
or not one's spouse's health status and education level are robust
predictors of his/her own health status by using the empirical matching
framework described in the previous section. This empirical exercise
demonstrates how one can apply the aforementioned sorting theory to
address several policy-related questions.

\subsection{\label{subsec:3.1}Data}

In this paper, I use the IPUMS-CPS data series for 2010-2017. The
census data contains individual-level information regarding education
and self-rated health levels for 266,569 couples. 
\noindent \begin{center}
\begin{tabular}{|c|c|c|c|c|c|c|c|c|c}
\multirow{2}{*}{\textbf{2010}} & \multirow{2}{*}{\textbf{2011}} & \multirow{2}{*}{\textbf{2012}} & \multirow{2}{*}{\textbf{2013}} & \multirow{2}{*}{\textbf{2014}} & \multirow{2}{*}{\textbf{2015}} & \multirow{2}{*}{\textbf{2016}} & \multirow{2}{*}{\textbf{2017}} & \multirow{2}{*}{\textbf{TOTAL}} & \tabularnewline
 &  &  &  &  &  &  &  &  & \tabularnewline
\cline{1-9} \cline{2-9} \cline{3-9} \cline{4-9} \cline{5-9} \cline{6-9} \cline{7-9} \cline{8-9} \cline{9-9} 
\multirow{2}{*}{35,643} & \multirow{2}{*}{34,664} & \multirow{2}{*}{33,875} & \multirow{2}{*}{34,023} & \multirow{2}{*}{33,498} & \multirow{2}{*}{33,093} & \multirow{2}{*}{30,754} & \multirow{2}{*}{31,019} & \multirow{2}{*}{266,569} & \tabularnewline
 &  &  &  &  &  &  &  &  & \tabularnewline
\cline{1-9} \cline{2-9} \cline{3-9} \cline{4-9} \cline{5-9} \cline{6-9} \cline{7-9} \cline{8-9} \cline{9-9} 
\multicolumn{10}{c}{Number of Households}\tabularnewline
\end{tabular}
\par\end{center}

\noindent For quantitative purposes, I represent each agent by two
indices (education and health). For the $i^{th}$ couple in the sample,
I denote the woman's attributes by $x_{i}=\left(x_{i,E},x_{i,H}\right)$
, and the man's attributes by $y_{i}=\left(y_{i,E},y_{i,H}\right)$.
\begin{table}[H]
\noindent \begin{centering}
\begin{tabular}{c|c|c|c|c|c|c}
 & \multicolumn{5}{c|}{\textbf{Index}} & \tabularnewline
 & \textbf{1} & \textbf{2} & \textbf{3} & \textbf{4} & \textbf{5} & \tabularnewline
\cline{1-6} \cline{2-6} \cline{3-6} \cline{4-6} \cline{5-6} \cline{6-6} 
\multirow{2}{*}{\textbf{Education}} & \multirow{2}{*}{Less than high school} & \multirow{2}{*}{High school} & \multirow{2}{*}{Some college} & \multirow{2}{*}{College} & \multirow{2}{*}{Post-college} & \tabularnewline
 &  &  &  &  &  & \tabularnewline
\cline{1-6} \cline{2-6} \cline{3-6} \cline{4-6} \cline{5-6} \cline{6-6} 
\multirow{2}{*}{\textbf{Health}} & \multirow{2}{*}{Poor} & \multirow{2}{*}{Fair} & \multirow{2}{*}{Good} & \multirow{2}{*}{Very Good} & \multirow{2}{*}{Excellent} & \tabularnewline
 &  &  &  &  &  & \tabularnewline
\cline{1-6} \cline{2-6} \cline{3-6} \cline{4-6} \cline{5-6} \cline{6-6} 
\end{tabular}
\par\end{centering}
\noindent \centering{}\caption{\label{tab:2}Variables}
\end{table}

\subsection{\label{subsec:3.2}Association Concepts and Measures}

Since the variables of interest are ordinal, I examine the association
between these variables by using a rank-correlation measure (Kruskal's
gamma) for each year. Here, I define key concepts and measures to
calculate association between spouses' education levels and health
status.
\begin{defn}
\label{def:15} The $i^{th}$ and the $j^{th}$ couples exhibit concordance(discordance)
between 

\noindent (a) women's health status and education levels if and only
if $\left(x_{i,H}-x_{j,H}\right)\left(x_{i,E}-x_{j,E}\right)>\left(<\right)0$;

\noindent (b) men's health status and education levels if and only
if $\left(y_{i,H}-y_{j,H}\right)\left(y_{i,E}-y_{j,E}\right)>\left(<\right)0$; 

\noindent (c) men's education levels and their wives' health status
if and only if $\left(x_{i,H}-x_{j,H}\right)\left(y_{i,E}-y_{j,E}\right)>\left(<\right)0$;

\noindent (d) men's health status and their wives' education levels
if and only if $\left(y_{i,H}-y_{j,H}\right)\left(x_{i,E}-x_{j,E}\right)>\left(<\right)0$;
and 

\noindent (e) men's and their wives' health status if and only if
$\left(x_{i,H}-x_{j,H}\right)\left(y_{i,H}-y_{j,H}\right)>\left(<\right)0$.
\end{defn}
To measure the association between agents' health status and education
levels, I calculate the following Kruskal's gamma statistics for each
survey year.
\noindent \begin{center}
\begin{tabular}{cccc}
\multirow{2}{*}{$\Gamma_{H,E}^{W,W}=\dfrac{C_{H,E}^{W,W}-D_{H,E}^{W,W}}{C_{H,E}^{W,W}+D_{H,E}^{W,W}}$} & \multirow{2}{*}{$\Gamma_{H,E}^{M,M}=\dfrac{C_{H,E}^{M,M}-D_{H,E}^{M,M}}{C_{H,E}^{M,M}+D_{H,E}^{M,M}}$} & \multirow{2}{*}{} & \tabularnewline
 &  &  & \tabularnewline
\multirow{2}{*}{$\Gamma_{H,E}^{W,M}=\dfrac{C_{H,E}^{W,M}-D_{H,E}^{W,M}}{C_{H,E}^{W,M}+D_{H,E}^{W,M}}$} & \multirow{2}{*}{$\Gamma_{H,E}^{M,W}=\dfrac{C_{H,E}^{M,W}-D_{H,E}^{M,W}}{C_{H,E}^{M,W}+D_{H,E}^{M,W}}$} & \multirow{2}{*}{$\Gamma_{H,H}^{W,M}=\dfrac{C_{H,H}^{W,M}-D_{H,H}^{W,M}}{C_{H,H}^{W,M}+D_{H,H}^{W,M}}$} & \tabularnewline
 &  &  & \tabularnewline
\end{tabular}
\par\end{center}

\noindent Here, $C_{k,l}^{a,b}$ denotes the fraction of pairs of
couples that exhibits concordance between \emph{a}'s attribute-\emph{k}
and \emph{b}'s attribute-\emph{l} for $a,b\in\left\{ W\left(Women\right),M\left(Men\right)\right\} $
and $k,l\in\left\{ E\left(Education\right),H\left(Health\right)\right\} $.
Similarly, $D_{k,l}^{a,b}$ denotes the fraction of pairs of couples
that exhibits discordance between \emph{a}'s attribute-\emph{k} and
\emph{b}'s attribute-\emph{l} (see Definition \ref{def:15}). Kruskal's
gamma takes values between -1 and 1, inclusively. Values close to
-1 indicate strong negative association, and those close to 1 indicate
strong positive association.

\subsection{Descriptive Statistics}

\subsubsection{Agents' Own Health Status and Education Levels}

As it is illustrated in Table \ref{tab:3}, there is a weak positive
association between agents' own health status and education levels.
The association is slightly more pronounced among women. Tables \ref{tab:10}
and \ref{tab:11} indicate that the lower tail distribution of self-rated
health does not vary substantially by education level. On average,
a level increase in education is associated with .2398 level increase
in self-rated health among women, and .2104 level increase among men. 

\noindent 
\begin{table}[H]
\caption{\label{tab:3}Agents' Own Health Status and Education Levels}

\noindent \centering{}%
\begin{tabular}{|l|l|l|l|l|l|l|l|l|l}
\cline{2-9} \cline{3-9} \cline{4-9} \cline{5-9} \cline{6-9} \cline{7-9} \cline{8-9} \cline{9-9} 
\multicolumn{1}{l|}{} & \multicolumn{8}{c|}{\textbf{Survey Year}} & \tabularnewline
\cline{1-9} \cline{2-9} \cline{3-9} \cline{4-9} \cline{5-9} \cline{6-9} \cline{7-9} \cline{8-9} \cline{9-9} 
\textbf{Statistic} & \textbf{2010} & \textbf{2011} & \textbf{2012} & \textbf{2013} & \textbf{2014} & \textbf{2015} & \textbf{2016} & \textbf{2017} & \tabularnewline
\cline{1-9} \cline{2-9} \cline{3-9} \cline{4-9} \cline{5-9} \cline{6-9} \cline{7-9} \cline{8-9} \cline{9-9} 
\multirow{2}{*}{$\Gamma_{H,E}^{W,W}$} & \multirow{2}{*}{.3121} & \multirow{2}{*}{.2934} & \multirow{2}{*}{.3176} & \multirow{2}{*}{.301} & \multirow{2}{*}{.3044} & \multirow{2}{*}{.2807} & \multirow{2}{*}{.2614} & \multirow{2}{*}{.2707} & \tabularnewline
 &  &  &  &  &  &  &  &  & \tabularnewline
\cline{1-9} \cline{2-9} \cline{3-9} \cline{4-9} \cline{5-9} \cline{6-9} \cline{7-9} \cline{8-9} \cline{9-9} 
\multirow{2}{*}{$\Gamma_{H,E}^{M,M}$} & \multirow{2}{*}{.2708} & \multirow{2}{*}{.263} & \multirow{2}{*}{.2748} & \multirow{2}{*}{.2741} & \multirow{2}{*}{.2563} & \multirow{2}{*}{.2475} & \multirow{2}{*}{.2297} & \multirow{2}{*}{.2486} & \tabularnewline
 &  &  &  &  &  &  &  &  & \tabularnewline
\cline{1-9} \cline{2-9} \cline{3-9} \cline{4-9} \cline{5-9} \cline{6-9} \cline{7-9} \cline{8-9} \cline{9-9} 
\end{tabular}
\end{table}

\subsubsection{Agents' Own Health Status and Their Spouses' Education Levels}

Table \ref{tab:4} shows that there is a weak positive association
between agents' own health status and their spouses' education levels.
The association is slightly more pronounced for men. Tables \ref{tab:12}
and \ref{tab:13} indicate that lower tail distribution of agents'
own health status does not vary substantially by their spouses' education
levels. On average, a level increase in one's spouse's education is
associated with .2189 level increase in one's own health among men,
and .2015 level increase among women. These results suggest that the
association between men's health status and their wives' education
levels is higher than the association between men's health status
and their own education levels. On the other hand, the association
between women's health status and their husbands' education levels
is lower than the association between women's health status and their
own education levels.

\noindent 
\begin{table}[H]
\caption{\label{tab:4}Agents' Own Health Status and Their Spouses' Education
Levels}

\noindent \centering{}%
\begin{tabular}{|l|l|l|l|l|l|l|l|l|l}
\cline{2-9} \cline{3-9} \cline{4-9} \cline{5-9} \cline{6-9} \cline{7-9} \cline{8-9} \cline{9-9} 
\multicolumn{1}{l|}{} & \multicolumn{8}{c|}{\textbf{Survey Year}} & \tabularnewline
\cline{1-9} \cline{2-9} \cline{3-9} \cline{4-9} \cline{5-9} \cline{6-9} \cline{7-9} \cline{8-9} \cline{9-9} 
\textbf{Statistic} & \textbf{2010} & \textbf{2011} & \textbf{2012} & \textbf{2013} & \textbf{2014} & \textbf{2015} & \textbf{2016} & \textbf{2017} & \tabularnewline
\cline{1-9} \cline{2-9} \cline{3-9} \cline{4-9} \cline{5-9} \cline{6-9} \cline{7-9} \cline{8-9} \cline{9-9} 
\multirow{2}{*}{$\Gamma_{H,E}^{W,M}$} & \multirow{2}{*}{.2756} & \multirow{2}{*}{.2565} & \multirow{2}{*}{.2753} & \multirow{2}{*}{.2602} & \multirow{2}{*}{.255} & \multirow{2}{*}{.238} & \multirow{2}{*}{.2331} & \multirow{2}{*}{.244} & \tabularnewline
 &  &  &  &  &  &  &  &  & \tabularnewline
\cline{1-9} \cline{2-9} \cline{3-9} \cline{4-9} \cline{5-9} \cline{6-9} \cline{7-9} \cline{8-9} \cline{9-9} 
\multirow{2}{*}{$\Gamma_{H,E}^{M,W}$} & \multirow{2}{*}{.2771} & \multirow{2}{*}{.2669} & \multirow{2}{*}{.2791} & \multirow{2}{*}{.273} & \multirow{2}{*}{.2731} & \multirow{2}{*}{.2597} & \multirow{2}{*}{.2239} & \multirow{2}{*}{.2522} & \tabularnewline
 &  &  &  &  &  &  &  &  & \tabularnewline
\cline{1-9} \cline{2-9} \cline{3-9} \cline{4-9} \cline{5-9} \cline{6-9} \cline{7-9} \cline{8-9} \cline{9-9} 
\end{tabular}
\end{table}

\subsubsection{Agents' Own and Their Spouses' Health Status}

Table \ref{tab:5} reports a strong positive association between agents'
own and their spouses' health status. Individuals are most likely
to be married to spouses with the same self-rated health (see Table
\ref{tab:14}). This strong positive association has an important
actuarial implication: the risk associated with a two-person family
plan is higher than the aggregate risk associated with two individual
plans. 

\noindent 
\begin{table}[H]
\caption{\label{tab:5}One's Health and One's Spouse's Health}

\noindent \centering{}%
\begin{tabular}{|l|l|l|l|l|l|l|l|l|l}
\cline{2-9} \cline{3-9} \cline{4-9} \cline{5-9} \cline{6-9} \cline{7-9} \cline{8-9} \cline{9-9} 
\multicolumn{1}{l|}{} & \multicolumn{8}{c|}{\textbf{Survey Year}} & \tabularnewline
\cline{1-9} \cline{2-9} \cline{3-9} \cline{4-9} \cline{5-9} \cline{6-9} \cline{7-9} \cline{8-9} \cline{9-9} 
\textbf{Statistic} & \textbf{2010} & \textbf{2011} & \textbf{2012} & \textbf{2013} & \textbf{2014} & \textbf{2015} & \textbf{2016} & \textbf{2017} & \tabularnewline
\cline{1-9} \cline{2-9} \cline{3-9} \cline{4-9} \cline{5-9} \cline{6-9} \cline{7-9} \cline{8-9} \cline{9-9} 
\multirow{2}{*}{$\Gamma_{H,H}^{W,M}$} & \multirow{2}{*}{.7586} & \multirow{2}{*}{.7469} & \multirow{2}{*}{.7328} & \multirow{2}{*}{.7423} & \multirow{2}{*}{.7384} & \multirow{2}{*}{.7486} & \multirow{2}{*}{.7396} & \multirow{2}{*}{.7498} & \tabularnewline
 &  &  &  &  &  &  &  &  & \tabularnewline
\cline{1-9} \cline{2-9} \cline{3-9} \cline{4-9} \cline{5-9} \cline{6-9} \cline{7-9} \cline{8-9} \cline{9-9} 
\end{tabular}
\end{table}

The nation's first and largest private online marketplace for health
insurance, \emph{eHealth, Inc.}, reports average insurance premiums
and deductibles\footnote{https://www.healthcare.gov/glossary/deductible/}
for individual and family plans every year. According to the latest\footnote{https://news.ehealthinsurance.com/\_ir/68/20169/eHealth\%20Health\%20Insurance\%20Price\%20Index\%20Report\%20for\%20the\%202016\%20Open\%20Enrollment\%20Period\%20-\%20October\%202016.pdf}
report, a two-person family insurance costs \$717, whereas an individual
plan costs \$310 to a single man, and \$332\footnote{The gender-premium gap exists despite the fact that gender discrimination
in premium calculations is illegal. 

https://www.legalmatch.com/law-library/article/health-insurance-discrimination-laws.html

https://www.nwlc.org/wp-content/uploads/2015/08/Individual\%20Insurance.pdf} to a single woman. Although per capita insurance premium is higher
for married individuals, insurance policies with lower insurance premiums
also have higher deductibles. According to the same report, a two-person
family plan has \$8,113 deductible. In addition, an individual plan
has \$4,457 deductible for men and \$4,259 deductible for women. Based
on these figures, it is not clear whether or not the insurers assess
the extra risk associated with family plans adequately. The evaluation
of the extra risk implied by the positive association between spouses'
health status requires an in-depth analysis which is beyond the scope
of this paper. 

\subsection{\label{subsec:3.3}Parametric Estimation}

Here, I estimate a parametric matching model to explain the association
patterns presented above. The main goal is to decompose the mechanism
that generates the association patterns into two channels: attraction
and distribution. The empirical framework presented in Section \ref{sec:2}
allows me to dissociate these two effects. Therefore, this decomposition
reveals spurious association patterns. For the purposes of parametric
estimation, I adopt two additional assumptions.
\begin{assumption}
\label{assu:4}Deterministic output function $Q:\left\{ 1,2,3,4,5\right\} ^{2}\times\left\{ 1,2,3,4,5\right\} ^{2}\rightarrow\mathbb{R}$
is quadratic:

\[
Q\left(\boldsymbol{x_{r}},\boldsymbol{y_{c}}\right)=\sum_{k\in\left\{ H,E\right\} }\sum_{l\in\left\{ H,E\right\} }\theta_{k,l}x_{r,k}y_{c,l}
\]

\noindent where $\boldsymbol{x_{r}}$ denotes the attribute vector
of a type-r woman, and $\boldsymbol{y_{c}}$ denotes the attribute
vector of a type-c man for $r,c\in\left\{ 1,...,25\right\} $.
\end{assumption}
\noindent 
\begin{table}[H]
\noindent \begin{centering}
\begin{tabular}{|c|c|c|c|c|c|}
\cline{2-6} \cline{3-6} \cline{4-6} \cline{5-6} \cline{6-6} 
\multicolumn{1}{c|}{} & \multicolumn{5}{c|}{\textbf{Health}}\tabularnewline
\hline 
\textbf{Education} & \textbf{1} & \textbf{2} & \textbf{3} & \textbf{4} & \textbf{5}\tabularnewline
\hline 
\textbf{1} & 1 & 6 & 11 & 16 & 21\tabularnewline
\hline 
\textbf{2} & 2 & 7 & 12 & 17 & 22\tabularnewline
\hline 
\textbf{3} & 3 & 8 & 13 & 18 & 23\tabularnewline
\hline 
\textbf{4} & 4 & 9 & 14 & 19 & 24\tabularnewline
\hline 
\textbf{5} & 5 & 10 & 15 & 20 & 25\tabularnewline
\hline 
\end{tabular}
\par\end{centering}
\caption{Agent's Types}
\end{table}

\begin{assumption}
\label{assu:5}The scale parameters of idiosyncratic output components
add up to 1, i.e. $\sigma+\delta=1$.
\end{assumption}
Under Assumptions \ref{assu:1}-\ref{assu:5}, the probability of
observing a marriage between a type-r woman and a type-c man among
all marriages of type-r women, denoted by $p\left(\boldsymbol{x_{r}},\boldsymbol{y_{c}}\right)$,
can be formulated as follows:

\begin{equation}
p\left(\boldsymbol{x_{r}},\boldsymbol{y_{c}}\right)\coloneqq\dfrac{m\left(\boldsymbol{x_{r}},\boldsymbol{y_{c}}\right)}{\underset{j=1}{\overset{25}{\sum}}m\left(\boldsymbol{x_{r}},\boldsymbol{y_{j}}\right)}=\dfrac{\exp\left\{ Q\left(\boldsymbol{x_{r}},\boldsymbol{y_{c}}\right)+\varphi\left(\boldsymbol{y_{c}}\right)\right\} }{\underset{j=1}{\overset{25}{\sum}}\exp\left\{ Q\left(\boldsymbol{x_{r}},\boldsymbol{y_{j}}\right)+\varphi\left(\boldsymbol{y_{c}}\right)\right\} }.\label{eq:5}
\end{equation}

\noindent An equivalent model can be obtained by choosing the first
type of man as base category:

\begin{doublespace}
\noindent 
\begin{equation}
p\left(\boldsymbol{x_{r}},\boldsymbol{y_{c}}|\boldsymbol{\theta}\right)=\begin{cases}
\dfrac{1}{1+\sum_{j\neq1}\exp\left\{ \underset{k\in\left\{ H,E\right\} }{\sum}\underset{l=\left\{ H,E\right\} }{\sum}\theta_{k,l}x_{r,k}\left(y_{j,l}-y_{j,1}\right)+\left(\varphi\left(\boldsymbol{y_{j}}\right)-\varphi\left(\boldsymbol{y_{1}}\right)\right)\right\} } & c=1\\
\dfrac{\exp\left\{ \underset{k\in\left\{ H,E\right\} }{\sum}\underset{l=\left\{ H,E\right\} }{\sum}\theta_{k,l}x_{r,k}\left(y_{j,l}-y_{j,1}\right)+\left(\varphi\left(\boldsymbol{y_{c}}\right)-\varphi\left(\boldsymbol{y_{1}}\right)\right)\right\} }{1+\sum_{j\neq1}\exp\left\{ \underset{k\in\left\{ H,E\right\} }{\sum}\underset{l=\left\{ H,E\right\} }{\sum}\theta_{k,l}x_{r,k}\left(y_{j,l}-y_{j,1}\right)+\left(\varphi\left(\boldsymbol{y_{j}}\right)-\varphi\left(\boldsymbol{y_{1}}\right)\right)\right\} } & c\neq1
\end{cases}.\label{eq:6}
\end{equation}

\end{doublespace}

Let $f^{m}\left(\boldsymbol{x_{r}}\right)$ denote the fraction of
type-r married women in a sample of $N$ married couples. The likelihood
function is formulated as follows: 

\begin{doublespace}
\noindent 
\[
L_{N}\left(\boldsymbol{\theta}\right)=\prod_{i=1}^{N}m\left(\boldsymbol{x}^{\left(i\right)},\boldsymbol{y}^{\left(i\right)}|\boldsymbol{\theta}\right)=\prod_{i=1}^{N}\prod_{r=1}^{25}\prod_{c=1}^{25}\left\{ m\left(\boldsymbol{x_{r}},\boldsymbol{y_{c}}|\boldsymbol{\theta}\right)\right\} ^{d_{r,c}^{\left(i\right)}}=\prod_{r=1}^{25}\prod_{c=1}^{25}\left\{ f^{m}\left(\boldsymbol{x}_{\boldsymbol{r}}\right)p\left(\boldsymbol{x_{r}},\boldsymbol{y_{c}}|\boldsymbol{\theta}\right)\right\} ^{n_{r,c}}
\]

\end{doublespace}

\noindent where

\noindent $m\left(\boldsymbol{x}^{\left(i\right)},\boldsymbol{y}^{\left(i\right)}|\boldsymbol{\theta}\right)$:
probability of observing the $i^{th}$ couple for parameter vector
$\boldsymbol{\theta}$;

\noindent $m\left(\boldsymbol{x_{r}},\boldsymbol{y_{c}}|\boldsymbol{\theta}\right)$:
probability of observing a marriage between a type-r woman and a type-c
man for parameter vector $\boldsymbol{\theta}$;

\noindent $d_{r,c}^{\left(i\right)}$: binary identifier for the $i^{th}$
couple that equals 1 if the $i^{th}$ marriage is between a type-r
woman and a type-c man; and

\noindent $n_{r,c}$: number of marriages between type-r women and
type-c men. 

\noindent I estimate the complementarities by maximizing the following
log-likelihood function:

\begin{doublespace}
\noindent 
\begin{equation}
\mathcal{L}_{N}\left(\boldsymbol{\theta}\right)=\log L_{N}\left(\boldsymbol{\theta}\right)=\sum_{r=1}^{25}n_{r}\log\left\{ f^{m}\left(\boldsymbol{x}_{\boldsymbol{r}}\right)\right\} +\sum_{r=1}^{25}\sum_{c=1}^{25}n_{r,c}\log\left\{ p\left(\boldsymbol{x_{r}},\boldsymbol{y_{c}}|\boldsymbol{\theta}\right)\right\} \label{eq:7}
\end{equation}

\end{doublespace}

\noindent where $n_{r}$ is the total number of type-r women in the
sample.

The log-likelihood function has 28 parameters: 4 attraction parameters,
$\left\{ \theta_{k,l}\right\} $, and 24 deterministic sympathy parameters,
$\left\{ \varphi\left(\boldsymbol{y_{j}}\right)-\varphi\left(\boldsymbol{y_{1}}\right)\right\} $.
It is a well-known fact that the numerical methods which use a gradient
ascent algorithm are less accurate for high-dimensional parameter
spaces. In order to overcome this problem, I estimate the parameters\footnote{https://en.wikipedia.org/wiki/Simulated\_annealing}\footnote{https://www.mathworks.com/help/gads/simulannealbnd.html}
by using simulated annealing. As it is illustrated in Tables \ref{tab:10}
and \ref{tab:11}, the distributions of the agents vary only slightly
over time. For this reason, I use the entire sample and estimate only
one set of parameters. The estimated parameters are reported in Table
\ref{tab:7}. 

\noindent 
\begin{table}[H]
\caption{\label{tab:7}Estimated Complementarities}

\noindent \centering{}%
\begin{tabular}{c|c|c|c|c}
\multicolumn{1}{c}{} &  & \multicolumn{2}{c|}{Men} & \tabularnewline
\cline{3-4} \cline{4-4} 
\multicolumn{1}{c}{} &  & Health & Education & \tabularnewline
\cline{1-4} \cline{2-4} \cline{3-4} \cline{4-4} 
\multirow{4}{*}{Women} & \multirow{2}{*}{Health} & \multirow{2}{*}{\textbf{\textcolor{blue}{.7625}}} & \multirow{2}{*}{\textcolor{red}{-.0375}} & \tabularnewline
 &  &  &  & \tabularnewline
\cline{2-4} \cline{3-4} \cline{4-4} 
 & \multirow{2}{*}{Education} & \multirow{2}{*}{\textcolor{red}{-.0226}} & \multirow{2}{*}{\textbf{\textcolor{blue}{.5572}}} & \tabularnewline
 &  &  &  & \tabularnewline
\cline{1-4} \cline{2-4} \cline{3-4} \cline{4-4} 
\end{tabular}
\end{table}

These results indicate an attraction between individuals with the
same education levels and health status. The interpretation of these
parameters is a little bit complicated. To interpret $\theta_{H,H}$,
consider four agents $x,x^{\prime},y$, and $y^{\prime}$ such that
(a) $\left(x_{E}-x_{E}^{\prime}\right)=\left(y_{E}-y_{E}^{\prime}\right)=0$,
and (b) $\left(x_{H}-x_{H}^{\prime}\right)\left(y_{H}-y_{H}^{\prime}\right)>0$.
For these agents, one is $e^{\left\{ .7625\left(x_{H}-x_{H}^{\prime}\right)\left(y_{H}-y_{H}^{\prime}\right)\right\} }$
times more likely to observe concordance than discordance between
spouses' health status (see Equation (\ref{eq:1})). In other words,
everything held constant, one is at least $2.1436=e^{.7625}$ times
more likely to observe concordance than discordance between spouses'
health status. Furthermore, the likelihood of concordance rises with
$\left(x_{H}-x_{H}^{\prime}\right)\left(y_{H}-y_{H}^{\prime}\right)$.
The educational attraction parameter can also be interpreted in the
same way: everything held constant, one is at least $1.7458=e^{.5572}$
times more likely to observe concordance relative to discordance between
spouses' education levels.

Contrary to weak positive association between agents' own health status
and their spouses' education levels, the estimation results suggest
a disaffection between healthier individuals and more educated individuals:
everything held constant, one is slightly more likely to observe fewer
educated individuals marrying healthier individuals. Altogether, the
attraction analysis suggests that the weak positive association between
agents' own health status and their spouses' education levels is a
product of three factors: (a) an attraction between better-educated
individuals, (b) an attraction between healthier individuals, and
(c) a weak positive association between agents' health status and
their own education levels. This channel provides a structural support
for strong positive association between one's health and one's spouse's
health, and an empirical justification for the aforementioned premium
gap between family and individual health insurance plans. 

Finally, Tables \ref{tab:8} and \ref{tab:9} show that these inferences
are highly accurate. The structural model described in Section \ref{subsec:3.3}
is a close approximation of the limited household formation process.
As it is illustrated in Table \ref{tab:8}, the predicted association
levels are not very different from the empirical association levels.
Furthermore, Table \ref{tab:9} demonstrates that the statistical
distance between the empirical and the estimated household distributions
is small. Shannon entropy measure indicates that a code to generate
the predicted household distribution has 7.837 average description
length\footnote{https://www.princeton.edu/\textasciitilde cuff/ele201/kulkarni\_text/information.pdf}.
If we use a code to generate the empirical household distribution,
it will have $8.2322=7.837+.3952$ average description length. Therefore,
the efficiency loss associated with using a structural model is only
$100\times\dfrac{.3952}{8.2322}=4.8$ percent. 

\noindent 
\begin{table}[H]
\caption{\label{tab:8}Empirical and Predicted Assocation Levels}

\noindent \centering{}%
\begin{tabular}{|l|lcc|ccc|ccc|ccc|}
\hline 
\multirow{2}{*}{\textbf{Statistics}} & \multirow{2}{*}{} & \multirow{2}{*}{$\Gamma_{H,H}^{W,M}$} & \multirow{2}{*}{} & \multirow{2}{*}{} & \multirow{2}{*}{$\Gamma_{H,E}^{W,M}$} & \multirow{2}{*}{} & \multirow{2}{*}{} & \multirow{2}{*}{$\Gamma_{E,H}^{W,M}$} & \multirow{2}{*}{} &  & \multirow{2}{*}{$\Gamma_{E,E}^{W,M}$} & \tabularnewline
 &  &  &  &  &  &  &  &  &  &  &  & \tabularnewline
\hline 
\multirow{2}{*}{\textbf{Empirical}} & \multirow{2}{*}{} & \multirow{2}{*}{.7439} & \multirow{2}{*}{} & \multirow{2}{*}{} & \multirow{2}{*}{.2546} & \multirow{2}{*}{} & \multirow{2}{*}{} & \multirow{2}{*}{.2638} & \multirow{2}{*}{} &  & \multirow{2}{*}{.6468} & \tabularnewline
 &  &  &  &  &  &  &  &  &  &  &  & \tabularnewline
\hline 
\multirow{2}{*}{\textbf{Predicted}} & \multirow{2}{*}{} & \multirow{2}{*}{.6545} & \multirow{2}{*}{} & \multirow{2}{*}{} & \multirow{2}{*}{.2017} & \multirow{2}{*}{} & \multirow{2}{*}{} & \multirow{2}{*}{.2218} & \multirow{2}{*}{} &  & \multirow{2}{*}{.6041} & \tabularnewline
 &  &  &  &  &  &  &  &  &  &  &  & \tabularnewline
\hline 
\multicolumn{1}{l}{} &  &  & \multicolumn{1}{c}{} &  &  & \multicolumn{1}{c}{} &  &  & \multicolumn{1}{c}{} &  &  & \multicolumn{1}{c}{}\tabularnewline
\end{tabular}
\end{table}

\noindent 
\begin{table}[H]
\caption{\label{tab:9}Difference Between Realized and Estimated Distributions}

\noindent \centering{}%
\begin{tabular}{|l|c|l}
\cline{1-2} \cline{2-2} 
\multirow{2}{*}{\textbf{Kullback-Leibler Divergence $\left(\sum_{i}\hat{m}_{i}\log_{2}\left\{ \hat{m}_{i}/m_{i}\right\} \right)$}} & \multirow{2}{*}{.3952} & \tabularnewline
 &  & \tabularnewline
\cline{1-2} \cline{2-2} 
\multirow{2}{*}{\textbf{Shannon Entropy of Predicted Distribution $\left(-\sum_{i}\hat{m}_{i}\log_{2}\left\{ \hat{m}_{i}\right\} \right)$}} & \multirow{2}{*}{7.837} & \tabularnewline
 &  & \tabularnewline
\cline{1-2} \cline{2-2} 
\multirow{2}{*}{\textbf{Efficiency Loss (\%)}} & \multirow{2}{*}{4.8} & \tabularnewline
 &  & \tabularnewline
\cline{1-2} \cline{2-2} 
\end{tabular}
\end{table}

\section*{Appendix: Proofs}
\begin{proof}
\emph{of Lemma }\ref{lem:1}

\noindent Without loss of generality, assume that $X\coloneqq\textrm{supp}\left(F\right)$
and $Y\coloneqq\textrm{supp}\left(G\right)$ have countably many elements.
For given matching distribution $M$, define $\textrm{vec}M$ as the
density vector defined by $M$. More specifically, the rows of $\textrm{vec}M$
represents the densities of $\left(x,y\right)\in X\times Y$ implied
by $M$. For P,N concordance improving transfer with unit mass $\tau_{P,N}\left(x,y,x^{\prime},y^{\prime};1\right)$,
define P,N transfer vector $t_{P,N}\left(x,y,x^{\prime},y^{\prime}\right)$
such that (a) the row associated with $\left(x,y\right)$ and $\left(x^{\prime},y^{\prime}\right)$
is $1$; (b) the row associated with $\left(x^{\prime},y\right)$
and $\left(x,y^{\prime}\right)$ is $-1$; and (c) the rest of the
rows are $0$. Let $\mathcal{T}\left(P,N\right)$ denote the set of
P,N transfer vectors. 

\noindent It suffices to show that the following statement holds:

\begin{equation}
\textrm{vec}M-\textrm{vec}M^{\prime}=\sum_{t\in\mathcal{T}\left(P,N\right)}\alpha_{t}t\ ,\forall\alpha_{t}\geq0\Leftrightarrow M\succeq_{P,N}M^{\prime}.\label{eq:8}
\end{equation}

\noindent Define $Q\cdot t_{P,N}\left(x,y,x^{\prime},y^{\prime}\right)$
as follows: 
\noindent \begin{center}
$Q\cdot t_{P,N}\left(x,y,x^{\prime},y^{\prime}\right)=Q\left(x,y\right)+Q\left(x^{\prime},y^{\prime}\right)-Q\left(x^{\prime},y\right)-Q\left(x,y^{\prime}\right).$
\par\end{center}

\noindent \ 

\noindent Note that the aggregate output improves with P,N transfers: 

\noindent $Q\cdot t_{P,N}\left(x,y,x^{\prime},y^{\prime}\right)$

\noindent $=Q\left(x_{1},...,x_{K},y_{1},...,y_{L}\right)+Q\left(x_{1}^{\prime},...,x_{K}^{\prime},y_{1}^{\prime},...,y_{L}^{\prime}\right)-Q\left(x_{1},...,x_{K},y_{1}^{\prime},...,y_{L}^{\prime}\right)-Q\left(x_{1}^{\prime},...,x_{K}^{\prime},y_{1},...,y_{L}\right)$

\noindent $=\underset{i=1}{\overset{K}{\sum}}\underset{j=1}{\overset{L}{\sum}}\left\{ \begin{array}{c}
Q\left(x_{1},...,x_{i-1},x_{i},x_{i+1}^{\prime},...,x_{K}^{\prime},y_{1},...,y_{j-1},y_{j},y_{j+1}^{\prime},...,y_{L}^{\prime}\right)\\
+Q\left(x_{1},...,x_{i-1},x_{i}^{\prime},x_{i+1}^{\prime},...,x_{K}^{\prime},y_{1},...,y_{j-1},y_{j}^{\prime},y_{j+1}^{\prime},...,y_{L}^{\prime}\right)\\
-Q\left(x_{1},...,x_{i-1},x_{i},x_{i+1}^{\prime},...,x_{K}^{\prime},y_{1},...,y_{j-1},y_{j}^{\prime},y_{j+1}^{\prime},...,y_{L}^{\prime}\right)\\
-Q\left(x_{1},...,x_{i-1},x_{i}^{\prime},x_{i+1}^{\prime},...,x_{K}^{\prime},y_{1},...,y_{j-1},y_{j},y_{j+1}^{\prime},...,y_{L}^{\prime}\right)
\end{array}\right\} $

\noindent $\geq0$.

\noindent Consequently, it holds that 

\begin{equation}
Q\in\mathbb{C}\left(P,N\right)\Leftrightarrow Q\cdot t\geq0\ \forall t\in\mathcal{T}\left(P,N\right).\label{eq:9}
\end{equation}

\noindent Equation (\ref{eq:8}) holds if and only if $\textrm{vec}M-\textrm{vec}M^{\prime}$
belongs to the convex cone $\mathcal{C}\left(P,N\right)$ generated
by $\mathcal{T}\left(P,N\right)$:
\noindent \begin{center}
$\mathcal{C}\left(P,N\right)=\left\{ \sum_{t\in\mathcal{T}\left(P,N\right)}\alpha_{t}t:\alpha_{t}\geq0,\forall t\in\mathcal{T}\left(P,N\right)\right\} $.
\par\end{center}

\noindent From Equation (\ref{eq:9}), it follows that $\mathbb{C}\left(P,N\right)$
is the dual cone of $\mathcal{C}\left(P,N\right)$. Since $\mathcal{C}\left(P,N\right)$
is convex and closed, $\mathcal{C}\left(P,N\right)$ is the dual cone
of $\mathbb{C}\left(P,N\right)$ due to Luenberger (1969, p:215),
i.e. for $\left\{ \beta_{t}\right\} >0$, it holds that
\noindent \begin{center}
$\sum_{t\in\mathcal{T}\left(P,N\right)}\beta_{t}t\in\mathcal{C}\left(P,N\right)\Leftrightarrow\sum_{t\in\mathcal{T}\left(P,N\right)}\beta_{t}Q\cdot t\geq0\ \forall Q\in\mathbb{C}\left(P,N\right).$
\par\end{center}

\noindent Therefore, $M\succeq_{P,N}M^{\prime}$ if and only if $\textrm{vec}M-\textrm{vec}M^{\prime}\in\mathcal{C}\left(P,N\right)$.
\end{proof}
\begin{claim*}
\ 

\noindent \textbf{(a)} The set of P,N undominated distributions is
a subset of weak P,N assortative matching distributions.

\noindent \textbf{(b)} The set of P,N dominant distributions and global
P,N assortative matching distributions coincide.
\end{claim*}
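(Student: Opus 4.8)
The plan is to prove the contrapositive: if $M\in\mathcal{M}(F,G)$ fails weak P,N sorting, then $M$ is not P,N undominated. By Definition \ref{def:12}, the failure means there is an N,P concordant pair of matched couples $\left(x,y\right),\left(x^{\prime},y^{\prime}\right)$ carrying positive mass under $M$. I would apply a single P,N concordance improving transfer (Definition \ref{def:11}) of mass $\alpha>0$, moving mass off $\left(x,y\right),\left(x^{\prime},y^{\prime}\right)$ onto the swapped couples $\left(x,y^{\prime}\right),\left(x^{\prime},y\right)$, to obtain $\hat{M}$; one checks the swap of an N,P weak concordant pair is P,N weak concordant, so the transfer is legitimate, and since it only permutes worker coordinates within the two couples the marginals are preserved and $\hat{M}\in\mathcal{M}(F,G)$. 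By Lemma \ref{lem:1}, $\hat{M}\succeq_{P,N}M$. The remaining point is strictness, $M\not\succeq_{P,N}\hat{M}$. Because the pair is N,P \emph{concordant} rather than merely weak concordant, some coordinate is strict: either $\left(x_{i}-x_{i}^{\prime}\right)\left(y_{j}-y_{j}^{\prime}\right)<0$ for some $\left(i,j\right)\in P$, or $\left(x_{p}-x_{p}^{\prime}\right)\left(y_{q}-y_{q}^{\prime}\right)>0$ for some $\left(p,q\right)\in N$. I would take $Q\in\mathbb{C}\left(P,N\right)$ depending only on that single coordinate pair, strictly supermodular in it if it lies in $P$ and strictly submodular if it lies in $N$, and modular (constant) in every other pair. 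Then the telescoping identity from the proof of Lemma \ref{lem:1} collapses to this one nonzero term, giving $\int Q\,d\hat{M}-\int Q\,dM=\alpha\left[Q\left(x,y^{\prime}\right)+Q\left(x^{\prime},y\right)-Q\left(x,y\right)-Q\left(x^{\prime},y^{\prime}\right)\right]>0$, so $M\not\succeq_{P,N}\hat{M}$ and $\hat{M}$ strictly dominates $M$.

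\textbf{Part (b), dominant $\Rightarrow$ global.} Again I would argue by contraposition: if $M$ is not globally P,N sorted, I exhibit $M^{\prime}\in\mathcal{M}(F,G)$ and $Q\in\mathbb{C}\left(P,N\right)$ with $\int Q\,dM^{\prime}>\int Q\,dM$, so $M$ fails to dominate $M^{\prime}$. By Definition \ref{def:4}, failure of global sorting produces a pair $\left(x,y\right),\left(x^{\prime},y^{\prime}\right)\in\mathrm{supp}(M)$ violating P,N weak concordance, i.e. some $\left(i,j\right)\in P$ with $\left(x_{i}-x_{i}^{\prime}\right)\left(y_{j}-y_{j}^{\prime}\right)<0$ or some $\left(p,q\right)\in N$ with $\left(x_{p}-x_{p}^{\prime}\right)\left(y_{q}-y_{q}^{\prime}\right)>0$. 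Swapping this pair yields $M^{\prime}$ with the same marginals, and the same coordinate-isolating strictly modular $Q$ as in part (a) gives the identical one-coordinate computation $\int Q\,dM^{\prime}-\int Q\,dM>0$, contradicting P,N dominance. Hence every P,N dominant distribution satisfies global P,N sorting.

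\textbf{Part (b), global $\Rightarrow$ dominant, and the main obstacle.} Let $M$ be globally P,N sorted; I must show $M\succeq_{P,N}M^{\prime}$ for every $M^{\prime}\in\mathcal{M}(F,G)$. By Lemma \ref{lem:1} this is equivalent to reaching $M$ from an arbitrary $M^{\prime}$ through a sequence of P,N concordance improving transfers, i.e. to ``bubble-sorting'' $M^{\prime}$ into the globally sorted arrangement. The mechanism mirrors part (a): whenever the current distribution carries an N,P concordant pair, one transfer removes it while weakly raising $\int Q$ for all $Q\in\mathbb{C}\left(P,N\right)$ and strictly raising the strictly P,N modular potential $R\left(x,y\right)=\sum_{\left(i,j\right)\in P}x_{i}y_{j}-\sum_{\left(p,q\right)\in N}x_{p}y_{q}\in\mathbb{C}_{+}\left(P,N\right)$, so the process moves monotonically toward $M$. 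The delicate step, which I expect to be the main obstacle, is guaranteeing that the sorting terminates \emph{at $M$} rather than stalling. In the unidimensional case the types are totally ordered, so any non-comonotone coupling has a genuine inversion and there are no ``incomparable'' pairs (concordant in one $P$-coordinate yet discordant in another) to obstruct sorting; this is why Becker's result is clean. In the multidimensional setting a weakly sorted $M^{\prime}$ can a priori contain such incomparable pairs, which no single transfer corrects. The crux is therefore a structural lemma: \emph{when global P,N sorting is feasible, the resulting alignment of the marginals precludes incomparable configurations, so the globally sorted distribution is the unique weakly P,N sorted element of $\mathcal{M}(F,G)$}; Example \ref{ex:1} illustrates the contrapositive, that incomparability is exactly what renders global sorting infeasible. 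Granting this lemma, any $M^{\prime}\neq M$ must harbor an N,P concordant pair, the monotone sort never stalls before $M$, reachability follows, and Lemma \ref{lem:1} delivers $M\succeq_{P,N}M^{\prime}$. Together with the first inclusion, the set of P,N dominant distributions and the set of global P,N assortative matching distributions coincide.
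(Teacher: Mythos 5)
Your part (a) and the forward direction of part (b) are correct. For (a) you argue by contraposition: an N,P concordant pair with positive mass admits a single P,N concordance improving transfer, which weakly dominates by Lemma \ref{lem:1} and strictly dominates via a coordinate-isolating $Q\in\mathbb{C}\left(P,N\right)$ (e.g. $x_{i}y_{j}$ for the strict coordinate). This is a genuinely more elementary route than the paper's, which runs part (a) through a Kakutani fixed-point correspondence; the paper's heavier machinery is not needed for the inclusion itself (it is there to deliver non-emptiness of the undominated set, used later in Proposition \ref{prop:1}.1.c). Your forward direction of (b) — a violating pair in the support plus the same coordinate-isolating $Q$ contradicts dominance — is essentially the paper's argument with $Q\left(x,y\right)=x_{1}y_{1}$.

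The backward direction of (b) is where your proposal has a genuine gap, and you say so yourself: everything rests on the structural lemma you introduce and then explicitly grant. That lemma is not a side condition; it \emph{is} the theorem. Pairwise swap inequalities do not self-strengthen: for abstract values $a_{kl}$ satisfying $a_{kk}+a_{ll}\geq a_{kl}+a_{lk}$ for all $k,l$, the diagonal need not beat a $3$-cycle assignment, so the fact that every matched pair under $M$ is P,N weak concordant does not, by itself, put $\textrm{vec}M-\textrm{vec}M^{\prime}$ in the transfer cone. One must use the special structure of $\mathbb{C}\left(P,N\right)$ together with global sorting — modularity in the pairs outside $P\cup N$ forces additive separability of $Q$ across connected components of the complementarity graph, and within a component global sorting forces a layered chain structure on which a rearrangement (bubble-sort with a terminating inversion count) argument closes the cycle inequalities. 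None of this appears in your proposal. Moreover, your lemma as stated is false: with ties, or with attributes outside $P\cup N$, there are typically \emph{many} globally P,N sorted (hence many weakly sorted) elements of $\mathcal{M}\left(F,G\right)$, so uniqueness cannot hold; what you actually need is that, when global sorting is feasible, every weakly P,N sorted distribution is globally sorted and all globally sorted distributions are payoff-equivalent, plus a termination argument for your transfer sequence.

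In fairness, you have located exactly the spot where the paper's own proof is thinnest: the paper asserts in one line that global sorting of $M$ implies $\textrm{vec}M=\textrm{vec}M^{\prime}+\sum_{t}\alpha_{t}t$ with $\alpha_{t}\geq0$ for every $M^{\prime}$, with no more justification than you give. But identifying the crux and granting it is not proving it, so as a proof of the Claim your proposal is incomplete precisely at the step that carries the mathematical content.
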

\begin{proof}
\noindent \textbf{Kakutani Fixed Point Theorem:} Let $A\subseteq\mathbb{R}^{N}$
be a non-empty, compact and convex set; $U:A\mapsto P\left(A\right)$
be a non-empty-valued, convex-valued correspondence with a closed
graph. Correspondence $U:A\mapsto P\left(A\right)$ has a fixed point.

\noindent For a given $M\in\mathcal{M}\left(F,G\right)$ and $P,N$,
define correspondence

\noindent {\small{}
\begin{equation}
U\left(\textrm{vec}M;P,N\right)=\begin{cases}
\textrm{vec}M+\underset{t\in\mathcal{T}\left(P,N\right)/\mathcal{T}\left(N,P\right)}{\sum}\alpha_{t}t+\underset{t\in\mathcal{T}\left(P,N\right)\cap\mathcal{T}\left(N,P\right)}{\sum}\beta_{t}t & ,\ \forall\alpha_{t},\beta_{t}\geq0\ \textrm{and\ for\ some}\ \alpha_{t}>0\\
\textrm{vec}M & oth.
\end{cases}.\label{eq:10}
\end{equation}
}{\small\par}

\noindent It is clear that $U\left(\cdot;P,N\right)$ is non-empty-valued,
convex-valued, and has a closed graph. Therefore, $\exists M^{*}\in\mathcal{M}\left(F,G\right)$
such that $\textrm{vec}M{}^{*}\in U\left(\textrm{vec}M^{*},P,N\right)$
by Kakutani fixed point theorem.

\noindent \textbf{(a)} Let $\textrm{vec}M^{*}$ be a fixed point of
the correspondence described in Equation \ref{eq:10}.Due to Lemma
\ref{lem:1}, every P,N undominated matching distribution corresponds
to a fixed point of the correspondence above. This proves that the
set of P,N undominated distributions is non-empty. By definition,
any distribution violating weak P,N sorting cannot be a fixed point
of the correspondence.

\noindent \textbf{(b)} $\left(\Rightarrow\right)$ Suppose not. Let
$M\in\mathcal{M}\left(F,G\right)$ be P,N dominant and not globally
P,N assortative. Without loss of generality, assume that $\left(1,1\right)\in P$
and $M$ violates $\left(1,1\right)$ positive sorting by assigning
positive mass to $\left(x,y^{\prime}\right),\left(x^{\prime},y\right)$
such that $x_{1}>x_{1}^{\prime}$ and $y_{1}>y_{1}^{\prime}$. For
matching distribution $M^{\prime}\in\mathcal{M}\left(F,G\right)$
satisfying positive sorting between the first attributes, we have
$\int QdM^{\prime}>\int QdM$ when $Q\left(x,y\right)=x_{1}y_{1}$.
Consequently, $M$ is not a P,N dominant distribution.

\noindent $\left(\Leftarrow\right)$ Let $M\in\mathcal{M}\left(F,G\right)$
satisfy global P,N assortativeness. For any $\left(x,y\right),\left(x^{\prime},y^{\prime}\right)\in supp\left(M\right)$,
we have $\left(\clubsuit\right)$ $\left(x_{i}-x_{i}^{\prime}\right)\left(y_{j}-y_{j}^{\prime}\right)\geq0$
for all $\left(i,j\right)\in P$ and $\left(\spadesuit\right)$ $\left(x_{p}-x_{p}^{\prime}\right)\left(y_{q}-y_{q}^{\prime}\right)\geq0$
for all $\left(p,q\right)\in N$. Since every pair of matched couples
under $M$ is P,N weak concordant, any matching distribution $M^{\prime}\in\mathcal{M}\left(F,G\right)$
satisfies the following condition: 

\[
\textrm{vec}M=\textrm{vec}M^{\prime}+\sum_{t\in\mathcal{T}\left(P,N\right)}\alpha_{t}t\textrm{,\ for}\alpha_{t}\geq0.
\]

\noindent Due to Lemma \ref{lem:1}, it immediately follows that any
matching distribution which satisfies global P,N sorting is P,N dominant. 
\end{proof}
\begin{claim*}
If the set of P,N dominant distributions is non-empty, then it coincides
with the set of P,N undominated distributions. 
\end{claim*}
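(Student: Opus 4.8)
The plan is to argue entirely at the level of the P,N modular order, treating $\succeq_{P,N}$ as a preorder and exploiting how the definitions of dominance, undominatedness, and strict dominance interlock. First I would record the one structural fact I need from Definition \ref{def:8}: since $M \succeq_{P,N} M'$ means $\int Q\,dM \geq \int Q\,dM'$ for every $Q \in \mathbb{C}\left(P,N\right)$, the relation $\succeq_{P,N}$ is transitive, as chaining the defining inequalities immediately shows. Transitivity is the only property of the order that the argument uses.

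Next I would prove the inclusion that holds unconditionally, namely that every P,N dominant distribution is P,N undominated. Take $M$ dominant. By definition $M \succeq_{P,N} M'$ for every $M' \in \mathcal{M}\left(F,G\right)$. If some $M'$ strictly dominated $M$, then part (b) of the strict-dominance definition would require $M \not\succeq_{P,N} M'$, contradicting dominance. Hence no $M'$ strictly dominates $M$, so $M$ is undominated. This direction needs no hypothesis on the dominant set.

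The substantive direction is to show that, once the dominant set is nonempty, every undominated distribution is in fact dominant. Fix any $M^{*}$ in the (nonempty) dominant set and let $M$ be undominated. Because $M^{*}$ dominates everything, $M^{*} \succeq_{P,N} M$. The key step is to upgrade this to the reverse comparison $M \succeq_{P,N} M^{*}$: if it failed, then $M^{*} \succeq_{P,N} M$ together with $M \not\succeq_{P,N} M^{*}$ would be precisely the two conditions asserting that $M^{*}$ strictly dominates $M$, contradicting the undominatedness of $M$. With $M \succeq_{P,N} M^{*}$ in hand, transitivity closes the argument: for an arbitrary $M' \in \mathcal{M}\left(F,G\right)$ we have $M \succeq_{P,N} M^{*} \succeq_{P,N} M'$, hence $M \succeq_{P,N} M'$, so $M$ is dominant.

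Combining the two inclusions yields the asserted coincidence of the P,N dominant and P,N undominated sets whenever the former is nonempty. The main obstacle is definitional rather than computational: the argument turns on invoking the exact two-part definition of strict dominance (dominates \emph{and} is not dominated by) to convert the one-sided comparison $M^{*} \succeq_{P,N} M$ into the two-sided agreement $M \succeq_{P,N} M^{*}$. Everything else reduces to transitivity of the order, so I expect the write-up to be short once that conversion is stated carefully.
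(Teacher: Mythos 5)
Your proof is correct and follows essentially the same route as the paper's: both arguments pit the dominant distribution $M^{*}$ against an undominated $M$ and use the two-part definition of strict dominance to force $M\succeq_{P,N}M^{*}$, then chain inequalities to conclude that $M$ is itself dominant. The only difference is organizational: the paper runs the argument as a contradiction with non-dominance and invokes $\mathbb{C}_{+}\left(P,N\right)$ along the way (which is not actually needed, since strict dominance only requires a strict inequality for some $Q\in\mathbb{C}\left(P,N\right)$), whereas you isolate transitivity explicitly, which is a slightly cleaner write-up of the same idea.
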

\begin{proof}
\emph{of Claim.}$\left(\Rightarrow\right)$ Trivial. 

\noindent $\left(\Leftarrow\right)$ Suppose not. Let $M\in\mathcal{M}\left(F,G\right)$
be P,N undominated but not P,N dominant. Consider a P,N dominant distribution:
$M^{\prime}\in\mathcal{M}\left(F,G\right)$. Since $M$ is not P,N
dominant and $M^{\prime}$ is, it holds that (a) $\int QdM\leq\int QdM^{\prime}$
for all $Q\in\mathbb{C}\left(P,N\right)$; and (b) there exists output
function $Q\in\mathbb{C}_{+}\left(P,N\right)$ such that $\int QdM<\int QdM^{\prime}$
due to Equation (\ref{eq:9}). Consequently, $M^{\prime}$ strictly
dominates $M$ in P,N modular order. Therefore, $M$ is not a P,N
undominated distribution.
\end{proof}
\begin{proof}
\emph{of Proposition \ref{prop:1}.}

\noindent \textbf{1.a)} If a matching distribution does not satisfy
within-group P,N sorting, then there exist a P,N concordance improving
transfer which is not N,P concordance improving. In other words, the
matching distribution does not satisfy weak P,N sorting.

\noindent \textbf{1.b)}\emph{ }Suppose not. Let $M$ be a solution
to the planner's problem for $Q\in\mathbb{C}_{+}\left(P,N\right)$
which is not a weakly P,N assortative matching distribution. Since
a swap between pair of couples which violates weak P,N sorting strictly
improves the aggregate output, $M$ cannot be a solution.

\noindent \textbf{1.c)}\emph{ }The existence of a weak P,N assortative
distribution immediately follows from the existence of the fixed point
of the correspondence given in Equation \ref{eq:10}.

\noindent Case 1: $P\cup N=\emptyset$.

\noindent Trivial. Every matching distribution $M\in\mathcal{M}\left(F,G\right)$
is associated with the same level of aggregate output. 

\noindent Case 2: $P\cup N\neq\emptyset$.

\noindent Let $M\in\mathcal{M}\left(F,G\right)$ be a solution to
the planner's problem for $Q\in\mathbb{C}\left(P,N\right)$. Suppose
$M$ is not weak P,N assortative distribution. Choose $M^{\prime}\in\mathcal{M}\left(F,G\right)$
such that (i) $\textrm{vec}M^{\prime}\in U\left(\textrm{vec}M;P,N\right)$,
and (ii) $\textrm{vec}M^{\prime}\in U\left(\textrm{vec}M^{\prime};P,N\right)$.
Since $M^{\prime}$ is obtained from $M$ via a sequence of P,N concordance
improving transfers, it cannot worsen the level of aggregate output
for any $Q\in\mathbb{C}\left(P,N\right)$. Consequently, $M^{\prime}$
is at least as good as $M$ for any $Q\in\mathbb{C}\left(P,N\right)$.
Thus, $M^{\prime}$ also solves the planner's problem. By construction,
$M^{\prime}$ is a P,N undominated distribution. Thus $M^{\prime}$
is satisfies weak P,N sorting.

\noindent \textbf{2.a) }Let $M$ be a solution to the planner's problem
for $Q\in\mathbb{C}_{+}\left(P,N\right)$ that does not satisfy global
P,N sorting. Since the set of P,N dominant distributions coincides
with the set of globally P,N assortative matching distributions, and
$M$ is not a globally P,N assortative matching distribution. Thus,
there exists an N,P concordant pair of matched couples under $M$.
In other words, every globally P,N assortative matching distribution
strictly dominates $M$ due to Lemma \ref{lem:1}. Therefore, $M$
cannot solve the planner's problem.

\noindent \textbf{2.b)} The set of P,N dominant distributions coincides
with the set of globally P,N assortative matching distributions. Thus,
every globally P,N assortative matching distribution solves the planner's
problem for all $Q\in\mathbb{C}\left(P,N\right)$. 
\end{proof}
\begin{proof}
\emph{of Proposition \ref{prop:2}. }Immediate from Lemma \ref{lem:2}.
\end{proof}
\begin{proof}
\emph{of Corollary \ref{cor:1}. }Immediate from Proposition \ref{prop:2}.
\end{proof}
\begin{proof}
\emph{of Proposition \ref{prop:3}. }Immediate from Lemma \ref{lem:2}.
\end{proof}
\begin{proof}
\emph{of Corollary \ref{cor:2}. }Immediate from Definition \ref{def:14}.
\end{proof}

\section*{Appendix: Tables}

\noindent 
\begin{table}[H]
{\small{}\caption{\label{tab:10}Distributions of Women's Health Status and Education
Levels by Survey Year}
}{\small\par}
\noindent \centering{}{\small{}}\subfloat{{\small{}}%
}{\small\par}
\end{table}

\pagebreak{}

\section*{References\label{sec:References}}

\noindent {[}1{]} Becker, G. S. A theory of marriage: Part i. \emph{Journal
of Political Economy} \textbf{81(4)}, (1973).

\noindent {[}2{]} Belasen, A. R. and Belasen, A. T. Dual effects of
improving doctor-patient communication: patient satisfaction and hospital
ratings. (2018).

\noindent {[}3{]} Belot, M. and Francesconi, M. Dating preferences
and meeting opportunities in mate choice decisions.\emph{ The Journal
of Human Resources }\textbf{42(8)}, 474-508 (2012).

\noindent {[}4{]} Bojilov, R. and Galichon, A. Matching in closed-form:
equilibrium, identification, and comparative statics. \emph{Economic
Theory }\textbf{61(4)}, 587\textendash 609 (2016).

\noindent {[}5{]} Borrell, C., Regidor, E., Arias, L., Navarro, P.,
Puigpinos R., Dominguez, V., and Plansencia, A. Inequalities in mortality
according to educational level in two large Southern European cities.
\emph{International Journal of Epidemiology }\textbf{28(1)}, 58\textendash 63
(1999).

\noindent {[}6{]} Brown, D. C., Hummer, R. A., and Hayward, M. D.
The Importance of Spousal Education for the Self-Rated Health of Married
Adults in the United States \emph{Popul. Res. Policy Rev.} \textbf{33(1)},
127\textendash 151 (2014).

\noindent {[}7{]} Chade, H., Eeckhout J., and Smith, L. Sorting through
search and matching models.\emph{ Journal of Economic Literature}
\textbf{55(2)}, 1-52 (2017).

\noindent {[}8{]} Chiappori, P. A., Oreffice, S., and Quintana-Domeque,
C. Bidimensional matching with heterogeneous preferences: Smoking
in the marriage market. \emph{Journal of European Economic Association}
\textbf{16(1)}, 161\textendash 198 (2017).

\noindent {[}9{]} Chiappori, P. A., Oreffice, S., and Quintana-Domeque,
C. Fatter attraction: anthropometric and socioeconomic matching on
the marriage market.\emph{ Journal of Political Economy} \textbf{120(4)},
659-695 (2012).

\noindent {[}10{]} Chiappori, P. A., McCann, R., and Pass, B. Multidimensional
matching. \emph{Preprint}. (2016).

\noindent {[}11{]} Choo, E., and Siow, A. Who marries whom and why.
\emph{Journal of Political Economy }\textbf{114(1)}, 175-201 (2006).

\noindent {[}12{]} Colangelo, A., Scarsini, M., and Shaked, M. Some
notions of multivariate positive dependence. \emph{Insurance: Mathematics
and Economics} \textbf{37(1)}, 13-26 (2005).

\noindent {[}13{]} Deming, D. J. The growing importance of social
skills in the labor market\emph{. The Quarterly Journal of Economics
}\textbf{132(4)}, 1593-1640 (2017).

\noindent {[}14{]} Domingue, B. W., Fletcher, J., Conley, D., and
Boardman, J. D. Genetic and educational assortative mating among US
adults. \emph{PNAS},\emph{ }\textbf{111(22)}, 7996-8000 (2014).

\noindent {[}15{]} Dupuy, A. and Galichon, A. Personality traits and
the marriage market. \emph{Journal of Political Econom}y \textbf{122(6)},
1271-1319 (2014).

\noindent {[}16{]} Deming, W. E. and Stephan, F. F. On a least squares
adjustment of a sampled frequency table when the expected marginal
totals are known. \emph{Ann. Math. Statist.} \textbf{11(4)}, 427-444
(1940).

\noindent {[}17{]} Elo, I. and Preston S. H. Educational differentials
in mortality: United States, 1979\textendash 85. \emph{Soc. Sci. Med.
}\textbf{42(1)}, 47\textendash 57 (1996).

\noindent {[}18{]} Fletcher, J. M. and Padron, N. A. Heterogeneity
in spousal matching models. (2015).

\noindent {[}19{]} Gabaix X. and Landier, A. Why has CEO pay increased
so much?. \emph{Quarterly Journal of Economics }\textbf{123(1)}, 49\textendash 100
(2008).

\noindent {[}20{]} Gale D. and Shapley, L. S. College admissions and
the stability of marriage. \emph{The American Mathematical Monthly
}\textbf{69(1)}, 9-15 (1962).

\noindent {[}21{]} Galichon, A. and Salanié, B. Matching with trade-offs:
revealed preferences over competing characteristics. (2010).

\noindent {[}22{]} Galichon, A. and Salanié, B. Cupid\textquoteright s
invisible hand: social surplus and identification in matching models\emph{.
Columbia University Academic Commons}. (2015).

\noindent {[}23{]} Gemici, A. and Laufer, S. Marriage and cohabitation.
(2010).

\noindent {[}24{]} Girsberger, E. M., Rinawi, M., and Krapf, M. Wages
and employment: The role of occupational skills. \emph{IZA Discussion
Paper Series }\textbf{11586} (2018).

\noindent {[}25{]} Greenwood, J., Guner, N., Kocharkov, G., and Santos,
C. Marry your like: Assortative mating and income inequality. \emph{American
Economic Review}\textbf{ 104(5)}, 348-53 (2014).

\noindent {[}26{]} Gretsky, N. E., Ostroy, J. M., and Zame W. R. The
nonatomic assignment model. \emph{Econ. Theory}, \textbf{2(1)}, 103\textendash 27.
(1992).

\noindent {[}27{]} Guvenen, F., Kuruscu, B., Tanaka, S., and Wiczer,
D. Multidimensional skill mismatch (2018).

\noindent {[}28{]} Günter, J., Hitsch, G. J., Hortaçsu, A., and Ariely,
D. Matching and sorting in online dating. \emph{American Economic
Review} \textbf{100(1)}, 130-163 (2010).

\noindent {[}29{]} Hugtenburg, J. G., Timmers, L., Elders, P. J.,
Vervloet, M., and Dijk, L. Definitions, variants, and causes of nonadherence
with medication: a challenge for tailored interventions. \emph{Patient
Prefer Adherence} \textbf{7}, 675\textendash 682 (2013).

\noindent {[}30{]} Jaffe, D. H., Eisenbach, Z., Neumark, Y. D., Manor,
O. Does one's own and one's spouse's education affect overall and
cause-specific mortality in the elderly?. \emph{International Journal
of Epidemiology} \textbf{34(6)}, 1409\textendash 1416 (2005).

\noindent {[}31{]} Jaffe, D. H., Eisenbach, Z., Neumark, Y. D., Manor,
O. Effects of husbands\textquoteright{} and wives\textquoteright{}
education on each other\textquoteright s mortality\emph{. Social Science
\& Medicine }\textbf{62(8)}, 2014\textendash 2023 (2006).

\noindent {[}32{]} Jagosh, J., Boudreau, J. D., Steinert, Y., MacDonald,
M. E., and Ingram, L. The importance of physician listening from the
patients\textquoteright{} perspective: Enhancing diagnosis, healing,
and the doctor\textendash patient relationship. \emph{Patient Education
and Counseling} \textbf{85(3)}, 369\textendash 374 (2011).

\noindent {[}33{]} Klofstad, C. A., McDermott, R., and Hatemi, P.
K. The dating preferences of liberals and conservatives. \emph{Political
Behavior} \textbf{35(3)}, 519-538 (2013).

\noindent {[}34{]} Kravdal, Ø. A broader perspective on education
and mortality: Are we influenced by other people's education?. \emph{Social
Science \& Medicine} \textbf{66(3)}, 620-636 (2008).

\noindent {[}35{]} Kremer, M. The o-ring theory of economic development.
\emph{Quarterly Journal of Economics} \textbf{108(3)}, 551\textendash 75
(1993).

\noindent {[}36{]} Krokstad, S., Johnsen, R., and Westin, S. Trends
in health inequalities by educational level in a Norwegian total population
study. \emph{Journal of Epidemiology and Community Health} \textbf{56(5)},
375\textendash 380 (2002).

\noindent {[}37{]} Kunst, A. E. and Mackenbach, J. P. The size of
mortality differences associated with educational level in nine industrialized
countries. \emph{American Journal of Public Health} \textbf{84(6)},
932\textendash 937 (1994).

\noindent {[}38{]} Lindenlaub, I. Sorting multidimensional types:
theory and application. \emph{The Review of Economic Studies} \textbf{84(2)},
718\textendash 789 (2017).

\noindent {[}39{]} Luenberger, D. G. Optimization by vector space
methods\emph{. Wiley} (1969).

\noindent {[}40{]} Muller, A. and Scarsini, M. Fear of loss, inframodularity,
and transfers. \emph{Journal of Economic Theory} \textbf{147(4)},
1490-1500 (2012).

\noindent {[}41{]} Mackenbach, J. P. et al. Socioeconomic inequalities
in mortality among women and among men: an international study. \emph{American
Journal of Public Health} \textbf{89(12)}, 1800\textendash 1806 (1999).

\noindent {[}42{]} Mangasarian, O. L. Uniqueness of solution in linear
programming\emph{. Linear Algebra and its Applications} \textbf{(25)},
151-162 (1979).

\noindent {[}43{]} Manor, O. et al. Educational differentials in mortality
from cardio-vascular disease among men and women: The Israel Longitudinal
Mortality Study\emph{. Annals of Epidemiology} \textbf{(14)}, 453\textendash 460
(2004).

\noindent {[}44{]} Manor, O. et al. Mortality differentials among
women: The Israel Longitudinal Mortality Study. \emph{Social Science
\& Medicine} \textbf{51(8)}, 1175\textendash 1188 (2000).

\noindent {[}45{]} Mazzi, M. A., Rimondini, M., Zee, E., Boerma, W.,
Zimmermann, C., and Bensing, J. Which patient and doctor behaviours
make a medical consultation more effective from a patient point of
view. Results from a European multicentre study in 31 countries. \emph{Patient
Education and Counseling} \textbf{101(10)} (2018).

\noindent {[}46{]} Meyer, M. and Strulovici, B. The supermodular stochastic
ordering (2013).

\noindent {[}47{]} Nelsen, R. B. An introduction to copulas. \emph{Springer
Series in Statistics }(2006)\emph{.}

\noindent {[}48{]} Nilsen, S. M. et al. Education-based health inequalities
in 18,000 Norwegian couples: the Nord-Trøndelag Health Study (HUNT).
\emph{BMC Public Health} \textbf{(12)}, 998 (2012).

\noindent {[}49{]} Scarsini, M. On measures of concordance. \emph{Stochastica}
\textbf{(8)}, 201\textendash 218 (1984).

\noindent {[}50{]} Shapley, L. S. and Shubik, M. The assignment game
I: the core. \emph{International Journal of Game Theory} \textbf{1(1)},
111\textendash 30 (1972).

\noindent {[}51{]} Shaked, M and Shanthikumar, J. G. Supermodular
stochastic orders and positive dependence of random vectors. \emph{Journal
of Multivariate Analysis} \textbf{61(1)}, 86-101 (1997).

\noindent {[}52{]} Shaked, M. and Shanthikumar, J. G. Stochastic order.
\emph{Springer }(2007).

\noindent {[}53{]} Siow, A. Testing Becker's theory of positive assortative
matching. \emph{Journal of Labor Economics} \textbf{33(2)}, 409-44
(2015).

\noindent {[}54{]} Skalicka, V. and Kunst, A. E. Effects of spouses'
socioeconomic characteristics on mortality among men and women in
a Norwegian longitudinal study. \emph{Soc. Sci. Med. }\textbf{66(9)},
2035-47 (2008).

\noindent {[}55{]} Stavropoulou, C. Non-adherence to medication and
doctor-patient relationship: Evidence from a European survey. \emph{Patient
Education and Counseling} \textbf{83(1)}, 7-13 (2011).

\noindent {[}56{]} Szekli, R., Disney, R. L., and Hur, S. MR/GI/1
queues with positively correlated arrival stream. \emph{Journal of
Applied Probability} \textbf{31(2)}, 497-514 (1994).

\noindent {[}57{]} Villani, C. Optimal transport, old and new. \emph{Springer}
(2008).
\end{document}